\newcolumntype{H}{>{\setbox0=\hbox\bgroup}c<{\egroup}@{}}
\definecolor{myurlcolor}{rgb}{0,0,0.7}
\definecolor{myrefcolor}{rgb}{0.1,0,0.9}
\crefname{enumi}{constraint}{constraints}
\Crefname{enumi}{Constraint}{Constraints}
\crefname{case}{case}{cases}
\Crefname{case}{Case}{Cases}
\newcommand{\smref}[1]{SM~\ref{#1}}
\newtheorem{theorem}{Theorem}
\newtheorem{lemma}{Lemma}
\newtheorem{definition}{Definition}
\newtheorem{remark}{Remark}
\newcommand{\haar}[0]{\operatorname{Haar}}
\newcommand{\pur}{\operatorname{Pur}}
\newcommand{\st}[1]{\ketbra{#1}{#1}}
\newcommand{\poly}{\operatorname{poly}}
\renewcommand{\Pr}{{\mathsf{Pr}}}
\newcommand{\TV}{{\mathsf{TV}}}
\newcommand{\E}{{\mathbb{E}}}
\newcommand{\negl}{{\mathsf{negl}}}
\begin{document}

\title{Pseudomagic quantum states}

\author{Andi Gu}
\email{andigu@g.harvard.edu}
\affiliation{Department of Physics, Harvard University, 17 Oxford Street, Cambridge, MA 02138, USA}

\author{Lorenzo Leone}
\email{lorenzo.leone@fu-berlin.de}
\affiliation{Department of Physics, University of Massachusetts Boston,
    100 Morrissey Boulevard, Boston, MA 02125, USA}
\affiliation{Dahlem Center for Complex Quantum Systems, Freie Universit\"at Berlin, 14195 Berlin, Germany}

\author{Soumik Ghosh}
\affiliation{Department of Computer Science, University of Chicago, Chicago, Illinois 60637, USA}

\author{Jens Eisert}
\affiliation{Dahlem Center for Complex Quantum Systems, Freie Universit\"at Berlin, 14195 Berlin, Germany}
\affiliation{Helmholtz-Zentrum Berlin f\"ur Materialien und Energie, 14109 Berlin, Germany}
\affiliation{Fraunhofer Heinrich Hertz Institute, 10587 Berlin, Germany}

\author{Susanne F. Yelin}
\affiliation{Department of Physics, Harvard University, 17 Oxford Street, Cambridge, MA 02138, USA}

\author{Yihui Quek}
\affiliation{Department of Physics, Harvard University, 17 Oxford Street, Cambridge, MA 02138, USA}
\affiliation{Department of Computer Science, Harvard John A. Paulson School Of Engineering And Applied Sciences, 150 Western Ave, Boston, MA 02134}

\begin{abstract}
Notions of nonstabilizerness, or ``magic'', quantify how non-classical quantum states are in a precise sense: states exhibiting low nonstabilizerness preclude quantum advantage. 
We introduce `pseudomagic' ensembles of quantum states that, despite low nonstabilizerness, are computationally indistinguishable from those with high nonstabilizerness. Previously, such computational indistinguishability has been studied with respect to entanglement, introducing the concept of pseudoentanglement. However, we demonstrate that pseudomagic neither follows from pseudoentanglement nor implies it. In terms of applications, the study of pseudomagic offers fresh insights into the theory of quantum scrambling: it uncovers states that, even though they originate from non-scrambling unitaries, remain indistinguishable from scrambled states to any physical observer. Additional applications include new lower bounds on state synthesis problems, property testing protocols, and implications for quantum cryptography. Our work is driven by the observation that only quantities measurable by a \textit{computationally bounded observer} -- intrinsically limited by finite-time computational constraints -- hold physical significance. Ultimately, our findings suggest that nonstabilizerness is a `hide-able' characteristic of quantum states: some states are much more magical than is apparent to a computationally bounded observer.

\end{abstract}

\maketitle

The boundary between quantum and classical computation is a central question in current research, with a focus on identifying uniquely quantum resources that contribute to a quantum advantage. One such resource is nonstabilizerness (``magic''), which is a measure of the non-Clifford resources needed to prepare a quantum state~\cite{Bravyi_2005, Howard2017, Bartlett2014}. Nonstabilizerness is directly connected to the hardness of classically simulating a quantum state~\cite{gottesman1997stabilizer, stabilizer2004, Bravyi_2016,bravyi_trading_2016,bravyi_simulation_2019,seddon_quantifying_2021,Wigner,veitch_resource_2014,Emerson}, the yield of magic state distillation protocols~\cite{howard_application_2017,OGorman2017, Bravyi_2005, Bravyi2012,campbell_unified_2017,campbell_unifying_2017,campbell_catalysis_2011, dawkins_qutrit_2015, campbell_bound_2010,anwar_qutrit_2012,krishna_low_2019}, the overhead required for fault-tolerant quantum computation~\cite{kitaev_faulttolerant_2003,Campbell2010, campbell2017,xu2023constantoverhead}, and the degree of chaos in a system~\cite{leone_stabilizer_2022,leone_nonstabilizerness_2023,goto_probing_2022,garcia_resource_2023}. Given these connections, one might expect that quantum states with high nonstabilizerness are inherently different, and more non-classical, than states with low 
values.

In this work, we challenge this intuition by constructing ensembles of states that are poor in magic resources but, nonetheless, are \emph{computationally indistinguishable} from an ensemble of states that are rich in magic resources. Because they masquerade as highly magical ensembles, we call the former ``pseudomagic'' ensembles. Moreover, their nonstabilizerness can also be \emph{tuned}: for any value of nonstabilizerness strictly greater than $\log(n)$ and up to $n$, there is a pseudomagic ensemble with that amount of nonstabilizerness. While the `pseudoentangled' ensembles introduced in 
Ref.~\cite{aaronson_quantum_2023} happen to also display the above `pseudomagic' properties, we prove that the amount of each resource in them (i.e., entanglement and magic) can be \emph{tuned independently}: the existence of pseudoentangled ensembles does not imply their pseudomagic counterparts, nor vice versa.
While we quantify nonstabilizerness in the rest of this work with the measure of stabilizer R\'enyi entropy~\cite{leone_stabilizer_2022}, we explain how to generalize our construction to many other popular magic measures, such as robustness of magic~\cite{howard_application_2017}, stabilizer fidelity
and extent~\cite{bravyi_simulation_2019}, max relative entropy of magic~\cite{liu_manybody_2022}. %

As physically motivated applications, we discuss the implication of pseudomagic for quantum chaos theory. Our results imply, counterintuitively, that some states generated by non-scrambling unitaries are computationally indistinguishable from states generated by scrambling unitaries. Furthermore, we show that the existence of pseudomagic states immediately implies the existence of a quantum cryptographic primitive known as EFI pairs~\cite{brakerski2022computational}. Finally, we employ our findings to obtain lower bounds for black-box magic state distillation 
and property testing protocols.

\emph{The computationally bounded observer. } 
The central claim of this work is that an idea from computer science -- namely, computational indistinguishability -- has significant ramifications for the ways in which we understand physics. Say we have two $n$-particle systems $S_1$ and $S_2$ that differ in some physical characteristic $C$, and we aim to distinguish between the two systems. What if the time needed for \emph{any} distinguishing method is on par with the age of the universe? If so, we then have two systems which purportedly differ in some physical attribute $C$, yet nevertheless can never be distinguished in any reasonable amount of time -- so in what sense can we say $C$ is a genuine physical attribute? In the language of computer science, a  distinguishing algorithm is efficient if its requisite \textit{computational time} (i.e., number of elementary operations) scales polynomially with the number of particles, denoted by $\poly(n)$, and inefficient if it scales exponentially as $\exp(n)$. This (fuzzy) distinction delineates between a feasible distinguishing scheme and an impractical one. Indeed, if every possible distinguishing algorithm scales exponentially with the number of particles $n$, we say the two systems $S_1$ and $S_2$ are `computationally indistinguishable', as for even a modest system size $n\sim 120$, an exponential distinguisher's runtime will surpass the age of the universe $\sim 10^{18}\,$s, %
impossible to execute in practice. This idea motivates us to introduce the notion of  the \textit{Computationally Bounded Observer} (CBO), an observer constrained to measurement schemes that operate within polynomial time. The results presented in this work demonstrate that the (many-body) physics observed by a CBO differs profoundly compared to an unconstrained observer. This introduces a unique perspective into quantum many-body systems, transforming the laboratory from a mere verifier of quantum theories into an integral part of the theory itself, where the limitations of observers assume a central role.

\emph{Pseudomagic. } 
First, we review some useful definitions associated with nonstabilizerness. Let $\mathbb{P}_n$ be the Pauli group on $n$ qubits, $\mathcal{C}_n$ be the Clifford group, and let $\Sigma$ be the set of pure stabilizer states. A stabilizer operation $\mathcal{S}$ is a quantum channel obeying $\mathcal{S}(\Sigma)=\Sigma$, which is to say that $\mathcal{S}$ preserves the set of stabilizer states~\cite{veitch_resource_2014}. 
There are many ways to quantify nonstabilizerness~\cite{howard_application_2017,bravyi_simulation_2019,leone_stabilizer_2022,liu_manybody_2022,tirrito_quantifying_2023,haug2023stabilizer,haug_efficient2023,haug_scalable_2023,ME,bu2023stabilizer,chamon_quantum_2022,turkeshi2023measuring,saxena_quantifying_2022,koukoulekidis_constraints_2022}, but we limit our attention to the \textit{magic measures} introduced in \cref{table:magicmonotone}. The second key concept in this work is the notion of ``computational indistinguishability" for two ensembles of states, which means that no CBO (i.e., polynomially-bounded algorithm) can tell the difference between the two ensembles. We refer the reader to Ref.\ \cite{jirandom2018} for a detailed treatment of this concept and its relevance for quantum cryptography. Having established the two notions of magic and computational indistinguishability, we now introduce pseudomagic.

\begin{table}
    \setlength{\tabcolsep}{6pt} %
    \renewcommand{\arraystretch}{1.4} %
    \begin{centering}
        \resizebox{\columnwidth}{!}{%
            \begin{tabular}{|c|c|}
                \hline
                {\textbf{Magic measure}}  & {\textbf{Definition}} \\
                \hline
                \hline
                Stabilizer entropy~\cite{leone_stabilizer_2022}   &

                $M_{\alpha}({\psi})=\frac{1}{1-\alpha}\log \frac{1}{d} \sum_P \tr^{2\alpha}(P\psi)$                             \\
                \hline
                Robustness of magic~\cite{howard_application_2017} & $\mathcal{R}(\psi)=\log\min\{\norm{c}_1 \mid  \psi=\sum_{i}c_i\st{\sigma_i}\}$                                      \\
                \hline
                Stabilizer fidelity~\cite{bravyi_simulation_2019}  & $\mathcal{F}_{\textrm{stab}}(\psi)=-\log \max_{\sigma \in \Sigma}\abs{\braket{\psi}{\sigma}}^2$       \\
                \hline
                Stabilizer extent~\cite{bravyi_simulation_2019}    & $\xi(\psi)=\log\min(\sum_{\phi}\abs{c_{\phi}})^2$                                                                         \\
                \hline
                Max-relative entropy~\cite{liu_manybody_2022}      & $D_{\max}(\psi)=\log\min\{\lambda \mid \lambda\sigma-\psi\ge 0\}$                                                    \\
                \hline
            \end{tabular}
        }
        \caption{\label{table:magicmonotone} Magic measures and their definitions.}
        \label{tablemagicmonotone}
    \end{centering}
\end{table}

\begin{figure}
    \centering
    \includegraphics[width=\columnwidth]{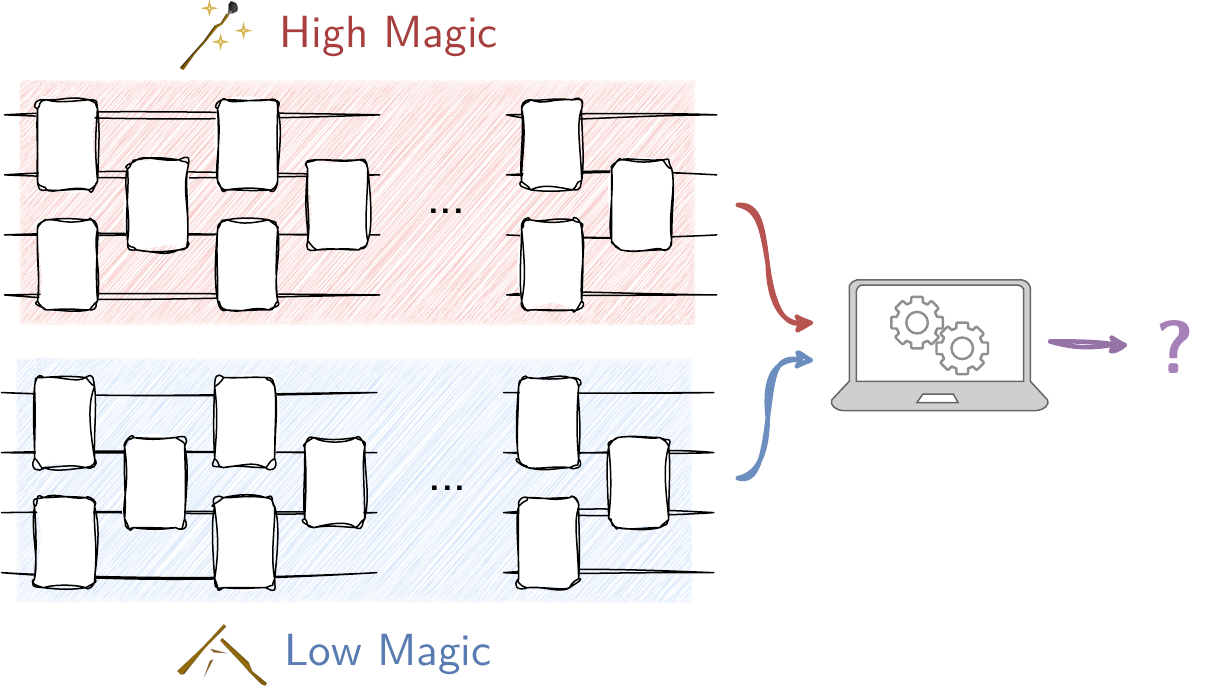}
    \caption{The `pseudomagic' state ensembles discussed in this work are computationally indistinguishable from highly magical states produced by scrambling dynamics (e.g., Haar random states).}
    \label{fig:chaos}
\end{figure}

\begin{definition}[Pseudomagic]\label{def:pseudomagic}
Let $\mathcal{M}$ be a magic measure. A pseudomagic pair with gap $f(n)$ vs. $g(n)$ (where $f(n)>g(n)$) consists of two state ensembles:
\begin{enumerate}[label=(\alph*)]
    \item a `high magic' ensemble of $n$-qubit quantum states $\{|\psi_{k_1}\rangle\}$ such that $\mathcal{M}({\psi_{k_1}})=
    f(n)$ with high probability over $k_1$, and
    \item a `low magic' ensemble of $n$-qubit quantum states $\{|\phi_{k_2}\rangle\}$ such that $\mathcal{M}({\phi_{k_2}})=
    g(n)$ with high probability over $k_2$,
\end{enumerate}
such that the two ensembles are computationally indistinguishable, even when given polynomially many copies. %
\end{definition}
Qualitatively, states from the ensemble $\{\ket{\phi_{k_2}}\}$ mimic much more `magical' states to all CBOs, even though they themselves are `low magic'. In the rest of this work we use stabilizer R\'enyi entropy as our magic measure, given by
\begin{equation}
    M_{\alpha}({\psi})=\frac{1}{1-\alpha}\log \frac{1}{2^n}\sum_{P \in \mathbb{P}_n}\tr^{2\alpha}(P\psi), \label{eq:malpha}
\end{equation}
which is a magic monotone for $\alpha\ge 2$~\cite{leone2024stabilizermonotone}; see~\smref{sec:prelim} for a detailed discussion~\footnote{Moreover, the stabilizer entropy possesses two desirable traits: it is experimentally measurable, and serves as a lower bound for all other magic monotones listed in \cref{table:magicmonotone}. These feautures enable us to demonstrate the existence of states with a significant and precise pseudomagic gap of $\log n$ versus $n$. See Supplemental Material for an in-depth discussion about the role of $M_{\alpha}$ in defining pseudomagic.}. Combining the `computational indistinguishability' property with the properties of stabilizer R\'enyi entropy, we show that $g(n)$, the nonstabilizerness of the low-magic ensemble, can be no smaller than $\omega(\log n)$ \footnote{To remind the reader, $f=o(g(n)) \implies \lim {f(n)}/{g(n)}=0$, $f=\omega(g(n)) \implies \lim {f(n)}/{g(n)}=\infty$, and finally $f=\Theta(g(n)) \implies \exists C_1,C_2 > 0$ such that $C_1 \leq \lim {f(n)}/{g(n)} \leq C_2$.}:

\begin{lemma}[Bound to stabilizer entropies]\label{lem:stab-bound}
Let $\mathcal{E}$ be the low magic ensemble of a pseudomagic pair. Then the $\alpha$-R\'enyi stabilizer entropies obey $M_{\alpha}({\psi})=\omega(\log n)$, with high probability over the choice of $\ket{\psi} \in \mathcal{E}$.
\end{lemma}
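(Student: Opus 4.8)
\emph{Proof strategy.} The plan is to argue by contraposition: I will show that if the low-magic ensemble $\mathcal{E}$ failed to satisfy $M_\alpha(\psi)=\omega(\log n)$, then one could efficiently distinguish it from the high-magic ensemble of the same pair, contradicting computational indistinguishability. The starting point is to rewrite the magic measure in terms of the directly accessible quantity $W_\alpha(\psi) := \frac{1}{2^n}\sum_{P\in\mathbb{P}_n}\tr^{2\alpha}(P\psi)$, so that $M_\alpha(\psi)=\frac{1}{1-\alpha}\log W_\alpha(\psi)$. For $\alpha>1$ this means $M_\alpha(\psi)\le c\log n$ is equivalent to $W_\alpha(\psi)\ge n^{-(\alpha-1)c}$, i.e.\ a magic that is only $O(\log n)$ forces $W_\alpha$ to be at least inverse-polynomially large. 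Negating the $\omega(\log n)$ claim yields a constant $c$ and an infinite set of dimensions $n$ on which a non-negligible fraction of $\psi\sim\mathcal{E}$ obey $W_\alpha(\psi)\ge 1/\poly(n)$; it is exactly this inverse-polynomial signal that a computationally bounded observer can detect.

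\emph{The distinguisher.} The key ingredient is that $W_\alpha$ is efficiently estimable: since $W_\alpha(\psi)=\tr\!\big[(\tfrac{1}{2^n}\sum_P P^{\otimes 2\alpha})\,\psi^{\otimes 2\alpha}\big]$, it is the expectation value of a bounded, permutation-type observable on $2\alpha$ copies of $\ket\psi$. For $\alpha=2$ this is precisely the four-copy Bell-difference-sampling measurement already used to measure the stabilizer R\'enyi entropy experimentally \cite{leone_stabilizer_2022}; for any constant integer $\alpha\ge 2$ the analogous $2\alpha$-copy measurement is implementable by a $\poly(n)$-size circuit. Estimating $W_\alpha$ to additive accuracy $\eps$ then costs $O(\alpha/\eps^2)=\poly(n)$ copies by a Hoeffding bound. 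The high-magic member of the pair has $W_\alpha=2^{-(\alpha-1)f(n)}$, which is smaller than the low ensemble's value $2^{-(\alpha-1)g(n)}\ge 1/\poly(n)$ by a gap that is at least inverse-polynomial whenever $f(n)-g(n)=\Omega(1)$ (as it is for any nontrivial pseudomagic pair). The observer estimates $W_\alpha$ to accuracy a third of this gap and thresholds; this separates the two ensembles on the infinite set of $n$ above, violating computational indistinguishability and establishing the contradiction for $M_2$ and any constant integer $\alpha$.

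\emph{All $\alpha\ge 2$ and the main obstacle.} To cover the remaining orders at once, I would instead certify the stronger statement $M_\infty(\psi)=\omega(\log n)$ and invoke monotonicity of the R\'enyi entropy in its order, which gives $M_\alpha\ge M_\infty$ for every finite $\alpha\ge 2$. Since $M_\infty(\psi)=-\log\max_{P}\tr^2(P\psi)$, having $M_\infty=O(\log n)$ means there is a ``heavy'' Pauli $P$ with $|\tr(P\psi)|\ge 1/\poly(n)$; because Bell (difference) sampling draws Paulis with probability proportional to $\tr^2(P\psi)$, such a heavy Pauli is found with inverse-polynomial probability in $\poly(n)$ samples and then verified by directly estimating $\tr(P\psi)$, again yielding an efficient distinguisher. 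I expect the main obstacle to be precisely this efficient-measurability step: one must confirm that the relevant $2\alpha$-copy observable (equivalently, Bell-sampling access to the Pauli spectrum) has $O(1)$ operator norm and a $\poly(n)$-time implementation, and then carry out the probabilistic bookkeeping that converts ``$M_\alpha\le c\log n$ with non-negligible probability'' into a genuinely non-negligible distinguishing advantage over an infinite sequence of input sizes.
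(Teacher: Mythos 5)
Your primary argument is essentially the paper's own proof: both rewrite $M_\alpha$ via $W_\alpha(\psi)=\tr\big(\Pi^{(2\alpha)}\psi^{\otimes 2\alpha}\big)$ with $\Pi^{(2\alpha)}=d^{-1}\sum_P P^{\otimes 2\alpha}$, observe that $M_\alpha=O(\log n)$ forces $W_\alpha\ge 1/\poly(n)$ while the high-magic reference has negligible $W_\alpha$, and turn the efficient estimation of this expectation value into a distinguisher that contradicts computational indistinguishability. The paper instantiates the reference ensemble as Haar (using Levy's lemma to certify $W_\alpha\le 2^{-n\beta}$ there) and remarks afterwards that only the $\Omega(n)$ stabilizer entropy of the reference is used. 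On the obstacle you correctly flag --- efficient implementability of the $2\alpha$-copy measurement --- the paper's resolution is worth noting: it runs a Hadamard test with the Hermitian unitary $\Pi^{(2\alpha)}$, which is known to be efficiently implementable for \emph{odd} $\alpha$, and then covers even $\alpha$ indirectly via the hierarchy $M_{\alpha}\ge M_{\alpha+1}$ rather than by building the even-$\alpha$ measurement. You should adopt the same split (or an equivalent workaround) rather than asserting implementability for all constant integer $\alpha$.

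Your proposed shortcut via $M_\infty$ does not work, however. If $M_\infty(\psi)=-\log\max_P\tr^2(P\psi)\le c\log n$, the guaranteed heavy Pauli $P_0$ satisfies $\tr^2(P_0\psi)\ge n^{-c}$, but the characteristic distribution assigns it probability $p_\psi(P_0)=\tr^2(P_0\psi)/d\le n^{-c}2^{-n}$; indeed \emph{every} Pauli has sampling probability at most $1/d$ under $p_\psi$ (and the situation is no better for the convolved distribution produced by Bell difference sampling). A single heavy Pauli is therefore not locatable in polynomial time --- consider a state with one non-identity Pauli of constant expectation hidden among $4^n$ candidates --- so ``$M_\infty=O(\log n)$'' carries no efficiently detectable signature, and the claimed distinguisher fails. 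The heavy-Pauli search only succeeds when an inverse-polynomial \emph{total mass} of $p_\psi$ sits on heavy Paulis, which is a strictly stronger hypothesis than a bound on $M_\infty$. Fortunately this route is unnecessary: your first argument, patched as above, already yields the lemma for all the $\alpha$ the paper treats.
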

In the next section, we turn to the construction of pseudomagic pair with maximal gap, i.e. $\omega(\log n)$ vs. $O(n)$. %

\emph{Construction of pseudomagic ensembles.} For any function $f:\{0,1\}^n\to \{0,1\}$ and subset $S\subseteq \{0,1\}^n$, we define the associated subset phase state as~\cite{aaronson_quantum_2023}
\begin{equation}
    \ket{\psi_{f,S}}=\frac{1}{\sqrt{|S|}}\sum_{x\in S}(-1)^{f(x)}\ket{x}.
    \label{eq:subsetphasestates}
\end{equation}
In Ref.~\cite{aaronson_quantum_2023}, it was shown the ensemble $\mathcal{E}=\{\ket{\psi_{f,S}}\}$ of subset phase states (for pseudorandom functions $f$ and subsets $S$) is both efficiently preparable, and computationally indistinguishable from Haar random states. In this section, we show that $\mathcal{E}$ also saturates the nonstabilizerness lower bound in \cref{lem:stab-bound}. Since the ensemble of Haar-random states $\mathcal{E}_{\haar}$ have stabilizer entropy $\Theta(n)$ with overwhelming probability~\smref{subsec:lem1proof}, $(\mathcal{E},\mathcal{E}_{\haar})$ form a pseudomagic pair with gap $\omega(\log n)$ vs. $O(n)$.  We then show that applying a polynomial depth quantum circuit $U$ to $\mathcal{E}$ results in a high-magic ensemble $\mathcal{E}_U$ that, thanks to the transitivity of computational indistinguishability, is computationally indistinguishable from the low-magic ensemble $\mathcal{E}$, thus forming another pseudomagic pair $(\mathcal{E},\mathcal{E}_U)$ (see \cref{fig:pseudotransitions}).

\begin{theorem}[Subset phase states display pseudomagic\label{thm:pseudomagic ensembles}]
    For any $ k\in[\omega(\log n), n]$, there exists an ensemble
    $\mathcal{E}=\{\ket{\psi_{f,S}}\}$ for $|S|=2^k$ that has $M_{\alpha}({\psi_{f,S}})= O(k)$. For $\alpha \leq 2$, this bound is tight: $M_{\alpha}({\psi_{f,S}})= \Theta(k)$. Furthermore, 
    \begin{enumerate}[label=(\alph*)]
        \item $\mathcal{E}$ is computationally indistinguishable the ensemble of Haar random states $\mathcal{E}_{\haar}$. Therefore, $(\mathcal{E},\mathcal{E}_{\haar})$ forms a pseudomagic pair with gap $\omega(\log n)$ vs. $O(n)$.
        \item There exists a quantum circuit $U$ composed solely of single-qubit gates (i.e., it has depth 1) such that $(\mathcal{E}$, $\mathcal{E}_U)$, where $\mathcal{E}_U=\{U\ket{\psi_{f,S}}\}$, also forms a pseudomagic pair with maximal gap $\omega(\log n)$ vs. $O(n)$.
    \end{enumerate}
\end{theorem}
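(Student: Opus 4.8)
The plan is to reduce every claim to an analysis of the Pauli spectrum $\{\tr(P\psi_{f,S})\}_{P\in\mathbb{P}_n}$, since by definition $M_\alpha$ is, up to the additive constant $-n$, the R\'enyi-$\alpha$ entropy of the probability distribution $\Xi_P=2^{-n}\tr^{2}(P\psi_{f,S})$. First I would write $P=X^aZ^b$ and compute $\tr(X^aZ^b\psi_{f,S})=|S|^{-1}\sum_{x\in S_a}(-1)^{f(x)+f(x\oplus a)+b\cdot(x\oplus a)}$, where $S_a:=S\cap(S\oplus a)$ is the difference set. Two structural features drive everything: (i) for the diagonal Paulis ($a=0$) the function $f$ cancels and $\tr(Z^b\psi_{f,S})=|S|^{-1}\widehat{S}(b)$ is a character sum of the indicator of $S$; (ii) for a random sparse $S$ of size $2^k$ the difference sets $S_a$ with $a\neq0$ have expected size $\approx 2^{2k-n}$, so the off-diagonal Paulis are sparse and carry small, essentially random-signed weight.

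For the upper bound $M_\alpha(\psi_{f,S})=O(k)$ I would lower-bound $\sum_P\tr^{2\alpha}(P\psi_{f,S})$ using the diagonal Paulis alone. Parseval gives $\sum_b\widehat{S}(b)^2=2^n|S|$, and for $\alpha>1$ the power-mean inequality yields $\sum_b|\widehat{S}(b)|^{2\alpha}\ge 2^n|S|^\alpha$; dividing by $|S|^{2\alpha}$ gives $\sum_b\tr^{2\alpha}(Z^b\psi_{f,S})\ge 2^{n-\alpha k}$, so that $M_\alpha\le \tfrac{\alpha}{\alpha-1}k$ \emph{deterministically} in $f$ and $S$. For the matching lower bound $M_\alpha=\Omega(k)$ when $\alpha\le 2$ I would instead upper-bound the full sum in expectation: averaging over $f$ kills all but the diagonal pairings, so $\E_{f,S}[\sum_P\tr^{2\alpha}]$ splits into a diagonal piece of order $2^{n-\alpha k}$ and an off-diagonal piece governed by $\E_S[|S_a|^{\alpha}]$, which the sparsity estimate keeps at the same order; Markov together with a second-moment concentration argument then gives $\sum_P\tr^{2\alpha}=2^{n-\Theta(k)}$ with high probability, hence $M_\alpha=\Theta(k)$. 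The restriction $\alpha\le2$ enters exactly here: for $\alpha>2$ the sum is dominated by the heaviest diagonal coefficients (the maximum of $\sim 2^n$ near-Gaussian character sums), which I expect to spoil the upper bound on $\sum_P\tr^{2\alpha}$ and hence the tightness.

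Part (a) is then immediate: choosing $k=\omega(\log n)$ makes $|S|=2^k$ superpolynomial, so $\mathcal{E}$ is computationally indistinguishable from $\mathcal{E}_{\haar}$ by the subset-phase-state result of Ref.~\cite{aaronson_quantum_2023}, while the two magic values are $\Theta(k)=\omega(\log n)$ and $\Theta(n)$ (the known stabilizer entropy of Haar states), giving the advertised gap. For part (b) the indistinguishability half is a soft transitivity argument: applying the fixed, efficient depth-one circuit $U$ to both ensembles, $\mathcal{E}_U$ is indistinguishable from $U\mathcal{E}_{\haar}$, which is Haar-invariant and hence equals $\mathcal{E}_{\haar}$ in distribution, so by transitivity $\mathcal{E}_U$ is indistinguishable from $\mathcal{E}$.

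The genuinely hard part is showing that a single layer of single-qubit gates actually manufactures $\Theta(n)$ magic, i.e. $M_\alpha(U\psi_{f,S})=\Theta(n)$ with high probability for a generic $U=V^{\otimes n}$; indistinguishability cannot be invoked, since $M_\alpha$ is not a quantity a polynomial-copy observer can estimate. I would compute the spectrum directly: writing $V^\dagger\sigma_\mu V=\sum_\nu R_{\mu\nu}\sigma_\nu$ with $R$ the orthogonal single-qubit Pauli-transfer matrix, one gets $\tr(P\,U\psi U^\dagger)=\sum_{\vec\nu}\big(\prod_i R_{P_i\nu_i}\big)\tr(\sigma_{\vec\nu}\psi)$, so $\sum_P\tr^{4}(P\,U\psi U^\dagger)$ becomes a contraction of the fourth moment of the original Pauli coefficients against $\prod_i T_{\nu^{(1)}_i\nu^{(2)}_i\nu^{(3)}_i\nu^{(4)}_i}$ with $T_{abcd}=\sum_\mu R_{\mu a}R_{\mu b}R_{\mu c}R_{\mu d}$. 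The aim is to show that for generic $V$ this tensor redistributes the weight previously concentrated on the diagonal ($a=0$) Paulis, which kept $\sum_P\tr^{4}$ as large as $2^{n-2k}$, over exponentially many Paulis, driving $\E_{f,S}[\sum_P\tr^{4}(P\,U\psi U^\dagger)]$ down to $2^{n-\Theta(n)}$ and, with concentration, yielding $M_\alpha(U\psi_{f,S})=\Theta(n)$. I expect this moment bound, in particular certifying that no large Pauli expectation is regenerated across all $n$ qubits simultaneously, to be the main obstacle.
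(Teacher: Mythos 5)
Your overall architecture (Pauli-spectrum analysis, fourth-moment/pairing computation for the lower bound, transitivity for indistinguishability) matches the paper's, and your deterministic upper bound for $\alpha>1$ via Parseval plus the power-mean inequality on the diagonal Paulis is a nice, arguably cleaner alternative to the paper's route (which counts the support of the Pauli spectrum to get $M_0(\psi_{f,S})\le 2\log|S|$ and then invokes the hierarchy $M_\alpha\le M_0$). However, your upper bound covers only $\alpha>1$, while the theorem needs $M_\alpha=O(k)$ down to $\alpha=0$ (the tightness claim is for all $\alpha\le 2$); for $\alpha\le 1$ the power-mean inequality reverses and you would need the support-counting bound you never state. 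Similarly, your lower-bound plan of computing $\E_{f,S}[\sum_P\tr^{2\alpha}]$ "for $\alpha\le 2$" does not parse for non-integer $\alpha$, since $\tr^{2\alpha}$ is then not a polynomial in the phases $(-1)^{f(x)}$ and the pairing argument is unavailable; the paper does the moment computation only at $\alpha=2$ (an eighth-degree polynomial, requiring just $8$-wise independence of $f$ and working for an \emph{arbitrary fixed} $S$, not a random one) and then transfers to all $\alpha\le 2$ via the monotonicity $M_\alpha\ge M_2$. Your stated reason for the restriction $\alpha\le 2$ (heavy diagonal coefficients spoiling the bound) is not the operative one; it is simply that the hierarchy points the wrong way for $\alpha>2$.

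The genuine gap is part (b). You correctly identify that the entire content of (b) is proving $M_\alpha(U\psi_{f,S})=\Theta(n)$ for a depth-one layer of single-qubit gates, you set up exactly the right object (the contraction of the fourth moment of the Pauli coefficients against the tensor $T_{abcd}=\sum_\mu R_{\mu a}R_{\mu b}R_{\mu c}R_{\mu d}$), and then you explicitly stop, calling the required moment bound "the main obstacle." The paper closes this: averaging over independent single-qubit Haar unitaries, it computes $W=\E_{U\sim\haar_1}\bigl[(U^\dagger)^{\otimes 4}\bigl(\sum_{P\in\mathbb{P}_1}P^{\otimes 4}\bigr)U^{\otimes 4}\bigr]=\tfrac{8}{5}\Pi_{\mathrm{sym}}+4\Pi'$, observes that every computational-basis matrix element of $W$ is at most $8/5$ in magnitude so that the $n$-fold tensor (after the $2^{-n}$ normalization) has entries bounded by $(4/5)^{n}$, uses $8$-wise independence of $f$ to bound the number of surviving pairings by $\binom{8}{2,2,2,2}|S|^4$, and concludes $\E\bigl[2^{-M_2}\bigr]\le 2520\,(4/5)^{n}$; an averaging argument then extracts a specific product unitary $U^*$, Markov gives $M_2=\Omega(n)$ with overwhelming probability, the upper bound $M_0(U^*\psi_{f,S})=O(n)$ follows from the support growing by at most a factor $2^{n}$, and Solovay--Kitaev plus the continuity of pseudorandomness makes $U^*$ efficiently implementable (which your transitivity step for (b) tacitly assumes). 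Without this estimate your proposal establishes part (a) but not part (b).
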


It is natural to ask if, besides the stabilizer R\'enyi entropy, other magic measures (see \cref{tablemagicmonotone}) can be used to define pseudomagic. In \smref{subsec:lemmarobustness}, we show that both the $\alpha$-stabilizer entropies $M_{\alpha}(\psi)$ and the log-robustness of magic $\mathcal{R}(\psi)$ are measures of pseudomagic with gap $\Theta(n)$
vs.~$\Theta(\poly\log n)$. As a corollary, we have a sufficient condition for a magic measure to be a good measure for defining pseudomagic. For any magic monotone $\mathcal{M}$, if there exists an $\alpha$ such that $\Omega(M_\alpha(\psi))\leq \mathcal{M}(\psi) \leq O(\mathcal{R}(\psi))$ for pure states $\psi$, then $\mathcal{M}$ is also a measure of pseudomagic with gap $\Theta(n)$
vs.~$\Theta(\poly\log n)$. In \smref{subsec:sufficient}, we use this fact to show that the magic monotones $\mathcal{M}=\mathcal{F}_{\textrm{stab}}(\psi), \xi(\psi),D_{\max}(\psi)$, as defined in \cref{table:magicmonotone}, are also measures of pseudomagic with gap $\Theta(n)$
vs.~$\Theta(\poly\log n)$. Notably, these measures serve as genuine magic monotones even in the context of mixed-state magic resource theory.

\emph{Implications to quantum scrambling.} Having defined the notion of pseudomagic, we now use it to address some important aspects of quantum information scrambling, and explore how CBOs challenge its foundational principles. One way of defining a scrambling unitary evolution $U$ is to say that it is scrambling if it attains the Haar value for its $2k$-point
\emph{out-of-time-order correlators} (OTOCs)~\cite{roberts_chaos_2017,hosur_chaos_2016}. These OTOCs are denoted by $C_{2k}$ and are defined as
\begin{equation}
    C_{2k}(U)\coloneqq \frac{1}{d}\tr(\tilde{P}_1 Q_1\tilde{P}_2 Q_2\cdots\tilde{P}_k Q_k),
\end{equation}
where $P_i$ and $Q_i$ are non-identity Pauli operators for $i=1,\ldots, k$ and $\tilde{P}_i \coloneqq  U^{\dag}P_i U$. To be precise, $U$ is scrambling if $C_{2k}(U)=\tilde{O}(C_{2k}(U_{\haar}))$, where $\tilde{O}$ stands for an irrelevant polynomial overhead. Indeed, typically  $C_{2k}(U_{\haar})=O(\exp(-\gamma n))$ with $\gamma$ depending on the particular choice of the correlator $C_{2k}$~\cite{roberts_chaos_2017}. 

Magic resource theory is linked to quantum information scrambling~\cite{leone_isospectral_2021,oliviero_random_2021,PhysRevD.106.126009}. Clifford unitaries can be scramblers, but they scramble information in a relatively simple way~\cite{PhysRevLett.132.080402}. Any unitary exhibiting complex information scrambling for its OTOCs must contain $\Omega(n)$ non-Clifford gates \cite{leone_quantum_2021,oliviero_transitions_2021}. Information scrambled by a unitary evolution with less than $n$ non-Clifford gates can be \textit{unscrambled} and reconstructed~\cite{PhysRevLett.132.080402,PhysRevA.109.022429}. %
Consequently, the mere existence of pseudomagic states that are also pseudorandom (i.e., subset phase states), suggests the existence of non-complex scramblers that nonetheless generate states indistinguishable from states generated by maximal scramblers. More precisely, we establish that such states must be generated by a unitary evolution that exhibits exponentially separated OTOCs from the typical Haar value.
\begin{theorem}[Hidden quantum scrambling]\label{chaostheorem}
Let $\mathcal{E}$ be an ensemble of pseudomagic states that is also pseudorandom. Let $\ket{\psi}\in\mathcal{E}$ and let $U$ such that $\ket{\psi}=U\ket{0}^{\otimes n}$. The $2k$-point OTOCs of $U$ (for $k\geq 4$) are exponentially separated from the Haar value,
\begin{equation}
    C_{2k}(U)=\Omega(\exp(n)C_{2k}(U_{\haar})).
\end{equation}
Therefore, although it generates a state that is on-average computationally indistinguishable from Haar-random, $U$ is not fully scrambling.
\end{theorem}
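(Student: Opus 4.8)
The plan is to lower-bound the OTOCs $C_{2k}(U)$ by exploiting the fact that $U$ prepares a low-magic state $\ket{\psi}=U\ket{0}^{\otimes n}$, and then to compare this bound against the exponentially small Haar value $C_{2k}(U_{\haar})=O(\exp(-\gamma n))$. The key conceptual input is the established link (cited in the excerpt) between OTOCs and nonstabilizerness: a unitary whose OTOCs closely track the Haar value must contain $\Omega(n)$ non-Clifford gates, equivalently its output state must carry $\Omega(n)$ magic. Since $\ket{\psi}$ is drawn from the pseudomagic ensemble $\mathcal{E}$, by \cref{lem:stab-bound} (or directly by \cref{thm:pseudomagic ensembles}) it has $M_\alpha(\psi)=O(k)$ with $k$ potentially as small as $\omega(\log n)$, hence \emph{strictly sublinear} magic. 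The heart of the argument is therefore a quantitative contrapositive: low output magic forces the OTOC averaged over Pauli inputs to stay far above the Haar value.

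First I would make precise the relation between the averaged OTOC and a magic-type quantity of $U$ (or of $\ket{\psi}$). The natural object is the average of $C_{2k}(U)$ over random non-identity Pauli strings $P_i,Q_i$; for $k=4$ this connects to the fourth moment / frame potential, which in turn is controlled by the stabilizer R\'enyi entropy $M_\alpha$ of the Heisenberg-evolved Pauli operators or equivalently of the state $\ket{\psi}$. Concretely, I would express $\E_{P,Q}|C_{2k}(U)|^2$ (or a suitable signed average) as a sum $\frac{1}{d}\sum_P \tr^{2\alpha}(P\psi)$-type expression, i.e.\ relate it to $2^{(1-\alpha)M_\alpha(\psi)}$. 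This is the step where the definition \eqref{eq:malpha} of $M_\alpha$ enters: a state with small $M_\alpha$ has its Pauli spectrum $\{\tr^2(P\psi)\}$ concentrated on few large-weight Paulis, which makes these moment quantities \emph{large} rather than exponentially suppressed.

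Next I would turn this into the claimed bound. Since $M_\alpha(\psi)=O(k)=o(n)$, the relevant Pauli-weighted moment is at least $\exp(-o(n))$, whereas the Haar value satisfies $C_{2k}(U_{\haar})=O(\exp(-\gamma n))$ with $\gamma$ a positive constant. Dividing, the ratio $C_{2k}(U)/C_{2k}(U_{\haar})$ is bounded below by $\exp(\gamma n - o(n))=\Omega(\exp(n))$, which is exactly the desired exponential separation. The restriction $k\geq 4$ is what guarantees the OTOC moment is governed by a sufficiently high-order stabilizer R\'enyi entropy (the $\alpha\ge 2$ regime where $M_\alpha$ is a genuine monotone and the magic lower bound on non-Clifford count applies), so I would invoke the $k\ge 4$ hypothesis precisely at the point of matching the moment order to the magic measure.

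The main obstacle I anticipate is the first step: cleanly establishing the exact algebraic identity (or tight inequality) tying the Pauli-averaged $2k$-point OTOC to $M_\alpha(\psi)$, including tracking the correct exponent $\alpha$ and the combinatorial prefactors from averaging over non-identity Paulis. The subtlety is that $C_{2k}(U)$ for a \emph{fixed} choice of Paulis could be small even when the average is large, so the theorem's statement must be read as a statement about typical (or averaged) OTOCs, and I would need to argue that the pseudomagic-induced lower bound survives the averaging rather than being washed out by cancellations. Once that identity is pinned down, the remaining comparison to the Haar value and the conclusion that $U$ is not fully scrambling are immediate, since ``fully scrambling'' was defined as $C_{2k}(U)=\tilde O(C_{2k}(U_{\haar}))$, which the established $\Omega(\exp(n))$ separation directly violates.
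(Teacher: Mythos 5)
Your proposal follows essentially the same route as the paper's proof: the paper establishes the exact identity $M_{\alpha}(\psi_U)=\frac{1}{1-\alpha}\log\bigl(d^{2}\,C_{4\alpha}(U)\bigr)$ by expanding $\st{0}^{\otimes n}$ in the Pauli $Z$-subgroup, so that the group-averaged $4\alpha$-point OTOC is precisely the Pauli-weighted moment $\frac{1}{d}\sum_P\tr^{2\alpha}(P\psi)$ you describe, and then compares $M_\alpha(\psi_{PM})=o(n)$ against $M_\alpha(\psi_{\haar})=\Omega(n/\alpha)$ to get the $\Omega(\exp(n))$ ratio, with $2k=4\alpha$ and $\alpha\ge 2$ giving exactly your $k\ge 4$ condition. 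Your anticipated obstacle (pinning down the identity and confirming the statement is about the group-averaged OTOC rather than a fixed Pauli choice) is resolved in the paper exactly as you suggest, via the signed average over $P\in\mathbb{P}_n$ and $Z_i\in\mathbb{Z}_n$.
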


The proof of this can be found in \smref{sec:chaosproof}. %
The curious implication of \cref{chaostheorem} is this: any physical observer, inherently constrained by computational limits (i.e., a CBO), cannot differentiate between a maximally scrambling evolution and non-scrambling one solely based on the observed resultant state (see \cref{fig:chaos}). As a result, a CBO interprets quantum scrambling evolutions differently from an unrestricted observer, exposing a significant conceptual challenge in characterizing quantum information scrambling. Additionally, our findings could potentially also lead to implications for the theory of quantum chaos. Indeed, by defining quantum chaotic evolution as maximally scrambling unitary operators, then \cref{chaostheorem} would imply the impossibility for a CBO to distinguish between a chaotic quantum evolution from a non-chaotic one. However, while some authors~\cite{roberts_chaos_2017} are inclined to think about quantum chaotic evolution precisely as maximal scramblers~\cite{roberts_chaos_2017}, this concept has been challenged in Ref.~\cite{PhysRevLett.131.180403}, where it has been proven that scrambling (when restricted to $k=2$) does not imply chaos. The question of whether maximal scrambling, as probed by higher order OTOCS, implies chaos remains an exciting venue for future research.

\emph{Implications to quantum cryptography.} An essential primitive for classical cryptography is the concept of a \emph{one-way function} (OWF), which is a function that is efficient to evaluate but hard to invert. However, the story is much different in the \emph{quantum} world: OWFs are unnecessary for some quantum cryptographic constructions to hold~\cite{kretschmer2021pseudorandomness,morimae2022commitments}. This leads naturally to a question of whether there is an indespensible primitive for quantum cryptography that serves a similar role to OWFs for classical cryptography. In Ref.~\cite{brakerski2022computational}, the authors introduce EFI pairs as this quantum analogue and show that it is necessary for many secure quantum cryptographic schemes, including bit commitment~\cite{lin2014,yan2020}, oblivious transfer~\cite{bartusek2021one,grilo2021oblivious}, multiparty quantum computation~\cite{ananth2022cryptography}, and zero knowledge proofs~\cite{ananth2021}. EFI pairs are state ensembles generated by efficient circuits that are statistically far but computationally indistinguishable. In light of the proposed significance of EFIs, we show the following.

\begin{theorem}[Cryptographic implications]
    \label{thm:crypto}
    Consider an ensemble of efficiently preparable pseudomagic states that have stabilizer entropy $M_{1}=\Theta(g(n))$ with high probability, where $g(n)$ is tunable in the range $\omega(\log n)$ and $O(n)$. Then, the pseudomagic ensemble, along with the high nonstabilizerness ensemble, forms an EFI pair.
\end{theorem}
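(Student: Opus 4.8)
The plan is to verify, for the pseudomagic pair $(\mathcal{E}_{\mathrm{low}},\mathcal{E}_{\mathrm{high}})$ supplied by the hypothesis, the three defining properties of an EFI pair: efficient preparability, computational indistinguishability, and statistical farness. The first two follow almost immediately from what is already established, so the real work lies in the statistical-distance clause. I would fix $g(n)=o(n)$ (say $g(n)=\omega(\log n)$ but sub-linear) so that the low ensemble's magic $\Theta(g(n))$ and the high ensemble's magic $\Theta(n)$ fall in two disjoint windows.

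Computational indistinguishability is exactly the content of \cref{def:pseudomagic}: a pseudomagic pair is computationally indistinguishable even given polynomially many copies, which is precisely the indistinguishability clause of an EFI pair. For efficient preparability, the low-magic ensemble is efficiently preparable by assumption; for its high-magic partner I would invoke \cref{thm:pseudomagic ensembles}(b), where $\mathcal{E}_U=\{U\ket{\psi_{f,S}}\}$ is produced from $\mathcal{E}$ by a depth-$1$ single-qubit circuit $U$. Since both $U$ and the preparation of $\mathcal{E}$ are efficient, $\mathcal{E}_U$ is efficiently preparable while carrying magic $\Theta(n)$; note that the Haar ensemble of \cref{thm:pseudomagic ensembles}(a) cannot serve as the partner here, as it is not efficiently preparable.

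It remains to establish statistical farness, and here I would use that the two ensembles have well-separated stabilizer entropies: by hypothesis and \cref{lem:stab-bound} the low ensemble has $M_1=\Theta(g(n))$ with high probability, whereas the high ensemble has $M_1=\Theta(n)$, so a typical sample carries a magic value in one of two disjoint windows. Because nonstabilizerness is an information-theoretic property of the state -- the stabilizer entropy is fixed by the Pauli expectation values $\tr(P\psi)$, which are accessible to an unbounded observer from sufficiently many copies -- an unbounded distinguisher can estimate the magic of the received state and report which ensemble produced it, succeeding with high probability. This separates the ensembles statistically even though, by \cref{def:pseudomagic}, no computationally bounded observer can exploit the gap; this is exactly the computational-versus-statistical separation an EFI pair demands.

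I expect the statistical-farness step to be the main obstacle, for a subtle reason: both ensembles share the same single-copy reduced state (the maximally mixed state) and agree on all low moments, so no few-copy measurement distinguishes them -- the entire distinguishing signature lives in higher moments and therefore requires many copies. The technical heart of the argument is to convert the magic gap into a genuinely non-negligible trace distance for the multi-copy states that the generator actually outputs, and to pin down how many copies the unbounded statistical test needs. One route I would try is to exploit the short pseudorandom seed underlying each ensemble, so that the $t$-copy ensemble-averaged state is confined to a subspace of dimension at most $2^{\poly(n)}$, and then argue that the structural/magic difference drives the two ensembles' supports apart, yielding trace distance close to $1$. Checking that this copy/precision tradeoff is simultaneously \emph{consistent} with the computational indistinguishability of \cref{def:pseudomagic} -- rather than the two easy clauses -- is where the real content of the proof lies.
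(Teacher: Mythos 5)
Your decomposition into the three EFI clauses is the right skeleton, and your handling of efficient generation and computational indistinguishability matches the paper (which likewise gets indistinguishability by a triangle inequality through the Haar ensemble). But the statistical-farness step — which you correctly identify as the crux — is left genuinely open in your proposal, and the specific route you sketch would not close it. The seed-counting argument (the $t$-copy average of a keyed ensemble has rank at most $2^{\poly(n)}$, far below the symmetric-subspace dimension) separates a pseudorandom ensemble from \emph{Haar}, but here \emph{both} members of the EFI pair must be efficiently generatable and hence both are short-key ensembles whose $t$-copy averages live in small subspaces; the rank argument gives you nothing about the distance between the two of them, and "the magic difference drives the supports apart" is precisely the claim that needs a quantitative mechanism. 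Likewise, the remark that an unbounded observer can read off $M_1$ from the Pauli expectation values conflates computational power with sample complexity: statistical distance is governed by the optimal (Helstrom) measurement on the copies actually output, so one still has to lower-bound a trace distance.

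The paper's missing ingredient is a Fannes--Audenaert-type continuity bound for the stabilizer Shannon entropy (\cref{thm:Fannes}): for any two pure states, $\abs{M_1(\psi)-M_1(\phi)}\le \norm{\psi-\phi}_1\log(d^2-1)+H_{\mathrm{bin}}[\norm{\psi-\phi}_1]$, proved by dephasing $\psi\otimes\psi^*$ in the Bell basis and applying the classical Fannes inequality to the resulting distribution over Pauli labels. Plugging in the gap $M_1=\Theta(n)$ versus $M_1=\Theta(\poly\log n)$ immediately forces $\norm{\psi-\phi}_1=\Omega(1)$ for \emph{every} pair of typical samples, which is the paper's certificate of statistical farness (and is exactly why the theorem is stated for $M_1$ rather than for a generic monotone). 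Without a continuity estimate of this kind, a large gap in a magic measure does not by itself yield any trace-distance separation, so this is the idea your argument is missing rather than a detail to be filled in later.
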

For a proof, see \smref{sec:crypto}. Crucially, \cref{thm:crypto} holds even in a world without quantum-secure OWFs: it says that the bare existence of pseudomagic states with tunable stabilizer entropy implies the existence of EFI pairs and the world of cryptographic applications they unlock. This strengthens the case that EFI pairs are a more fundamental primitive for quantum cryptography than OWFs.

\emph{No efficient black-box magic-state distillation.} Several architectures for universal fault-tolerant quantum computing rely on applying stabilizer operations to carefully prepare resource states called magic states~\cite{bravyi_universal_2005,campbell2017, knill_quantum_2005, bravyi_magicstate_2012}. An example is the canonical magic state vector $\ket{T}=\ket{0}+e^{i\pi/4}\ket{1}$, which when provided as an input to auxiliary qubits, enables $T$-gate implementation using only stabilizer operations. However, not all nonstabilizer states are useful for implementing non-Clifford gates~\cite{beverland_lower_2020}, especially noisy ones. This motivates the question, can we develop efficient (i.e., polynomial-sized circuit description)  stabilizer protocols that can transform generic nonstabilizer states $\rho$ into useful nonstabilizer states, such as $\ket{T}$? In line with the spirit of analogous tasks for entanglement resource theory~\cite{harrow2005applications,hayashi2002universal}, we term this task \emph{black-box magic-state distillation}.

\begin{theorem}[Black-box magic state distillation]\label{thm:distill}
Given a magic monotone $\mathcal{M}$ such that $\Omega(M_{\alpha}(\psi))\le\mathcal{M}(\psi)\le O(\mathcal{R}(\psi))$ for all pure states $\psi$, any efficient stabilizer protocol that synthesizes a state vector $\ket{B}\bra{B}$ from an arbitrary (potentially mixed) input state $\rho$ requires
\begin{equation}
    \Omega\left(\frac{\mathcal{M}(\ket{B}\bra{B})}{\log^{1+c}\mathcal{M}(\rho)} \right)
\end{equation}
copies of $\rho$, for any constant $c>0$. Remarkably the above is valid for $\mathcal{M}=\mathcal{F}_{\mathrm{stab}},\xi, D_{\max}$ in \cref{table:magicmonotone}.
\end{theorem}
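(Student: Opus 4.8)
The plan is to prove the lower bound by exhibiting a hard instance and deriving a contradiction with pseudomagic whenever the copy count is too small. First I would take the input $\rho$ to be a subset phase state from the ensemble of \cref{thm:pseudomagic ensembles} with $|S|=2^{k_1}$ and $k_1=\Theta(n)$, so that $\mathcal{M}(\rho)=\Theta(n)$ and hence $\log^{1+c}\mathcal{M}(\rho)=\Theta(\log^{1+c}n)$. Alongside it I would introduce a \emph{substitute} input $\rho'$, again a subset phase state but with $|S|=2^{k_2}$ and $k_2=\Theta(\log^{1+c}n)$, tuned to sit just above the $\omega(\log n)$ floor of \cref{lem:stab-bound}, so that $\mathcal{M}(\rho')=\Theta(\log^{1+c}n)=\Theta(\log^{1+c}\mathcal{M}(\rho))$. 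Since $k_1,k_2\in[\omega(\log n),n]$, both ensembles are computationally indistinguishable from $\mathcal{E}_{\haar}$ and hence, by transitivity, from each other, even given polynomially many copies. The key idea is that an \emph{efficient} protocol cannot tell whether it is being fed $\rho$ or $\rho'$, so its behaviour on the two inputs must coincide up to a computationally undetectable difference.

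Suppose, for contradiction, that an efficient stabilizer protocol $\mathcal{P}$ outputs a state close to $\ket{B}\bra{B}$ from $m$ copies of $\rho$, with $m=o\!\big(\mathcal{M}(\ket{B}\bra{B})/\log^{1+c}\mathcal{M}(\rho)\big)$. Running $\mathcal{P}$ instead on $\rho'^{\otimes m}$ yields some state $\tau$, which two bounds then control. On the one hand, monotonicity of $\mathcal{M}$ under stabilizer operations, together with the upper sandwich $\mathcal{M}\le O(\mathcal{R})$, the subadditivity of the log-robustness, and the fact that the substitute has $\mathcal{R}(\rho')=\Theta(\log^{1+c}n)$, gives $\mathcal{M}(\tau)\le O(\mathcal{R}(\rho'^{\otimes m}))\le O(m\,\mathcal{R}(\rho'))=O(m\log^{1+c}n)$, which is $o(\mathcal{M}(\ket{B}\bra{B}))$ under the assumed smallness of $m$. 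On the other hand, the lower sandwich $\mathcal{M}\ge\Omega(M_\alpha)$ together with the continuity of the stabilizer entropy under high fidelity implies that any state of such low magic must be bounded away in fidelity from the high-magic target, i.e.\ $\bra{B}\tau\ket{B}\le 1-\Omega(1)$.

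I would then close the argument by converting this fidelity gap into an efficient distinguisher, contradicting the indistinguishability of $\rho$ and $\rho'$. Because the target $\ket{B}$ is itself efficiently preparable, a CBO can prepare a fresh copy of $\ket{B}$ and perform a swap test against the output of $\mathcal{P}$. On the genuine input $\rho^{\otimes m}$ the protocol succeeds, the output is $\approx\ket{B}$, and the test accepts with probability $\approx 1$; on the substitute input $\rho'^{\otimes m}$ the output $\tau$ is bounded away from $\ket{B}$ and the test accepts with probability $\le 1-\Omega(1)$. This polynomial-time procedure distinguishes the two ensembles with non-negligible advantage, contradicting their pseudomagic (computational) indistinguishability. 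Hence $m\log^{1+c}n=\Omega(\mathcal{M}(\ket{B}\bra{B}))$, i.e.\ $m=\Omega\big(\mathcal{M}(\ket{B}\bra{B})/\log^{1+c}\mathcal{M}(\rho)\big)$, and the same chain goes through verbatim for $\mathcal{M}=\mathcal{F}_{\textrm{stab}},\xi,D_{\max}$ since each obeys the sandwich $\Omega(M_\alpha)\le\mathcal{M}\le O(\mathcal{R})$.

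I expect the main obstacle to be the quantitative magic-to-fidelity step: showing that low magic, $\mathcal{M}(\tau)=o(\mathcal{M}(\ket{B}\bra{B}))$, forces a \emph{constant} fidelity gap to $\ket{B}$, uniformly for all three monotones. This requires a robustness/continuity estimate for $M_\alpha$ (and its transfer through the sandwich) strong enough to yield an $\Omega(1)$ swap-test advantage rather than merely a negligible one. A secondary subtlety, which dictates the exponent $1+c$, is that the substitute $\rho'$ cannot be pushed below magic $\Theta(\log^{1+c}n)$: by \cref{lem:stab-bound} any ensemble with magic $O(\log n)$ fails to be pseudomagic, so the substitution — and hence the whole fooling argument — is only available down to the floor $\omega(\log n)$, which $\log^{1+c}n$ realizes for every constant $c>0$.
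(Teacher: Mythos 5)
Your overall strategy matches the paper's: assume an efficient protocol beats the bound, feed it a low-magic ensemble that is computationally indistinguishable from the high-magic one, cap the output's magic via monotonicity plus subadditivity of $\mathcal{R}$, and turn the resulting discrepancy into an efficient distinguisher. (Using two subset-phase ensembles rather than Haar vs.\ subset-phase is a cosmetic difference.) However, the step you yourself flag as the main obstacle --- converting $\mathcal{M}(\tau)=o(\mathcal{M}(\ket{B}\bra{B}))$ into a swap-test-visible fidelity gap $\bra{B}\tau\ket{B}\le 1-\Omega(1)$ --- is a genuine gap, and it does not close with the tools available. To run that argument you would need $\tau\approx\ket{B}\bra{B}$ to force $M_\alpha(\tau)\approx M_\alpha(\ket{B}\bra{B})$ \emph{and} $M_\alpha(\ket{B}\bra{B})=\Omega(\mathcal{M}(\ket{B}\bra{B}))$. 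The sandwich $\Omega(M_\alpha)\le\mathcal{M}\le O(\mathcal{R})$ gives only $M_\alpha(\ket{B}\bra{B})=O(\mathcal{M}(\ket{B}\bra{B}))$ --- the wrong direction --- so for $\mathcal{M}=\mathcal{R},\xi,D_{\max}$ nothing prevents the target from having large $\mathcal{M}$ but comparatively small $M_\alpha$, in which case a low-$M_\alpha$ output close to $\ket{B}$ yields no contradiction. Moreover, even when $\mathcal{M}=M_\alpha$ itself, the only continuity estimates in the paper are the Fannes-type bounds of \smref{sec:crypto}: for $\alpha>1$ the Lipschitz constant is $d^2\alpha/(\alpha-1)=\Theta(4^n)$, so a magic gap of order $\mathcal{M}(\ket{B}\bra{B})$ translates into a trace-distance gap of order $4^{-n}$, invisible to any polynomial-time test; only $M_1$ has the benign $\log(d^2-1)$ constant, and that inequality is proved for pure states while $\tau$ is generally mixed.

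The paper's proof avoids this entirely by never asking whether a \emph{single} output is close to $\ket{B}$. Instead it compares \emph{rates}: under the contradiction hypothesis, the protocol synthesizes $\Omega\bigl(\log^{1+c}(n)/\mathcal{M}(\ket{B}\bra{B})\bigr)$ copies of $\ket{B}$ per copy of a high-magic (Haar) input, while the resource bound $\mathcal{M}(\rho'^{\otimes m})\le m\,\mathcal{M}(\rho')$ caps the yield on a subset-phase input with $\abs{S}=2^{\log^{1+c/2}n}$ at $O\bigl(\log^{1+c/2}(n)/\mathcal{M}(\ket{B}\bra{B})\bigr)$ copies. The distinguisher counts swap-test successes across the many synthesized registers, and the $\log^{c/2}n$ multiplicative gap in counts is what gets detected --- no continuity of any magic measure is invoked. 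To repair your proof, replace the single-output swap test with this counting argument (or, alternatively, invoke an approximate-monotonicity statement such as the $D_{\max}$ bound quoted at the end of \smref{sec:distillproof} rather than a Fannes inequality).
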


The proof of this can be found in \smref{sec:distillproof}. This theorem shows how pseudomagic provides a complementary perspective to magic resource theory in determining the limits of magic state distillation protocols. If we have a target magic state $\ket{B}\bra{B}$ and a generic input state $\rho$, naive lower bounds from resource theory say that we require $\Omega({\mathcal{M}(\ket{B}\bra{B})}/{\mathcal{M}(\rho)})$ copies of $\rho$ to synthesize $\ket{B}\bra{B}$. This assumes that we can freely convert from nonstabilizerness in the input state to nonstabilizerness in the output state. However, once we take into account the computational efficiency of our synthesization protocol, \cref{thm:distill} intuitively means that if we do not know what the input state is, the `value' of the nonstabilizerness in the input state is reduced logarithmically. For instance, assume we have a generic resource state vector $\ket{\psi}$ with $\mathcal{M}({\psi})=O(n)$. Naive bounds tell us that we can distill at most $r= O(n)$ canonical magic states $\ket{T}$, since $\mathcal{M}(\ket{T}\bra{T}^{\otimes r}) \propto r$. However, \cref{thm:distill} imposes a \emph{much stricter} bound; if our synthesis protocol is efficient, it can synthesize at most $r = O(\log^{1+c}n)$ copies of $\ket{T}$. 

\emph{Independence of pseudomagic and pseudoentanglement: tighter distillation bounds.} The astute reader may notice that pseudomagic states are the \emph{same} states that display pseudoentanglement~\cite{aaronson_quantum_2023}. Given the long history of resource theories centered on these two quantities, it is natural to wonder if the existence of pseudomagical ensembles is in some way implied by the existence of pseudoentangled ensembles. We answer this question in the negative by showing that the entanglement and magic of a pseudorandom ensembles can be independently tuned. 
These findings suggest that neither pseudomagic nor pseudoentanglement is a generic feature of pseudorandom ensembles.%

\begin{figure}
    \centering
    \includegraphics[width=\columnwidth]{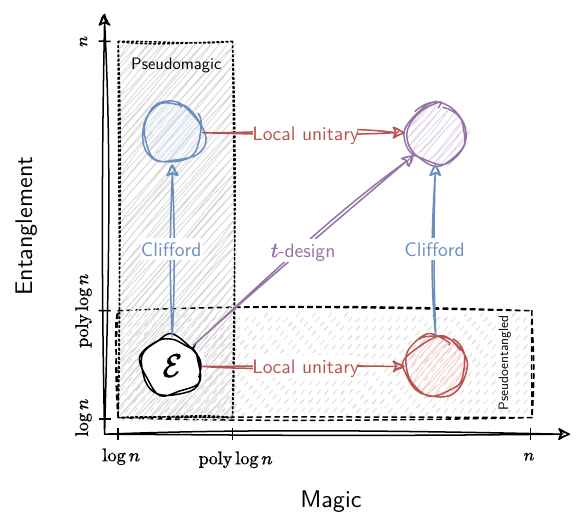}
    \caption{The pseudorandom ensemble of states $\mathcal{E}$ (subset phase states) we consider in this work exhibit both pseudomagic and pseudoentanglement. However, we show that we can \emph{independently} tune the entanglement and nonstabilizerness of these pseudorandom states via local and Clifford unitaries, respectively.}
    \label{fig:pseudotransitions}
\end{figure}

\begin{theorem}[Pseudomagic
and
pseudoentanglement are independent properties\label{thm:independence_weak}]
Given any two functions $f(n), g(n)\in[\omega(\log n),O(n)]$, there exists a pseudorandom ensemble whose states have entanglement $\Theta(f(n))$ and magic $\Theta(g(n))$ up to negligible failure probability. Moreover, there exists a pseudomagic pair with maximal gap and fixed entanglement $\Theta(f(n))$. Conversely, there exists a pseudoentangled pair with maximal gap and fixed magic $\Theta(g(n))$ (for any magic measure in Table~\ref{table:magicmonotone}).%
\end{theorem}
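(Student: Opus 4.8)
The plan is to isolate two elementary ``dials'' that each move one resource while leaving the other --- and the pseudorandomness --- untouched, and then to hit every target $(\Theta(f(n)),\Theta(g(n)))$ by turning the appropriate one. Dial~(i) is a layer of single-qubit (local) non-Clifford gates: since local unitaries never change entanglement across any bipartition, this freezes the entanglement while injecting magic. Dial~(ii) is a Clifford unitary: since $M_\alpha$ --- and, by Clifford invariance, every measure in Table~\ref{table:magicmonotone} --- is unchanged by Cliffords, this freezes the magic while injecting entanglement. The observation that legitimizes both dials is that pseudorandomness is preserved by any fixed efficient unitary $W$: if $\mathcal{E}\approx_c\mathcal{E}_\haar$ then $W\mathcal{E}\approx_c W\mathcal{E}_\haar=\mathcal{E}_\haar$ by Haar invariance, and likewise for $W^{\otimes t}$ with $t=\poly(n)$; by averaging this survives even when $W$ is drawn from an efficiently sampleable distribution.

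For the first statement I would start from the base subset phase ensemble $\{\ket{\psi_{f,S}}\}$ with $|S|=2^{k_0}$, $k_0=\min(f(n),g(n))\in[\omega(\log n),O(n)]$, which by \cref{thm:pseudomagic ensembles} together with the pseudoentanglement analysis of Ref.~\cite{aaronson_quantum_2023} is pseudorandom with both magic and entanglement equal to $\Theta(k_0)$. When $g(n)\ge f(n)$ I apply dial~(i) --- the interpolating version of the depth-$1$ layer already used in \cref{thm:pseudomagic ensembles}(b) --- on enough qubits to raise the magic from $\Theta(f(n))$ to $\Theta(g(n))$, with the entanglement frozen at $\Theta(f(n))$. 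When $f(n)>g(n)$ I apply dial~(ii): a randomized Clifford supported on $\Theta(f(n))$ qubits straddling the cut, raising the entanglement from $\Theta(g(n))$ to $\Theta(f(n))$ with $M_\alpha$ frozen at $\Theta(g(n))$. Either way pseudorandomness is inherited, producing a pseudorandom ensemble of entanglement $\Theta(f(n))$ and magic $\Theta(g(n))$.

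The two ``moreover'' clauses then follow by pushing one resource to its extremes and invoking transitivity (any two pseudorandom ensembles are mutually indistinguishable, being each indistinguishable from $\mathcal{E}_\haar$). For a pseudomagic pair at fixed entanglement $\Theta(f(n))$, I build two ensembles of entanglement $\Theta(f(n))$ with magics $\omega(\log n)$ and $\Theta(n)$, related by the fixed local layer of dial~(i); both are pseudorandom, so they form a pseudomagic pair of maximal gap. Symmetrically, for a pseudoentangled pair at fixed magic $\Theta(g(n))$ I take the first statement with $f(n)=\omega(\log n)$ and with $f(n)=n$, obtaining two pseudorandom ensembles of magic $\Theta(g(n))$ whose entanglements are $\omega(\log n)$ and $\Theta(n)$; since the magic is held fixed in $M_\alpha$ and the sandwich $\Omega(M_\alpha)\le\mathcal{M}\le O(\mathcal{R})$ equates every measure of Table~\ref{table:magicmonotone} to $\Theta(\log|S|)$ on these states, the gap is at fixed magic for all of them.

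The invariances, the upper bounds on the injected resource, and the inheritance of pseudorandomness are all immediate; the main obstacle is the pair of matching \emph{lower} bounds, namely that dial~(i) actually drives $M_\alpha$ up to $\Theta(g(n))$ and dial~(ii) actually drives the entanglement up to $\Theta(f(n))$ rather than stalling short. For the entanglement I would argue by concentration over a Haar-random Clifford on the $\Theta(f(n))$ straddling qubits, using that the Clifford group is a $3$-design to show the reduced state is, with high probability, near-maximally mixed on a rank-$2^{\Theta(f(n))}$ support, then fixing one good Clifford by the probabilistic method. For the magic I would track the Pauli spectrum $\{2^{-n}\tr^2(P\psi)\}_P$ through the local layer and show that non-Clifford rotations on $\Theta(g(n))$ disjoint qubits redistribute weight onto $2^{\Theta(g(n))}$ further Paulis, exactly as in the proof of \cref{thm:pseudomagic ensembles}. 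Reconciling these two concentration statements with the support bound (entanglement cannot exceed $\log|S|$ until Hadamards inside the Clifford have grown the support) is the real technical crux.
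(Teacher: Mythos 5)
Your proposal follows essentially the same route as the paper: a Clifford ``dial'' acting on $\Theta(f(n))$ qubits to raise entanglement at fixed magic, a layer of single-qubit non-Clifford gates on $\Theta(g(n))$ qubits to raise magic at fixed entanglement, a base subset phase ensemble with $\log|S|=\min(f(n),g(n))$, lower bounds via moment calculations plus Markov and the probabilistic method, and preservation of pseudorandomness under efficient unitaries together with transitivity for the two pairs. The only detail you gloss over that the paper handles explicitly is the Solovay--Kitaev approximation of the optimal local unitaries by a finite gate set (needed so the fixed $U^*$ found by the probabilistic method is efficiently implementable), and the paper's entanglement lower bound needs only the $2$-design property of the Clifford group rather than the $3$-design property you invoke.
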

We study the implications of this theorem in the context of distillation protocols. \cref{thm:distill} and Proposition 3.1 of Ref.~\cite{aaronson_quantum_2023} %
are similar in spirit: they demonstrate that resource-theoretic distillability limits for EPR pairs/magic states are a gross overestimate of what is achievable by any computationally efficient algorithm \emph{that is agnostic to its input state}. However, many magic state or entanglement distillation protocols are hand-crafted to work on particular {\em classes} of input states \cite{Bravyi_2005, Bennett96}. Do our lower bounds still hold up, then, if we knew something about the state we started with? For example, Ref. \cite{Bao_2022} asks whether one can distill magic from highly entangled states. We use \cref{thm:independence_weak} to show that prior knowledge of the nonstabilizerness of input states does not lead to increased efficiency in entanglement distillation.

\begin{theorem}[Prior knowledge of magic does not help entanglement distillation]
\label{entg_distillation_main}
Consider an entanglement distillation protocol that distills EPR pairs from states drawn from an ensemble $\{\psi_k\}$. Even if we are guaranteed that the states $\psi_k$ have magic $\Theta(g(n))$ for $g(n)\in[\omega(\log n),O(n)]$ with high probability, the protocol can distill at most $O(\log^{1+c} S(\rho))$ Bell pairs with high probability, where $S(\rho)$ is the von Neumann entanglement entropy of an input state $\rho$ and $c>0$.
\end{theorem}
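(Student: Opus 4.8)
The plan is to reduce this statement to the pseudoentanglement distillation bound in the spirit of Proposition~3.1 of Ref.~\cite{aaronson_quantum_2023}, using \cref{thm:independence_weak} to neutralize the magic promise. The conceptual crux is that \emph{knowing} the magic equals $\Theta(g(n))$ is useless precisely because, by \cref{thm:independence_weak}, magic and entanglement can be tuned independently: for the fixed value $\Theta(g(n))$ there exist ensembles of \emph{every} entanglement in $[\omega(\log n), O(n)]$ that all carry this magic, so the promise is equally consistent with a genuinely high-entanglement input and with a low-entanglement impostor. Prior knowledge of magic therefore gives the protocol no leverage to tell these apart.

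Concretely, first I would invoke \cref{thm:independence_weak} to construct a pseudoentangled pair $(\mathcal{E}_{\mathrm{hi}}, \mathcal{E}_{\mathrm{lo}})$ in which every state carries magic $\Theta(g(n))$ up to negligible failure probability, while the entanglement entropy across a fixed cut is $S(\rho) = \Theta(n)$ for $\mathcal{E}_{\mathrm{hi}}$ and $S_{\mathrm{lo}} = \Theta(\log^{1+c} n) = \Theta(\log^{1+c} S(\rho))$ for $\mathcal{E}_{\mathrm{lo}}$; both choices lie in $[\omega(\log n), O(n)]$, and the two ensembles are computationally indistinguishable even given $\poly(n)$ copies. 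I take the promised ensemble $\{\psi_k\}$ to be $\mathcal{E}_{\mathrm{hi}}$, so that $S(\rho) = \Theta(n)$.

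Next, suppose for contradiction that an efficient protocol $\mathcal{P}$, equipped with the promise that the magic is $\Theta(g(n))$, outputs a state that is $\varepsilon$-close to $\ket{\Phi^+}^{\otimes m}$ with $m = \omega(\log^{1+c} n)$ Bell pairs, with high probability over $\rho \sim \mathcal{E}_{\mathrm{hi}}$. I would then build an efficient distinguisher: on an unknown input $\sigma$ drawn from either ensemble, run $\mathcal{P}$ and estimate the fidelity of its output with the target stabilizer state $\ket{\Phi^+}^{\otimes m}$ by measuring the $2m$ stabilizer generators of $\ket{\Phi^+}^{\otimes m}$, which is efficient. Because distillation is an LOCC (or, more generally, entanglement-non-increasing) map across the cut, and entanglement entropy upper-bounds the distillable entanglement of a pure state, the output on $\sigma \sim \mathcal{E}_{\mathrm{lo}}$ can contain at most $S_{\mathrm{lo}} = \Theta(\log^{1+c} n) \ll m$ Bell pairs, so the fidelity test fails on $\mathcal{E}_{\mathrm{lo}}$ but passes on $\mathcal{E}_{\mathrm{hi}}$. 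This is a non-negligible distinguishing advantage, contradicting computational indistinguishability. Hence $m = O(\log^{1+c} n) = O(\log^{1+c} S(\rho))$, and the magic promise conferred no advantage.

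The hard part will be making the step ``output entanglement is bounded by input entanglement'' rigorous in the computational setting and matching it to an efficient \emph{certification} test. One must fix the bipartition and the entanglement measure governing distillation, argue that $\mathcal{P}$ as an entanglement-monotone operation cannot raise the output's entanglement across that cut above $S_{\mathrm{lo}}$, and ensure the number of genuine Bell pairs is efficiently certifiable — which is exactly why testing fidelity to the stabilizer target $\ket{\Phi^+}^{\otimes m}$, rather than estimating entanglement directly, is the right move. A secondary point is robustness to multiple copies: since pseudorandom states remain indistinguishable under $\poly(n)$ copies, the distinguisher survives protocols consuming many inputs, and the concentration of entanglement in $\mathcal{E}_{\mathrm{lo}}$ guaranteed by \cref{thm:independence_weak} upgrades the conclusion to hold with high probability.
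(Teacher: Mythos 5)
Your proposal is correct and follows essentially the same route as the paper: the paper's own proof simply invokes the existence of a pseudoentangled pair with maximal gap and fixed magic $\Theta(g(n))$ (Theorem S15/\cref{thm:independence_weak}) and then defers to the distillation-to-distinguisher reduction of Proposition 3.1 of Ref.~\cite{aaronson_quantum_2023}, which is exactly the contradiction argument you spell out (run the protocol, certify the output against $\ket{\Phi^+}^{\otimes m}$ via its stabilizer generators, and use that LOCC cannot raise entanglement above that of the low-entanglement impostor). The only point to keep an eye on is the per-copy accounting when the protocol consumes $\poly(n)$ copies, which you correctly flag and which is handled in the cited reference.
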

An identical strengthening of the black-box magic state distillation protocol in \cref{thm:distill} is also possible, using a similar proof technique (see the Supplemental Material). Our results say that for states with super-logarithmic entanglement (i.e., beyond MPS), magic distillation must be highly inefficient. Beyond strengthening distillation results, \cref{thm:independence_weak} allows us to significantly strengthen a wide array of no-go results for any entanglement (resp., nonstabilizerness) manipulation or detection task to cases where we even have a priori knowledge about states' nonstabilizerness (resp., entanglement).

{\em Conclusions and outlook.} In this work, we introduced the concept of pseudomagic states, providing an insightful expansion of the magic resource theory. We first established the theoretical foundation for pseudomagic states. The core of this framework is the stabilizer entropy, which is unique among various magic measures (see \cref{table:magicmonotone}) as it does not require an impractical minimization procedure to be evaluated either theoretically or experimentally, and lower bounds all the other magic measures. These features allowed us to demonstrate the existence of pseudomagic ensembles with a large magic gap $\log n$ vs. $n$, with respect to all the genuine magic monotones in \cref{table:magicmonotone}. We refer to the Supplemental Material for a technical discussion regarding the role of the stabilizer entropy in pseudomagic.

We investigated the implications of pseudomagic for quantum scramblers, quantum cryptography, and magic state distillation. We proved the existence of states, preparable with a non-scrambling unitary, that appear to have been generated by a scrambling one, concluding that a computationally bound observer cannot determine whether a process is scrambling solely by the generated states. More general questions concerning the existence of \textit{pseudorandom unitaries}~\cite{haug2023pseudorandom} and its consequence for quantum chaos theory will be the subject of future research. Another exciting avenue to explore is the implication that pseudomagic has for many-body physics, particularly in light of the recent studies of magic-state resource theory~\cite{oliviero_magicstate_2022,lami_quantum_2023,haug_quantifying_2023, chen_magic_2022,odavic_complexity_2022,rattacaso_stabilizer_2023,tarabunga2023manybody,chen2023magic} and quantum complexity in many-body systems \cite{Yunger_Halpern_2022}.
Finally, our exploration of Computationally Bounded Observers has revealed the subtleties of quantum phenomena under computational constraints. This perspective, guided by computational limitations, deepens our understanding of quantum systems and underscores the necessity of considering computational constraints in quantum research, opening avenues for future investigations at the intersection of quantum information and computation.

\emph{Acknowledgements.} The authors thank Anurag Anshu, J.~Pablo Bonilla, Matthias Caro, Bill Fefferman, Jonas Haferkamp, Marcel Hinsche, Marios Ioannou, Nazl\i \ U\u{g}ur K\"oyl\"uo\u{g}lu, Salvatore F.~E.~Oliviero,  Kunal Sharma, Ryan Sweke for insightful discussions, 
We also thank William Kretschmer for suggesting the proof of \cref{thm:crypto}, and Joseph Slote for first using the term ``pseudomagic" and discussing different magic measures. We thank the DFG (CRC 183, FOR 2724), the BMBF (Hybrid, RealistiQ, QSolid), the ERC (DebuQC), and the Munich Quantum Valley (K-8) for funding. SFY would like to thank the NSF for funding via the Cornell HDR institute and the CUA PFC.

\let\oldaddcontentsline\addcontentsline%
\renewcommand{\addcontentsline}[3]{}%
\medskip

\bibliographystyle{apsrev4-2}

\bibliography{references}

\let\addcontentsline\oldaddcontentsline

\onecolumngrid
\newpage

\setcounter{secnumdepth}{2}
\setcounter{equation}{0}
\setcounter{figure}{0}
\renewcommand{\thetable}{S\arabic{table}}
\renewcommand{\theequation}{S\arabic{equation}}
\renewcommand{\thefigure}{S\arabic{figure}}
\titleformat{\section}[hang]{\normalfont\bfseries}
{Supplemental Material \thesection:}{0.5em}{\centering}
\newtheorem{definitionS}{Definition S\ignorespaces}
\newtheorem{theoremS}{Theorem S\ignorespaces}
\newtheorem{corollaryS}{Corollary S\ignorespaces}

\newtheorem{lemmaS}{Lemma S\ignorespaces}
\newtheorem{claimS}{Claim S\ignorespaces}
\newtheorem{remarkS}{Remark S\ignorespaces}

\crefformat{theoremS}{Theorem #2S#1#3}
\crefformat{definitionS}{Definition #2S#1#3}
\crefformat{lemmaS}{Lemma #2S#1#3}
\crefformat{remarkS}{Remark #2S#1#3}
\crefformat{corollaryS}{Corollary #2S#1#3}

\begin{center}
\textbf{\large Supplemental Material}
\end{center}
\setcounter{equation}{0}
\setcounter{figure}{0}
\setcounter{table}{0}

\renewcommand{\theequation}{S\arabic{equation}}
\renewcommand{\thefigure}{S\arabic{figure}}
\renewcommand{\bibnumfmt}[1]{[S#1]}
\newtheorem{thmS}{Theorem S\ignorespaces}

In this supplemental material, we provide proofs, additional details supporting the claims in the main text, as well as additional applications of pseudomagic.

\section{Preliminaries}\label{sec:prelim}
There are many ways to quantify magic. In this work, we consider only \emph{monotones}; that is, those measures that satisfy the following properties:
\begin{definitionS}[Magic monotone]
    A magic monotone is a scalar function on quantum states, that has the following properties:
    \begin{itemize}
        \item $\mathcal{M}(\sigma)=0$ if and only if $\ket{\sigma}\in\Sigma$,
        \item $\mathcal{M}(\mathcal{S}(\psi))\le \mathcal{M}(\psi)$ for every stabilizer operation $\mathcal{S}$.
    \end{itemize}
\end{definitionS}
The magic measures $\mathcal{M}$ we consider in our study obey $(i)$ sub-additivity $\mathcal{M}({\psi}\otimes{\phi})\le \mathcal{M}({\psi})+\mathcal{M}({\phi})$, $(ii)$ are bounded $0\le \mathcal{M}\le n+o(1)$~\footnote{More precisely, the robustness of magic~\cite{howard_application_2017} is upper bounded by $\mathcal{R}\le n+O(2^{-n})$ as proven in Ref.~\cite{liu_manybody_2022}.}, which will be assumed throughout this work.
\subsection{Properties of stabilizer R\'{e}nyi entropies}

In this work we will focus on the measure of $\alpha$-stabilizer entropies. Let us now introduce this family of measures formally. We observe that given a $n$-qubit (pure or mixed) state $\psi$, we can define the probability distribution
\begin{equation}\label{eq:prob}
\Xi_{\psi}\coloneqq \{d^{-1}\tr^2(P\psi)\pur^{-1}(\psi) \mid P\in\mathbb{P}_n\},
\end{equation}
where $\mathbb{P}_n$ is the Pauli group on $n$ qubits and $\pur(\psi)\coloneqq \tr(\psi^2)$ its purity. The stabilizer entropy $M_{\alpha}$ is defined as (for every $0\le\alpha<\infty$)
\begin{equation}
    M_{\alpha}(\psi)\coloneqq S_{\alpha}(\Xi_{\psi})+S_{2}(\psi)-\log d ,
\end{equation}
where $S_{\alpha}(\Xi_{\psi})$ is the classical $\alpha$-R\'enyi entropy of the probability distribution in \cref{eq:prob}, while $S_{2}(\psi)\coloneqq -\log\pur(\psi)$ is the log-purity of $\psi$. To be concrete, we list three explicit expressions for $M_\alpha$ for common choices of $\alpha$ (we also assume $\psi$ is pure for simplicity),
\begin{equation}
    M_\alpha(\psi) = \begin{cases}
                         \log(\abs{\{P \in \mathbb{P}_n \mid \tr(P \psi) \neq 0 \}}) - \log d & \text{for $\alpha=0$} ,\\
                         -\sum_{P}\frac{\tr^2(P\psi)}{d}\log \tr^2(P\psi) &\text{for $\alpha=1$}, \\
                         -\log(\sum_P \frac{\tr^4(P \psi)}{d}) &\text{for $\alpha=2$.}
    \end{cases} \label{eq:malpha012}
\end{equation}
The first two cases of $\alpha=0$ and $1$ are manifestly different from the definition $M_{\alpha}({\psi})=\frac{1}{1-\alpha}\log \frac{1}{d}\sum_{P}\tr^{2\alpha}(P\psi)$ presented in Equation (1) of the main text. Despite the formal difference, the $\alpha=0$ and $1$ cases are not discontinuous exceptions to the definition in Equation (1). Rather, they follow from the definition of the classical R\'enyi entropies: $\lim_{\alpha \to 0} S_\alpha(\Xi_\psi) = \log \abs{\text{supp} \ \Xi_\psi}$, where $\text{supp} \ \Xi_\psi$ is the support of the distribution $\Xi_\psi$. Similarly, $\lim_{\alpha \to 1} S_\alpha(\Xi_\psi) = -\sum_P \Xi_\psi(P) \log(\Xi_\psi(P))$ (i.e., the Shannon entropy of $\Xi_\psi$). Therefore, despite the fact that the cases $\alpha=0$ and $\alpha=1$ may superficially seem to be special exceptions, in what follows (unless otherwise specified), we will be studying $M_\alpha$ for any $\alpha$.

Let us list important properties of the $\alpha$-stabilizer entropies.
\begin{itemize}
    \item $M_{\alpha}(\sigma)=0$ if and only if $\psi=\sum_{P\in G_{\sigma}}\phi_P P$, where $G_{\sigma}$ is a commuting subgroup of $\mathbb{P}$ and $\phi_P\in\{\pm1\}$. Notice that $\sigma$ is pure if and only if $\abs{G_{\sigma}}=2^n$. The zeros of the stabilizer entropy (i.e., the states $\psi$ for which $M_\alpha(\psi)=0$) coincides with the set of pure stabilizer state plus mixed stabilizer states defined above. However, $M_{\alpha}(\psi)>0$ when $\psi$ is a convex combination of stabilizer states: $\psi = \sum_{\sigma \in \Sigma} p_\sigma\st{\sigma}$, where $p_\sigma \geq 0$. Notice that the zeros of the stabilizer entropy do not coincide with the zeros of other magic measures for mixed states.

    \item For every Clifford unitary operator $C$, then $M_{\alpha}(C^{\dag}\psi C)=M_{\alpha}(\psi)$;
    \item $0\le M_{\alpha}(\psi)<n$, where the upper bound is a strict inequality.
    \item They feature a hierarchy: $M_{\alpha}(\psi)<M_{\alpha'}(\psi)$ for any $\psi$ if $\alpha>\alpha'$.
    \item They are additive: $M_{\alpha}(\psi\otimes \phi)=M_{\alpha}(\psi)+M_{\alpha}(\phi)$.
    \item For pure states ${\psi}$, they lower bound the stabilizer nullity: $M_{\alpha}({\psi})\le\nu({\psi})$.
    \item For pure states and $\alpha>1/2$, they lower bound the robustness of magic $M_{\alpha}({\psi})<2\mathcal{R}({\psi})$~\cite{heinrich_robustness_2019}.
    \item For pure states and $\alpha>1$, they lower bound the stabilizer fidelity: $M_{\alpha}({\psi})<\frac{2\alpha}{\alpha-1}\mathcal{F}_{\textrm{stab}}({\psi})$~\cite{haug2023stabilizer}.
    \item Defining $T$-count as smallest number $t(\psi)$ of $T$-gates used to prepare the state vector $\ket{\psi}$ from a stabilizer state, the stabilizer entropies lower bound the $T$-count: $M_{\alpha}({\psi})\le t(\psi)$.
    \item Using the swap trick, when $\alpha$ is an integer greater than 1, we can also write $M_\alpha$ as follows:
    \begin{equation}
        M_{\alpha}({\psi})=\frac{1}{1-\alpha}\log \tr(\Pi^{(2\alpha)}\psi^{\otimes 2\alpha})\label{eq:eq111},
    \end{equation}
    with $\Pi^{(2\alpha)} \coloneqq d^{-1}\sum_{P \in \mathbb{P}_n}P^{\otimes 2\alpha}$ being a Hermitian operator. Thanks to this property, stabilizer entropies can be measured on quantum computers with both single copy protocols such as randomized measurements~\cite{oliviero2022MeasuringMagicQuantum}, as well as multi-copy protocols~\cite{haug_scalable_2023,haug2023stabilizer}.
\end{itemize}

{\subsection{A remark on the role of stabilizer entropy in defining pseudomagic}}
Stabilizer entropies serve as pure-state magic monotones for every integer $\alpha\ge 2$, while they demonstrably violate the monotonicity condition for $\alpha<2$, see~\cite{leone2024stabilizermonotone}. A pure-magic monotone implies that, for deterministic stabilizer protocols—those mapping pure states to pure states—the quantity remains nonincreasing. Given that stabilizer entropies are inherently defined for pure states, this property is the natural one to consider. Furthermore, in Ref.~\cite{leone2024stabilizermonotone}, an extension of stabilizer entropy to mixed states, which adheres to monotonicity under general stabilizer protocols, is also presented.

Besides the prominent role of stabilizer entropies among magic measures (being analytically computable and experimentally measurable), the stabilizer entropy has proven to be a fundamental tool for defining and exploring pseudomagic. For instance, a key property we make frequent use of, especially when developing a general framework for pseudomagic, was that several other popular magic monotones are lower bounded by the stabilizer entropy, hence qualified as useful pseudomagic measures.

The use of stabilizer entropy for pseudomagic is of utmost importance for quantum highlights in the context of quantum scrambling theory (given its strict relationship to OTOCs) and in quantum cryptographic implications (thanks to the Fannes inequality for $M_1$). 

Moreover, stabilizer entropies play a key role in defining pseudomagic. To be more concrete, all the other monotones, except for stabilizer entropy, involve a minimization procedure that renders them computationally and experimentally intractable to measure. Consequently, they are unsuitable for proving a pseudomagic gap, which asks for measurement using multiple copies to provide the lower bound $\omega(\log n)$ for pseudomagic ensembles. Thanks to the fact that stabilizer entropy $(i)$ provides a lower bound for all the other magic monotones and $(ii)$ can be experimentally measured, we are able to demonstrate the existence of states with an extensive and \textit{tight} gap of $\log n$ versus $n$ in magic, which holds true for all the monotones discussed in this paper. This fact is reflected by \cref{cor:3-pseudomagic}.%

\subsection{Pseudorandom states}
The other key idea we use throughout this work is the notion of pseudorandom states, which are ensembles of states that are disguised as Haar random.

\begin{definitionS}[Pseudorandom states~\cite{jirandom2018}]\label{def:pseudorandom}
An ensemble of $n$-qubit quantum states $\left\{\ket{\psi_{k}} \mid k=1,\ldots,2^m\right\}$ (where $m\in\mathbb{N}$) is said to be pseudorandom, if:
\begin{itemize}
    \item Given $k$, the corresponding $\ket{\psi_{k}}$ is preparable in polynomial time.
    \item For any efficient quantum algorithm $\mathcal{A}$,
    \begin{equation}
        \abs{\Pr[\mathcal{A}\big(\ket{\psi_{k}}^{\otimes \poly(n)}\big)=1] - \Pr[\mathcal{A}\big(\ket{\phi}^{\otimes \poly(n)}\big)=1]} = \negl(n),
    \end{equation}
    with high probability for $k \sim \textnormal{Uniform}(2^m)$ and $\phi \sim \haar$. In other words, polynomially many copies of $\ket{\psi_k}$ are computationally indistinguishable from polynomially many copies of a Haar random state.
\end{itemize}
\end{definitionS}
The subset phase states %
are a class of states that
are
capable of satisfying \cref{def:pseudorandom} under certain settings. Specifically, they are ensembles of states
\begin{equation}
    \ket{\psi_{f,S}} = \frac{1}{\sqrt{\abs{S}}}\sum_{x \in S} (-1)^{f(x)} \ket{x}.
\end{equation}
indexed by a function $f: \{0,1\}^n \to \{0,1\}$ and subsets $S \subseteq \{0,1\}^n$.
If $\abs{S} = 2^{\omega(\log n)}$ and $f$ is a truly random function, then the subset phase states are statistically indistinguishable from the Haar random states (i.e., polynomially many copies of $\ket{\psi_{f,S}}$ are close in trace distance to polynomially many copies of a Haar random state)~\cite{aaronson_quantum_2023}. However, if $f$ is randomly chosen, the corresponding subset phase states are not efficiently preparable. If, on the other hand, $f$ and $S$ are chosen pseudorandomly (using quantum-secure pseudorandom functions and subsets), $\ket{\psi_{f,S}}$ \emph{are} efficiently preparable. Furthermore, although the resulting ensemble of states are statistically distinguishable from Haar random states, they are \emph{computationally indistinguishable}. That is, when $f$ and $S$ are chosen pseudorandomly and $\abs{S} = 2^{\omega(\log n)}$, $\ket{\psi_{f,S}}$ qualify as pseudorandom states under \cref{def:pseudorandom}~\cite{aaronson_quantum_2023}.

\subsubsection*{Experimental feasibility} 
We propose a potential method that uses Rydberg arrays to realize the subset phase states $\ket{\psi_{f,S}}$. This illustrates that pseudomagic is not merely an artificial theoretical construction: it can be implemented in the laboratory. The first step of our construction provides a concrete implementation for the black-box oracles $f$ described in Ref.~\cite{aaronson_quantum_2023}. Assuming $\abs{S} = 2^k$ ($k \leq n$), for any function $f: \{0,1\}^k \to \{0,1\}$, one can associate a hypergraph over $k$ sites with hyperedges $E$ such that $f(x) = \sum_{\{i_1,\ldots,i_m\} \in E} \prod_{\alpha=1}^{m}x_{i_{\alpha}}$. One can attach the appropriate phases to the bit string states by simply applying multi-qubit CZ gates on sites are connected by hyperedges \cite{liu2022many} as
\begin{equation}
\prod_{\{i_1,\ldots,i_m\} \in E} C^{m-1} Z_{i_1 \ldots i_m} \ket{+}^{\otimes k} = \sum_{x \in \{0,1\}^k} (-1)^{f(x)} \ket{x}.
\end{equation}
Finally, by concatenating $\ket{0}^{\otimes (n-k)}$ and applying a pseudorandom permutation of computational basis states~\cite{aaronson_quantum_2023}, we arrive at the desired $\ket{\psi_{f,S}}$. This protocol has quasipolynomial time complexity, as the hypergraph representation of $f$ can have a quasipolynomial number of edges, so we anticipate that for near-term qubit counts $n$, the construction of these states will be feasible. In particular, neutral-atom platforms such as Rydberg arrays, are very well-suited for implementing this construction for two key reasons. First, these platforms offer the advantage of being able to quickly move qubits around, allowing for the realization of the required hypergraph connectivity. Secondly, these platforms natively support the implementation of multi-qubit CZ gates \cite{levine2019parallel}. This streamlines the creation of subset phase states $\ket{\psi_{f,S}}$ using a native set of gates, making the implementation straightforward and noise-resilient. 

\subsection{A remark about the need for computational assumptions}
Having defined the subset phase states that have been introduced in Ref.~\cite{aaronson_quantum_2023}, we are in position to highlight the computational assumptions made in this paper and their implications in the various theorems presented herein. While Ref.~\cite{aaronson_quantum_2023} showed that subset phase states can be prepared efficiently assuming the existence of quantum-secure pseudorandom functions, we observe most of our applications (except for cryptography) do not require efficient preparability of pseudomagic states. For all such applications, we can instantiate them with pseudomagic states that exist {\em unconditionally} (i.e. without need for computational complexity assumptions): simply pick $f$ and $S$ to be a truly random function and subset, respectively, in which case the subset phase states are statistically indistinguishable from Haar random states (Theorem 2.1 of Ref.~\cite{aaronson_quantum_2023}).

\section{Bounds to pseudomagic}\label{sec:pseudomagicproofs}

\subsection{Proof of Lemma 1}\label{subsec:stabentropyfact1}
First, we will state a result that lower bounds the magic of any state ensemble that is computationally indistinguishable from Haar random states. This can be of independent interest to the reader. This implies a lower bound on the magic of any state ensemble that is computationally indistinguishable from a high magic ensemble, as we remark after the proof. This is because the only property we need of Haar random states, for the distinguisher to work, is that they have high magic.

\begin{lemmaS}[Bounds to R\'enyi entropies (Lemma 1, repeated)]%
    \label{lem:stab-bound-rep}
    Let $\mathcal{E}$ be an ensemble of pseudorandom states (i.e., computationally indistinguishable from Haar random states, which is an ensemble with high magic). For $\ket{\psi} \sim \mathcal{E}$,  one has that with high probability
    \begin{equation}
        M_{\alpha}({\psi})=\begin{cases}
                               \omega\left(\frac{\log n}{\alpha-1}\right) \ &\text{for odd $\alpha$},\\
                               \omega\left(\frac{\log n}{\alpha}\right) \ &\text{for even $\alpha$},
        \end{cases}
    \end{equation}
    for any $\alpha=O(\poly n)$.
\end{lemmaS}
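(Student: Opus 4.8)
The plan is to prove the contrapositive: an ensemble whose states carry too little stabilizer entropy can be efficiently told apart from Haar-random states, contradicting pseudorandomness. Write $W_\alpha(\psi):=\tr(\Pi^{(2\alpha)}\psi^{\otimes 2\alpha})$ for the $2\alpha$-copy stabilizer purity of \cref{eq:eq111}, so that $W_\alpha(\psi)=2^{(1-\alpha)M_\alpha(\psi)}$ (using $d=2^n$). The object I would exploit is that, for integer $\alpha$, $W_\alpha$ can be written as the mean of a bounded ($[0,1]$-valued) outcome of a Bell-type measurement on $2\alpha$ copies; hence for $\alpha=O(\poly n)$ an empirical average over $\poly(n)$ repetitions estimates $W_\alpha(\psi)$ to additive precision $n^{-c}$ using only $\poly(n)$ copies and gates, for each fixed constant $c$. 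The boundedness is essential — directly measuring the observable $\Pi^{(2\alpha)}$ is useless, since $\norm{\Pi^{(2\alpha)}}^2=2^{2n}$ forces exponential sample complexity.

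Next I would fix the Haar baseline. From the exact moment $\E_{\haar}[\tr^{2\alpha}(P\psi)]=(2\alpha-1)!!/\prod_{j=0}^{\alpha-1}(d+1+2j)$ for non-identity $P$, together with $\tr^{2\alpha}(I\psi)=1$, summing over $\mathbb{P}_n$ gives $\E_{\haar}[W_\alpha]=\tfrac1d\big(1+(d^2-1)(2\alpha-1)!!/\prod_{j=0}^{\alpha-1}(d+1+2j)\big)=2^{-\Omega(n)}$ for every $\alpha=O(\poly n)$. Since $W_\alpha\ge 0$, Markov's inequality then yields $W_\alpha(\phi)\le 2^{-n/2}$ with overwhelming probability for $\phi\sim\haar$; equivalently Haar states have $M_\alpha=\Omega(n/\alpha)$, which is exactly the ``high magic'' property the distinguisher relies on.

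Combining these, suppose for contradiction that $M_\alpha(\psi)\le \tfrac{c\log n}{\alpha-1}$ for odd $\alpha$ (resp.\ $\le\tfrac{c\log n}{\alpha}$ for even $\alpha$) with non-negligible probability over $\psi\sim\mathcal{E}$. Then $W_\alpha(\psi)=2^{(1-\alpha)M_\alpha(\psi)}\ge n^{-c}$ (resp.\ $\ge n^{-c(\alpha-1)/\alpha}$), which is $\ge 1/\poly(n)$, whereas $W_\alpha(\phi)\le 2^{-n/2}$ for Haar states. Estimating $W_\alpha$ to precision $\tfrac12 n^{-c}$ and thresholding therefore distinguishes $\mathcal{E}$ from $\haar$ with non-negligible advantage and $\poly(n)$ copies — a contradiction. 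As this holds for every fixed constant $c$ (precision $n^{-c}$ costs only $n^{O(c)}=\poly(n)$ copies), a standard diagonalization over $c$ promotes the resulting family of $\Omega$-type bounds into the single statements $(\alpha-1)M_\alpha(\psi)/\log n\to\infty$ (odd) and $\alpha M_\alpha(\psi)/\log n\to\infty$ (even) with high probability, i.e.\ $M_\alpha=\omega(\log n/(\alpha-1))$ and $\omega(\log n/\alpha)$.

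The hard part — and the source of the odd/even split — is certifying the estimator's precision with the correct $\alpha$-dependence. The single-shot variance is governed by $(\Pi^{(2\alpha)})^2=d^{-2}\sum_{P,Q}(PQ)^{\otimes 2\alpha}$; writing $PQ=\omega_{P,Q}R$ with $R$ a phase-free Pauli, the anticommuting pairs contribute a phase $\omega_{P,Q}^{2\alpha}=i^{2\alpha}=(-1)^{\alpha}$, so the second moment, and hence the number of copies needed to reach a target additive precision, is parity-dependent; this is what shifts the attainable threshold between $\log n/(\alpha-1)$ and $\log n/\alpha$. The remaining routine points are verifying that the Bell-type protocol genuinely realizes $W_\alpha$ as a bounded-variance estimator for every integer $\alpha=O(\poly n)$, and the concentration (here just Markov) needed to convert the expectation statements into high-probability ones over both ensembles.
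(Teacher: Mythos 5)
Your overall strategy---deriving a contradiction by building an efficient distinguisher that estimates $W_\alpha(\psi)=\tr(\Pi^{(2\alpha)}\psi^{\otimes 2\alpha})=2^{(1-\alpha)M_\alpha(\psi)}$ and exploits the gap between $1/\poly(n)$ (for allegedly low-magic pseudorandom states) and $2^{-\Omega(n)}$ (for Haar states)---is the same as the paper's, and your Haar baseline via exact moments plus Markov is an acceptable substitute for the paper's use of Levy's lemma. The genuine gap is in the step you label routine: the claim that $W_\alpha$ is realized as the mean of a $[0,1]$-valued measurement outcome on $2\alpha$ copies \emph{for every} integer $\alpha=O(\poly n)$. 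Your own algebra shows why this fails for even $\alpha$: since anticommuting pairs contribute $\omega_{P,Q}^{2\alpha}=(-1)^{\alpha}$, one finds $(\Pi^{(2\alpha)})^2=\id^{\otimes 2\alpha}$ for odd $\alpha$ (so $\Pi^{(2\alpha)}$ is a Hermitian unitary and a Hadamard test outputs a bit with mean $(1+W_\alpha)/2$, exactly the paper's distinguisher), but $(\Pi^{(2\alpha)})^2=d\,\Pi^{(2\alpha)}$ for even $\alpha$, i.e., $\Pi^{(2\alpha)}$ is $d$ times a projector. The natural bounded measurement then has mean $W_\alpha/d=2^{-\Omega(n)}$, so reaching additive precision $n^{-c}$ on $W_\alpha$ costs exponentially many copies. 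This is not a constant-factor variance issue that merely ``shifts the attainable threshold''; it is a loss of boundedness by a factor of $d$, and your proposed mechanism (parity-dependent second moment) does not in fact yield the stated $\omega(\log n/\alpha)$ bound for even $\alpha$---if a bounded estimator existed for even $\alpha$, your own argument would give the \emph{stronger} bound $\omega(\log n/(\alpha-1))$ there too.

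The repair is the one the paper uses and which your proposal omits: run the distinguisher only for odd $\alpha$, obtaining $M_\alpha(\psi)=\omega(\log n/(\alpha-1))$ in that case, and then handle even $\alpha$ purely through the monotonicity $M_\alpha(\psi)\ge M_{\alpha+1}(\psi)$, which transfers the odd bound at $\alpha+1$ to $M_\alpha(\psi)=\omega(\log n/((\alpha+1)-1))=\omega(\log n/\alpha)$. That hierarchy step, not the estimator's variance, is where the denominator $\alpha$ in the even case comes from.
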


\begin{proof}
    It follows from the definition of R\'enyi entropies that for any $\ket{\psi}$, the exponential of the R\'enyi entropy is the expectation value 
    \begin{equation} \label{eq:renU}
    \tr(\Pi^{(2\alpha)}\psi^{\otimes 2\alpha}) = 2^{(1-\alpha) M_{\alpha}({\psi})}
    \end{equation}
    of the Hermitian unitary operator $\Pi^{(2\alpha)}=d^{-1}\sum_{P}P^{\otimes 2\alpha}$.
    From \cref{lem:levy}, we see on the one hand that
    \begin{equation}\label{eq:Haar}
    \Pr_{\ket{\psi}\sim \haar}[M_{\alpha}({\psi}) \geq n\beta/\alpha]  \approx \Pr_{\ket{\psi}\sim \haar}[\tr(\Pi^{(2\alpha)}\psi^{\otimes 2\alpha}) \leq 2^{-n\beta}] \geq 1-2^{-C_1/\alpha^2 2^{(1-2\beta)n}}.
    \end{equation}
    On the other hand, for $\ket{\psi} \sim \mathcal{E}$, assume towards contradiction that with high probability
    \begin{equation}\label{eq:PRS}
    \tr(\Pi^{(2\alpha)}\psi^{\otimes 2\alpha}) = \Omega\left(\frac{1}{\poly(n)}\right).
    \end{equation}
    The difference in the values of $\tr(\Pi^{(2\alpha)}\psi^{\otimes 2\alpha}) $ depending on whether $\psi$ is Haar-random or from $\mathcal{E}$ provides a route to proving our lower bound on $M_{\alpha}$ for pseudorandom states.

    Let us set $\alpha$ to be odd first. Then we will show there exists an efficient distinguisher $\mathcal{A}$ that, given $O (\alpha \poly(n))$ copies, distinguishes whether an unknown $\ket{\psi}$ is drawn Haar-randomly or from $\mathcal{E}$. This distinguisher builds on an observation of Ref.~\cite{haug_efficient2023}, where the authors show that to measure $\tr(\Pi^{(2\alpha)}\psi^{\otimes 2\alpha})$ (with odd $\alpha$), one simply needs to run a Hadamard test with the unitary $\Pi^{(2\alpha)}$ and estimate the acceptance probability. However, we don't even need to estimate $\tr(\Pi^{(2\alpha)}\psi^{\otimes 2\alpha})$; our efficient distinguisher $\mathcal{A}$ simply runs the above-mentioned Hadamard test with $k=2\alpha$ copies of $\psi$ and outputs 1 if the test outputs 1, which happens with probability
    \begin{equation}
\frac{1+\tr(\Pi^{(2\alpha)}\psi^{\otimes 2\alpha})}{2}.
    \end{equation}
    Then from \cref{eq:Haar,eq:PRS},
    \begin{equation}
        \begin{gathered}
            \abs{\Pr_{\ket{\psi}\sim \mathcal{E}}[\mathcal{A}(\ket{\psi}^{\otimes k}) = 1]  - \Pr_{\ket{\psi}\sim \haar}[\mathcal{A}(\ket{\psi}^{\otimes k}) = 1]}         \\
            > \frac{1+\Omega\left(\frac{1}{\poly(n)}\right)}{2}-\left(1-2^{-C_1/\alpha^2 2^{(1-2\beta)n}}\right)\left(\frac{1+2^{-n\beta}}{2}\right) - O\left(2^{-2^n}\right) \\
            = \Omega\left(\frac{1}{\poly(n)}\right)    ,
        \end{gathered}
    \end{equation}
    thus contradicting the definition of pseudorandom quantum states. Therefore, we conclude that for odd $\alpha$,
    \begin{equation}
        \tr(\Pi^{(2\alpha)}\psi^{\otimes 2\alpha})=o\left(\frac{1}{\poly(n)}\right)
    \end{equation}
    and from \cref{eq:renU} and \cref{eq:PRS} one has that for a state drawn from $\mathcal{E}$ and odd $\alpha$,
    \begin{equation}
        M_{\alpha}({\psi})=\omega((\alpha-1)^{-1}\log n)
    \end{equation}
    For the case of even $\alpha$, it is sufficient to employ the hierarchy of stabilizer entropies $M_{\alpha}\geq M_{\alpha+1}$ to conclude that
    \begin{equation}
        M_{\alpha}({\psi_{f,S}})=\omega(\alpha^{-1}\log n).
    \end{equation}

\end{proof}

\begin{remark}
    The only property we needed of Haar random states in the proof of \cref{lem:stab-bound} was that they have stabilizer R\'enyi entropy $\Omega(n)$ (see \cref{eq:Haar}). Hence, the same distinguisher, as sketched in \cref{lem:stab-bound}, works to distinguish \emph{any} states with stabilizer R\'enyi entropy $\Omega(n)$ from states with stabilizer R\'enyi entropy $O(\log n)$. Therefore, \cref{lem:stab-bound}, in full generality, follows the same proof as \cref{lem:stab-bound}.
\end{remark}

\subsection{Stabilizer R\'enyi entropy as a pseudomagic measure}\label{subsec:lem1proof}

A key result of this work is that the stabilizer entropy is a good candidate for defining the notion of pseudomagic.
\begin{remark}[Bounds to non-integer stabilizer entropies]
    All our conclusions and theorems presented below apply only to integer values of $\alpha$ where $\alpha \geq 2$. The reason behind this limitation is that for non-integer $\alpha$ or $\alpha=0,1$, we are unable to utilize the swap trick for computing Haar averages. Consequently, the proofs provided below do not hold in such cases. However, using the hierarchy of stabilizer entropies
    \begin{equation}
        M_{\alpha}({\psi}) \geq M_{\alpha'}({\psi})\quad \forall \alpha'>\alpha,
    \end{equation}
    we can bound every non-integer $\alpha$ from both above and below if we wish.
\end{remark}

The following lemmas characterize the stabilizer entropies attained by Haar random states and pseudorandom states.

\begin{lemmaS}[Stabilizer entropies for Haar random states]\label{lem:Haar}
For every integer $\alpha\ge 2$ and for $d = 2^n$, the Haar average $\alpha$-stabilizer entropy is lower bounded as
\begin{equation}
    \E_{\ket{\psi} \sim \haar}[M_\alpha(\psi)]\ge \begin{cases} n-2+O(d^{-1}) &\text{for $\alpha=2$},
    \\\frac{n}{\alpha-1}+O(d^{-1}) &\text{for $\alpha\geq 3$}.
    \end{cases}
\end{equation}
\begin{proof}
    The proof can be found in Ref.~\cite{leone_clifford_}.
\end{proof}
\end{lemmaS}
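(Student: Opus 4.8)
The plan is to start from the swap-trick representation of the stabilizer entropy valid for integer $\alpha\ge2$, namely $M_\alpha(\psi)=\frac{1}{1-\alpha}\log\tr(\Pi^{(2\alpha)}\psi^{\otimes 2\alpha})$ with $\Pi^{(2\alpha)}=d^{-1}\sum_{P\in\mathbb{P}_n}P^{\otimes 2\alpha}$, and to extract a lower bound on the Haar average by moving the expectation inside the logarithm. Writing $Z(\psi)\coloneqq\tr(\Pi^{(2\alpha)}\psi^{\otimes 2\alpha})$, concavity of $\log$ gives $\E[\log Z]\le\log\E[Z]$; since the prefactor $\tfrac{1}{1-\alpha}$ is negative for $\alpha\ge2$, this flips to
\begin{equation}
    \E_{\ket{\psi}\sim\haar}[M_\alpha(\psi)]\ge\frac{1}{1-\alpha}\log\E_{\ket{\psi}\sim\haar}[Z(\psi)],
\end{equation}
so it suffices to compute (or suitably bound) the single Haar moment $\E[Z]$.

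Next I would use linearity together with the standard identity $\E_{\ket{\psi}\sim\haar}[\psi^{\otimes 2\alpha}]=\Pi_{\sym}^{(2\alpha)}/\binom{d+2\alpha-1}{2\alpha}$, where $\Pi_{\sym}^{(2\alpha)}$ projects onto the symmetric subspace of $(\mathbb{C}^d)^{\otimes 2\alpha}$. This reduces the problem to the purely algebraic quantity
\begin{equation}
    \E[Z]=\frac{\tr\!\big(\Pi^{(2\alpha)}\Pi_{\sym}^{(2\alpha)}\big)}{\binom{d+2\alpha-1}{2\alpha}}.
\end{equation}
Expanding $\Pi_{\sym}^{(2\alpha)}=\tfrac{1}{(2\alpha)!}\sum_{\pi\in S_{2\alpha}}W_\pi$ over permutation operators and using $\tr(P^{\otimes 2\alpha}W_\pi)=\prod_{c}\tr(P^{|c|})$ over the cycles $c$ of $\pi$, the key input is that for a Hermitian Pauli $P$ one has $\tr(P^{m})=d$ when $m$ is even and $\tr(P^{m})=d\,\delta_{P,\id}$ when $m$ is odd.

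Carrying out the sum over the $d^2$ Paulis then splits the permutations into two classes: those all of whose cycles are even, which contribute $d^{\,r(\pi)+2}$ with $r(\pi)$ the number of cycles, and those with at least one odd cycle, which after the delta collapses the sum to $P=\id$ contribute $d^{\,r(\pi)}$. I would then identify the dominant power of $d$: among even-cycle permutations the maximum is $r=\alpha$ (perfect matchings, giving $d^{\alpha+2}$, counted by $(2\alpha-1)!!$), whereas among odd-cycle permutations the identity gives $r=2\alpha$ (giving $d^{2\alpha}$). Dividing by $\binom{d+2\alpha-1}{2\alpha}=\Theta(d^{2\alpha})$ yields $\E[Z]=\Theta(1/d)$; substituting into the Jensen bound then produces $\tfrac{n}{\alpha-1}$ for $\alpha\ge3$, where the identity term alone dominates, and $n-2$ for $\alpha=2$, where the two classes tie at $d^4$ so that $(2\alpha-1)!!=3$ combines with the identity to give leading coefficient $4=2^2$, hence the additive $-2$.

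The main obstacle is the careful asymptotic bookkeeping in the last step: one must confirm that no subleading permutation class overtakes the claimed leading term, track the exact leading coefficient of $\tr(\Pi^{(2\alpha)}\Pi_{\sym}^{(2\alpha)})$ rather than merely its order so that the additive constants ($-2$ for $\alpha=2$ and the $O(d^{-1})$ corrections for $\alpha\ge3$) come out correctly, and expand $\binom{d+2\alpha-1}{2\alpha}$ to the required order in $1/d$. The case distinction $\alpha=2$ versus $\alpha\ge3$ is precisely where the even-cycle and identity contributions exchange dominance, so this crossover must be handled explicitly rather than absorbed into asymptotic notation.
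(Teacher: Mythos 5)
Your proposal is correct, and it is more than the paper itself provides: the paper's ``proof'' of this lemma is a bare citation to Ref.~\cite{leone_clifford_} (the only ingredient it records in-line is the Jensen-inequality step $\E[M_\alpha]\ge\frac{1}{1-\alpha}\log\E[\tr(\Pi^{(2\alpha)}\psi^{\otimes 2\alpha})]$, which you reproduce). Your self-contained derivation follows the standard route and the bookkeeping you flag as the ``main obstacle'' does close up as you predict. Concretely: for a permutation $\pi$ with $r(\pi)$ cycles, the all-even-cycle class contributes at most $d^{\,r(\pi)+2}\le d^{\alpha+2}$ (maximized by the $(2\alpha-1)!!$ perfect matchings), while the odd-cycle class contributes at most $d^{\,r(\pi)}\le d^{2\alpha}$ (maximized only by the identity; the next candidates, transpositions, give $d^{2\alpha-1}$), so no other class competes. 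For $\alpha=2$ the two maxima tie at $d^4$ with combined coefficient $3!!+1=4$, and dividing by $d\,(2\alpha)!\,\binom{d+3}{4}$ gives exactly $\E[Z]=4/(d+3)$, hence $M_2\ge\log(d+3)-2=n-2+O(d^{-1})$; for $\alpha\ge3$ the identity dominates by a relative factor $O(d^{-(\alpha-2)})=O(d^{-1})$ and one gets $\E[Z]=d^{-1}(1+O(d^{-1}))$, hence $\frac{n}{\alpha-1}+O(d^{-1})$. This matches the stated bounds, including the additive $-2$ at $\alpha=2$ and the crossover in which term dominates. Since I cannot inspect the cited reference, I cannot certify that its computation is organized identically, but your argument is complete and independently establishes the lemma.
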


Notice that since we are considering integer $\alpha\ge2$, $\frac{1}{\alpha-1}>0$, so one has due to Jensen inequality
\begin{equation}
    \E_{\ket{\psi} \sim \haar}[M_\alpha(\psi)]\ge -\frac{1}{\alpha-1}\log \E_{\ket{\psi} \sim \haar}\left[\tr(\Pi^{(2\alpha)} \psi^{\otimes 2\alpha})\right].
\end{equation}
Levy's lemma then allows us to convert this statement about expectated stabilizer entropies to a statement in probability.

\begin{lemmaS}[Typicality]\label{lem:levy}
Let $\alpha$ be an integer such that $\alpha\ge2$, and let $\beta<1/2$. If $\ket{\psi}$ is a random Haar state, then
\begin{equation}
    M_{\alpha}({\psi})\ge\begin{cases}\frac{n\beta}{\alpha-1} & \text{for odd $\alpha$}\\
    \frac{n\beta}{\alpha} & \text{for even $\alpha$}
    \end{cases}
\end{equation}
with probability $\ge 1-2^{-\frac{Cd^{1-2\beta}}{4(\alpha+1)^2}}$, where $C=O(1)$.
\begin{proof}
    Consider the $\alpha$-stabilizer entropy for odd integer $\alpha$. Let us compute an upper bound for the Lipschitz constant of the function $\tr(\Pi^{(2\alpha)}\psi^{\otimes 2\alpha})$ as
    \begin{equation}
        \abs{\tr(\Pi^{(2\alpha)}\psi^{\otimes 2\alpha})-\tr(\Pi^{(2\alpha)}\phi^{\otimes 2\alpha})}\le \norm{\Pi^{(2\alpha)}}_{\infty}\norm{\psi^{\otimes 2\alpha}-\phi^{\otimes2\alpha}}_{1}\le 2\alpha\norm{\ket{\psi}-\ket{\phi}}
    \end{equation}
    where we have used the fact that $\Pi^{(2\alpha)}$ is unitary and thus $\norm{\Pi^{(2\alpha)}}_{\infty}=1$.
    We have also made use of the
    following identity multiple times
    \begin{equation}
        \begin{aligned}
            \norm{\psi^{\otimes 2\alpha}-\phi^{\otimes2\alpha}}_{1}&=\norm{\psi^{\otimes 2\alpha}-\psi\otimes\phi^{\otimes 2\alpha-1}+\psi\otimes\phi^{\otimes 2\alpha-1}-\phi^{\otimes2\alpha}}_{1}\\
            &\le \norm{\psi}_{1}\norm{\psi^{\otimes 2\alpha-1}-\phi^{\otimes 2\alpha-1}}+\norm{\ket{\psi}-\ket{\phi}}\norm{\phi^{\otimes 2\alpha-1}}_1\\
            &=
            \norm{\psi^{\otimes 2\alpha-1}-\phi^{\otimes 2\alpha-1}}+\norm{\ket{\psi}-\ket{\phi}}. \label{eqn:repeat-distance}
        \end{aligned}
    \end{equation}
    Note that
    \begin{equation}
        \norm{\psi-\phi}_{1}^2=\norm{\ket{\psi}-\ket{\phi}}^2={1-\abs{\braket{\psi}{\phi}}^2}.
    \end{equation}
    Therefore, using
    Levy's lemma,
    we can write
    \begin{equation}
        \Pr_{\ket{\psi} \sim \haar}\left[\abs{\tr(\Pi^{(2\alpha)}\psi^{\otimes 2\alpha})-d^{-1}+O(d^{-2})}\ge \epsilon\right]\le 2^{-\frac{Cd\epsilon^2}{4\alpha^2}}.
    \end{equation}
    Choosing $\epsilon\coloneqq d^{-\beta}$ with $\beta<1/2$, we can write
    \begin{equation}
        \Pr_{\ket{\psi} \sim \haar}\left[\tr(\Pi^{(2\alpha)}\psi^{\otimes 2\alpha})\le d^{-\beta}+O(d^{-1})\right]\ge 1-2^{-\frac{Cd^{1-2\beta}}{4\alpha^2}}.
    \end{equation}
    Therefore, using \cref{eq:eq111} and %
    exploiting the fact that $d
    \coloneqq 2^n$, we arrive
    at
    \begin{equation}
        \Pr_{\ket{\psi} \sim \haar}\left[M_{\alpha}({\psi})\ge \frac{\beta n}{\alpha-1}+O(d^{-1})\right]\ge 1-2^{-\frac{Cd^{1-2\beta}}{4\alpha^2}},
    \end{equation}
    which proves the statement for odd integer $\alpha\ge 3$. For even $\alpha$, is sufficient to note that the stabilizer entropies follow
    the hierarchy
    \begin{equation}
        M_{\alpha}({\psi})\ge M_{\alpha+1}({\psi}).
    \end{equation}
    Therefore, with probability greater than $1-2^{-\frac{Cd^{1-2\beta}}{4(\alpha+1)^2}}$, one has
    \begin{equation}
        M_{\alpha}({\psi})\ge M_{\alpha+1}({\psi})\ge\frac{\beta n}{(\alpha+1)-1}=\frac{\beta n}{\alpha}.
    \end{equation}
\end{proof}
\end{lemmaS}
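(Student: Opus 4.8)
The plan is to exploit the integer-$\alpha$ representation of the stabilizer entropy via the swap trick, namely $M_{\alpha}({\psi}) = \frac{1}{1-\alpha}\log \tr(\Pi^{(2\alpha)}\psi^{\otimes 2\alpha})$ from \cref{eq:eq111}, and to observe that since $\alpha\ge 2$ the prefactor $1/(1-\alpha)$ is negative. Hence a lower bound on $M_{\alpha}({\psi})$ is equivalent to an \emph{upper} bound on the expectation value $\tr(\Pi^{(2\alpha)}\psi^{\otimes 2\alpha})$ of the Hermitian unitary $\Pi^{(2\alpha)}=d^{-1}\sum_P P^{\otimes 2\alpha}$. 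First I would treat odd $\alpha$, establish the concentration bound there, and then obtain even $\alpha$ for free by invoking the hierarchy $M_{\alpha}({\psi})\ge M_{\alpha+1}({\psi})$.

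The core step is a concentration-of-measure argument via Levy's lemma applied to the real-valued function $f(\psi)=\tr(\Pi^{(2\alpha)}\psi^{\otimes 2\alpha})$ on the unit sphere of $n$-qubit pure states. For this I need two ingredients: its Haar mean and its Lipschitz constant. The mean is supplied by \cref{lem:Haar} (via Ref.~\cite{leone_clifford_}), giving $\E_{\ket{\psi}\sim\haar}[f(\psi)]=d^{-1}+O(d^{-2})$. For the Lipschitz constant I would bound $\abs{f(\psi)-f(\phi)}\le \norm{\Pi^{(2\alpha)}}_{\infty}\norm{\psi^{\otimes 2\alpha}-\phi^{\otimes 2\alpha}}_{1}$, use $\norm{\Pi^{(2\alpha)}}_{\infty}=1$ (since $\Pi^{(2\alpha)}$ is unitary), and control the trace-norm difference of the tensor powers by a telescoping sum yielding $\norm{\psi^{\otimes 2\alpha}-\phi^{\otimes 2\alpha}}_{1}\le 2\alpha\norm{\ket{\psi}-\ket{\phi}}$, together with the pure-state identity $\norm{\psi-\phi}_{1}=\norm{\ket{\psi}-\ket{\phi}}$. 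This gives Lipschitz constant $2\alpha$, and after choosing the deviation $\epsilon\coloneqq d^{-\beta}$ with $\beta<1/2$ it produces the tail bound $\Pr_{\ket{\psi}\sim\haar}[f(\psi)\le d^{-\beta}+O(d^{-1})]\ge 1-2^{-Cd^{1-2\beta}/(4\alpha^2)}$.

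Finally I would translate this event back into the claimed statement: on this high-probability event, $M_{\alpha}({\psi})=\frac{1}{\alpha-1}\log f(\psi)^{-1}\ge \frac{1}{\alpha-1}\log d^{\beta}=\frac{\beta n}{\alpha-1}$, which is the odd-$\alpha$ bound. For even $\alpha$, applying $M_{\alpha}({\psi})\ge M_{\alpha+1}({\psi})$ and the odd-$\alpha$ bound at $\alpha+1$ gives $M_{\alpha}({\psi})\ge\frac{\beta n}{\alpha}$; the failure probability then carries the replacement $\alpha\mapsto\alpha+1$, so stating the bound uniformly with the denominator $4(\alpha+1)^2$ covers both parities (it is merely looser than $4\alpha^2$ in the odd case).

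The step I expect to be the main obstacle is making the Lipschitz constant tight enough: the bound on $\norm{\psi^{\otimes 2\alpha}-\phi^{\otimes 2\alpha}}_{1}$ must be \emph{linear} in $\alpha$ rather than exponential, otherwise the concentration exponent $Cd^{1-2\beta}/(4\alpha^2)$ degrades and the lemma becomes vacuous for growing $\alpha$. The telescoping decomposition of $\psi^{\otimes 2\alpha}-\phi^{\otimes 2\alpha}$ and the reduction to the Euclidean distance $\norm{\ket{\psi}-\ket{\phi}}$ for pure states are therefore the delicate pieces; everything downstream is routine once the $2\alpha$ Lipschitz bound is in hand.
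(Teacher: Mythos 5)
Your proposal is correct and follows essentially the same route as the paper's proof: the swap-trick representation of $M_\alpha$, Levy's lemma applied to $\tr(\Pi^{(2\alpha)}\psi^{\otimes 2\alpha})$ with the Lipschitz constant $2\alpha$ obtained by telescoping the tensor-power trace-norm difference, the choice $\epsilon=d^{-\beta}$, and the hierarchy $M_\alpha\ge M_{\alpha+1}$ to cover even $\alpha$. You also correctly identify the linear-in-$\alpha$ Lipschitz bound as the load-bearing step.
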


We now upper bound the stabilizer entropy of subset phase states in particular, by upper-bounding $M_0$, as shown in the following lemma.
\begin{lemmaS}[Upper bound to $M_{0}$ for subset phase states]\label{lem:upperboundonm0}
Any subset phase state vector $\ket{\psi_{f,S}}$ satisfies
\begin{equation}
    M_{0}({\psi_{f,S}})\le 2\log \abs{S}.
\end{equation}
\begin{proof}
    Recall the definition of $M_0$ as
    \begin{equation}
        M_0(\psi) = \log(\abs{\{P \in \mathbb{P}_n \mid \tr(P \psi) \neq 0 \}}) - \log d.
    \end{equation}
    In order to derive the bound, we need to upper bound the number of Pauli operators that can have component on $\ket{\psi_{f,S}}$, i.e., the cardinality of the set $\text{supp} \ \Xi_\psi \coloneqq  \{P \in \mathbb{P}_n \mid \tr(P \psi) \neq 0\}$. First of all, note that every Pauli operator $P$ can be written as a product $P=\phi P_X P_Z$, where $\phi \in\{\pm 1,\pm i\}$, $P_X\in\mathbb{X}_n$, $P_{Z}\in\mathbb{Z}_n$, and $\mathbb{X}_n,\mathbb{Z}_n$ are the commuting groups generated by the single qubit $X$ and $Z$s, respectively.
    Consider the expectation value 
    \begin{equation}
        \mel{\psi_{f,S}}{P_X}{\psi_{f,S}}=\frac{1}{\abs{S}}\sum_{x,x'\in S}(-1)^{f(x)+f(x')}\mel{x}{P_X}{x'}
    \end{equation}
    of $P_X \in \mathbb{X}_n$ over $\ket{\psi_{f,S}}$.
    Now, let us consider fixing the sum over $x'$. Since $P_X\ket{x'}$ corresponds to a computational basis state, it follows that at most $\abs{S}$ of these states can belong to the set $S$, allowing them to potentially have a non-zero overlap $\mel{x}{P_X}{x'}$. Given that there are $\abs{S}$ possible $\ket{x'}$ states on which $P_X$ can act, a (loose) upper bound on the number $N_X$ of Pauli operators belonging to the subgroup $\mathbb{X}_n$ with a non-zero expectation value is $N_X \leq \abs{S}^2$. Now, consider a product some $P_X$ with a non-zero expectation value and any element $P_Z \in \mathbb{Z}_n$. Since $\abs{\mel{x}{P_X P_{Z}}{x'}} = \abs{\mel{x}{P_X}{x'}}$, it is possible that every $P_Z \in \mathbb{Z}_n$ could contribute to a non-zero expectation value. Therefore, the cardinality of the set $\text{supp} \ \Xi_\psi$ is upper bounded by
    \begin{equation}
        \abs{\text{supp} \ \Xi_\psi}\le N_{X} \abs{\mathbb{Z}_n} \le \abs{S}^{2}d.
    \end{equation}
    Therefore, we conclude that $M_{0}({\psi_{f,S}})\le 2\log \abs{S}$.
\end{proof}
\end{lemmaS}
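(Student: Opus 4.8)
The plan is to bound $M_0(\psi_{f,S})$ by directly counting the Pauli operators that can have nonzero overlap with $\psi_{f,S}$, and then feeding this count into the definition $M_0(\psi)=\log\abs{\{P\in\mathbb{P}_n\mid\tr(P\psi)\neq 0\}}-\log d$ with $d=2^n$. Since $\psi_{f,S}$ is pure, $\tr(P\psi_{f,S})=\mel{\psi_{f,S}}{P}{\psi_{f,S}}$, so it suffices to upper bound the size of the support $\text{supp}\,\Xi_{\psi_{f,S}}$, i.e.\ the number of Hermitian Paulis with nonvanishing expectation value. The target $2\log\abs{S}$ suggests I should aim to show $\abs{\text{supp}\,\Xi_{\psi_{f,S}}}\le \abs{S}^2 d$.

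First I would use the decomposition of any Pauli as $P=\phi\,P_X P_Z$, where $P_X=\bigotimes_i X_i^{a_i}$ ranges over the abelian $X$-type subgroup (labelled by a bit string $a\in\{0,1\}^n$), $P_Z$ ranges over the abelian $Z$-type subgroup (labelled by $z$), and $\phi$ is the phase fixing Hermiticity. The crucial observation is that $P_X$ acts as a translation on computational basis states, $P_X\ket{x'}=\ket{x'\oplus a}$, while $P_Z$ contributes only a sign $(-1)^{z\cdot x'}$. Hence $\mel{x}{P_X P_Z}{x'}=(-1)^{z\cdot x'}\,\delta_{x,\,x'\oplus a}$, which forces $x\oplus x'=a$ for a nonzero contribution.

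The key step is then a combinatorial bound on the $X$-type part. Expanding
\begin{equation}
\mel{\psi_{f,S}}{P_X P_Z}{\psi_{f,S}}=\frac{1}{\abs{S}}\sum_{x,x'\in S}(-1)^{f(x)+f(x')+z\cdot x'}\,\delta_{x,\,x'\oplus a},
\end{equation}
this can be nonzero only if $a$ belongs to the difference set $\{x\oplus x'\mid x,x'\in S\}$, which has cardinality at most $\abs{S}^2$. Thus at most $\abs{S}^2$ choices of $P_X$ survive. On the other hand, multiplying by any $Z$-type operator leaves the magnitude of every matrix element untouched, so all $d=2^n$ choices of $P_Z$ must be allowed in the worst case. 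Multiplying the two counts gives $\abs{\text{supp}\,\Xi_{\psi_{f,S}}}\le \abs{S}^2 d$, and substituting into the definition yields $M_0(\psi_{f,S})\le\log(\abs{S}^2 d)-\log d=2\log\abs{S}$.

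The only place requiring care is ensuring the counting is a genuine over-estimate rather than an exact count: I do not need to determine for which pairs $(a,z)$ the expectation actually vanishes (e.g.\ due to phase cancellation among the $(-1)^{f(x)}$ terms), only that $\text{supp}\,\Xi_{\psi_{f,S}}$ is contained in the product of the $\le\abs{S}^2$ admissible $a$'s with the $d$ admissible $z$'s. I would also note that the factor $d$ from the $Z$-operators cannot be improved by this argument, since a $Z$-string leaves all the $\delta_{x,\,x'\oplus a}$ constraints in place; this is precisely why the bound comes out as $2\log\abs{S}$ rather than something smaller.
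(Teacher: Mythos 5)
Your proposal is correct and follows essentially the same route as the paper's proof: decompose $P=\phi P_X P_Z$, bound the number of admissible $X$-parts by $\abs{S}^2$ via pairs in $S$, allow all $d$ choices of $Z$-part, and conclude $M_0\le\log(\abs{S}^2 d)-\log d=2\log\abs{S}$. Your phrasing of the $X$-count in terms of the difference set $\{x\oplus x'\mid x,x'\in S\}$ is a slightly cleaner rendering of the same counting argument.
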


\begin{theoremS}[Stabilizer entropies of pseudorandom subset phase states]\label{thm:malpha-prs}
The $\alpha$-stabilizer entropy of pseudorandom subset phase states with size-$\abs{S}$ subsets satisfies
\begin{equation}
    \omega(\alpha^{-1} \log n) < M_{\alpha}(\psi_{f,S}) \leq O(\log \abs{S}).
\end{equation}
\begin{proof}
    The lower bound follows from \cref{lem:stab-bound}. The upper bound follows from the bound $ M_{0}({\psi_{f,S}})\le 2\log \abs{S}$ proven in \cref{lem:upperboundonm0}, and from the fact that $M_{\alpha}\le M_0$ for every $\alpha>0$. Specifically, for pseudorandom subset phase states with $\abs{S} = \Theta(\exp \poly\log n)$, $M_\alpha(\ket{\psi_{f,S}}) = O(\poly \log n)$.
\end{proof}
\end{theoremS}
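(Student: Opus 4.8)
The plan is to assemble the two bounds from results already established, since this theorem is essentially a synthesis of the lower bound of \cref{lem:stab-bound} with the upper bound of \cref{lem:upperboundonm0}. No new machinery is required; the work is in checking that both invoked results apply to the \emph{same} ensemble.

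For the lower bound, I would first observe that pseudorandom subset phase states with $\abs{S} = 2^{\omega(\log n)}$ form a pseudorandom ensemble in the sense of \cref{def:pseudorandom}, as recalled in the Preliminaries. This is the only hypothesis needed to invoke \cref{lem:stab-bound} (equivalently \cref{lem:stab-bound-rep}), whose proof rested solely on the fact that Haar-random states carry extensive stabilizer entropy together with an efficient Hadamard-test distinguisher. Applying that lemma directly yields $M_{\alpha}(\psi_{f,S}) = \omega(\alpha^{-1}\log n)$ with high probability over the choice of $f$ and $S$.

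For the upper bound, I would combine \cref{lem:upperboundonm0}, which gives $M_{0}(\psi_{f,S}) \le 2\log\abs{S}$ for \emph{every} subset phase state (with no probabilistic caveat and no pseudorandomness requirement), with the monotone hierarchy $M_{\alpha}(\psi) \le M_{\alpha'}(\psi)$ valid whenever $\alpha \ge \alpha'$. Taking $\alpha' = 0$ immediately gives $M_{\alpha}(\psi_{f,S}) \le M_{0}(\psi_{f,S}) \le 2\log\abs{S} = O(\log\abs{S})$. The stated specialization then follows by substitution: choosing $\abs{S} = \Theta(\exp\poly\log n)$ gives $\log\abs{S} = \poly\log n$, hence $M_{\alpha}(\psi_{f,S}) = O(\poly\log n)$.

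I expect the only genuine subtlety — rather than a real obstacle — to be verifying that the hypotheses of the two lemmas are met simultaneously by a single choice of $\abs{S}$. The lower bound needs $\abs{S}$ large enough ($2^{\omega(\log n)}$) for pseudorandomness to hold, while the upper bound is only informative when $\abs{S}$ is subexponential; for $\abs{S} = \Theta(\exp\poly\log n)$ both conditions hold and the window between $\omega(\alpha^{-1}\log n)$ and $O(\log\abs{S})$ is nonempty. This confirms that the two bounds are mutually consistent and pins down $M_{\alpha}(\psi_{f,S})$ up to the gap between these two scalings.
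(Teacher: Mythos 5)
Your proposal is correct and follows essentially the same route as the paper: the lower bound is imported from \cref{lem:stab-bound} using the pseudorandomness of the subset phase states, and the upper bound comes from $M_{\alpha}\le M_{0}\le 2\log\abs{S}$ via \cref{lem:upperboundonm0} and the hierarchy of stabilizer entropies. The additional sanity check that a single choice of $\abs{S}$ satisfies both hypotheses is a reasonable observation but not something the paper spells out.
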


\begin{corollaryS}[Stabilizer entropies as a pseudomagic measure]\label{cor:stabilizerentropy}
As a corollary, we have that every stabilizer entropy $M_{\alpha}({\psi})$ with $\alpha=O(\poly\log n)$ is a suitable magic measure to provide a notion of pseudomagic.

\begin{proof}
    Subset phase states are pseudorandom for $\abs{S}=2^{\omega(\log n)}$ and $\abs{S}=2^{O(\poly\log n)}$, therefore, with non-negligible probability one has $M_{\alpha}({\psi_{f,S}})=O(\poly\log n)$, while from \cref{lem:Haar}, Haar random states have $M_{\alpha}({\psi_{\haar}})= O(n/\alpha)$ with high probability.
\end{proof}
\end{corollaryS}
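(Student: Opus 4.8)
The plan is to exhibit an explicit pseudomagic pair with respect to $M_\alpha$, which by \cref{def:pseudomagic} amounts to producing two computationally indistinguishable ensembles separated by a genuine magic gap. Both ingredients are already assembled in the preceding results: pseudorandom subset phase states will play the role of the low-magic ensemble, and Haar random states the role of the high-magic ensemble.

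First I would fix the subset size to satisfy $|S| = 2^{\omega(\log n)}$ and $|S| = 2^{O(\poly\log n)}$ simultaneously, so that the associated subset phase states $\ket{\psi_{f,S}}$ are both pseudorandom (in the sense reviewed around \cref{def:pseudorandom}) and of controlled magic. The low-magic bound is then immediate: by \cref{lem:upperboundonm0} together with the hierarchy $M_\alpha \le M_0$ for $\alpha > 0$, every such state deterministically obeys $M_\alpha(\psi_{f,S}) \le 2\log|S| = O(\poly\log n)$, which is exactly the upper bound of \cref{thm:malpha-prs}. Because this holds for every state in the ensemble, it certainly holds with high probability, settling the low-magic side.

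Second, for the high-magic side I would invoke \cref{lem:Haar} and its probabilistic upgrade via Levy's lemma, \cref{lem:levy}, giving $M_\alpha(\psi) = \Omega(n/\alpha)$ with overwhelming probability for $\ket{\psi}\sim\haar$. Here the hypothesis $\alpha = O(\poly\log n)$ is doing the essential work: it forces $n/\alpha = \Omega(n/\poly\log n) = \omega(\poly\log n)$, so the high-magic value strictly dominates the low-magic value $O(\poly\log n)$ and a bona fide gap opens. I expect this to be the only step requiring genuine care, since if $\alpha$ were permitted to grow faster than every polylogarithm the two ensembles' typical magic values could overlap and the claimed separation would collapse; verifying that $\alpha = O(\poly\log n)$ prevents exactly this is the crux of the corollary.

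Finally, since pseudorandom subset phase states with $|S| = 2^{\omega(\log n)}$ are computationally indistinguishable from Haar random states even given polynomially many copies, the two ensembles are computationally indistinguishable in the sense demanded by \cref{def:pseudomagic}. Combining the three facts—deterministic low magic, typical high magic with the gap secured by $\alpha = O(\poly\log n)$, and computational indistinguishability—the pair (pseudorandom subset phase states, Haar random states) is a valid pseudomagic pair for $M_\alpha$ with gap $O(\poly\log n)$ versus $\omega(\poly\log n)$. This establishes that $M_\alpha$ is a suitable measure for defining pseudomagic for every $\alpha = O(\poly\log n)$, as claimed.
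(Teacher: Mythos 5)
Your proposal is correct and follows essentially the same route as the paper's proof: subset phase states with $|S|=2^{\omega(\log n)}$ and $|S|=2^{O(\poly\log n)}$ supply the low-magic ensemble via the $M_0$ upper bound, Haar random states supply the high-magic ensemble via \cref{lem:Haar} and \cref{lem:levy}, and computational indistinguishability comes from pseudorandomness. You usefully make explicit what the paper leaves implicit, namely that the hypothesis $\alpha=O(\poly\log n)$ is precisely what guarantees $n/\alpha=\omega(\poly\log n)$ so that the gap does not collapse.
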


\subsection{Tight magic bounds for subset phase states}\label{subsec:tight-bound}
Below, we prove a useful bound on $M_2({\psi_{f}})$ for random phase states. A phase state vector, where the phases are given by the function $f$, is the state vector
\begin{equation}
    \ket{\psi_f} = \frac{1}{\sqrt{2^n}} \sum_{x \in \{0, 1\}^n} (-1)^{f(x)} \ket{x}.
\end{equation}
This is equivalent to setting $S = \{0,1\}^n$ for the subset phase states (and so from now on we drop the subscript $S$ in the notation). We anticipate its usefulness for proving various concentration inequalities relating to $M_2$. The techniques used in the following proof will also be used for later results (i.e., \cref{thm:lowerboundm2} and \cref{thm:tune-magic}) that improve upon \cref{thm:malpha-prs} by finding \emph{tight} magic bounds for subset phase states.

\begin{lemmaS}[Average on R\'enyi-2 stabilizer entropies]\label{lem:expm2}
If $f$ is sampled uniformly at random from all possible binary functions $f: \{0,1\}^n \to \{0,1\}$, then
\begin{equation}
    \E_{f}\left[2^{-M_{2}({\psi_{f}})}\right]\le \frac{20}{2^n}.
\end{equation}

\begin{proof}
    By definition of $M_2$, we have
    \begin{equation}
        \begin{aligned}
            \E_{f}\left[2^{-M_2({\psi_{f}})}\right] & = \frac{1}{2^n} \sum_i \tr\left(P_i^{\otimes 4} \E_f\left[\ketbra{\psi_{f}}{\psi_{f}}^{\otimes 4}\right]\right)                                                         \\
            & = \frac{1}{32^n}\sum_{i,\{x_j\},\{y_j\}} \E_f\left[(-1)^{\sum_j f(x_j)+f(y_j)}\right] \tr(P_i^{\otimes 4} \ket{x_1,x_2,x_3,x_4} \bra{y_1,y_2,y_3,y_4}), \label{eqn:trpi4}
        \end{aligned}
    \end{equation}
    where $\{x_j\}$ denotes the fact that we are summing over all possible sets of four length-$n$ bitstrings $x_1,x_2,x_3,x_4$ (similarly for $\{y_j\}$). Now, observe that there are two constraints that must be simultaneously satisfied in order for the summand to be nonzero.
    \crefname{enumi}{constraint}{constraints}
    \Crefname{enumi}{Constraint}{Constraints}
    \crefname{fact}{fact}{facts}
    \Crefname{fact}{Fact}{Facts}

    \crefname{case}{case}{cases}
    \Crefname{case}{Case}{Cases}

    \begin{enumerate}[label=(\alph*)]
        \item $\E_f\left[(-1)^{\sum_j f(x_j)+f(y_j)}\right]=1$ only when all the $\{x_j\}$ and $\{y_j\}$ can be `paired up'. That is, any given bitstring can only appear an even number of times in $\{x_j\} \cup \{y_j\}$. \label[enumi]{constr:even}
        \item For every $P_i$, we must have $P_i \ket{x_j} \propto \ket{y_j}$ for $j=1,\ldots,4$. \label[enumi]{constr:pi}
    \end{enumerate}
    To upper bound the sum \cref{eqn:trpi4}, we consider six cases for the summands. We simply need to count the number of nonzero summands, since the magnitude of each of the summands is at most 1.
    \begin{enumerate}[label=\Roman*.]
        \item\label[case]{itm:sum1} $P_i \in \mathbb{Z}_n$, where $\mathbb{Z}_n$ is the group generated by the Pauli $\sigma^z$s. Then \Cref{constr:pi} amounts to $x_j = y_j$, and then \Cref{constr:even} is automatically satisfied. The sum then reads
        \begin{equation}
            \sum_{P \in \mathbb{Z}_n} \sum_{\{x_j\}} \tr(P^{\otimes 4} \ketbra{x_1,x_2,x_3,x_4}{x_1,x_2,x_3,x_4})=\sum_{P \in \mathbb{Z}_n} \tr(P^{\otimes 4})=1
        \end{equation}
        (the only nonzero term is from the identity). For the remaining cases, we can then assume $P_i \not\in \mathbb{Z}_n$, which amounts to assuming $x_j \neq y_j$.
        \item\label[case]{itm:sum2} The $x_j$ are all the same ($x_1=x_2=x_3=x_4$). There are at most $2^n$ choices for $\{x_j\}$ and $4^n$ choices for $P_i$, so $\leq 8^n$ summands that fall under this case.
        \item\label[case]{itm:sum3} Three of the $x_j$ are the same, and one is different. There are $\binom{4}{3}=4$ ways to choose this triplet, and $4^n$ choices for $\{x_j\}$ (with $4^n$ choices for $P_i$), amounting to at most $\leq 4 \cdot 16^n$ nonzero summands.
        \item\label[case]{itm:sum4} There are two distinct pairs amongst $x_j$ (for instance, $(x_1=x_3)\neq(x_2=x_4)$). There are at most $\binom{4}{2}=6$ ways to choose these pairings, and $4^n$ choices for $\{x_j\}$ after these pairings are chosen (and at most $4^n$ choices for $P_i$). This case represents $\leq 6 \cdot 16^n$ nonzero summands.
        \item\label[case]{itm:sum5} Two of the $x_j$ are the same, and two are different. For concreteness, say $x_1=x_2\neq x_3\neq x_4$. To satisfy \Cref{constr:even}, we must have $(x_3,x_4)=(y_3,y_4)$ or $(x_3,x_4)=(y_4,y_3)$ (note we automatically have $y_1=y_2$ by \Cref{constr:pi}). However, recalling that $P_i \not\in \mathbb{Z}_n$ (hence $\ket{x_j} \perp \ket{y_j}$), we cannot have $(x_3,x_4)=(y_3,y_4)$, and must, therefore, have $(x_3,x_4)=(y_4,y_3)$. Combining this with \Cref{constr:pi}, we get $P_i \ket{x_3} \propto \ket{y_3}=\ket{x_4}$. Therefore, we only have freedom in choosing one of $x_1$ or $x_2$, and in choosing $x_3$. Generalizing this, we find a total of $\binom{4}{2}$ choices of pairings in the $\{x_j\}$, $4^n$ choices for $\{x_j\}$, and $4^n$ choices for $P_i$. This represents at most $6 \cdot 16^n$ summands.
        \item\label[case]{itm:sum6} All of the $x_j$ are unique. In this case, to satisfy \Cref{constr:even}, the sum must look like
        \begin{equation}
            \sum_{i, \{x_j\},\sigma \in \mathbb{S}_4} \tr(P_i^{\otimes 4} \ket{x_1,x_2,x_3,x_4}\bra{x_{\sigma(1)},x_{\sigma(2)},x_{\sigma(3)},x_{\sigma(4)}}),
        \end{equation}
        where $\mathbb{S}_4$ is the group of permutations over $1,\ldots,4$. \Cref{constr:pi} then reads $P_i \ket{x_j} = \ket{x_{\sigma(j)}}$. Applying this twice, we have $\ket{x_j} = P_i^2 \ket{x_j} = \ket{x_{\sigma(\sigma(j))}}$, and since the $\{x_j\}$ are unique, this implies $\sigma(\sigma(j))=j$ (i.e., $\sigma$ is an involution). Furthermore, since $\ket{x_j} \perp \ket{x_{\sigma(j)}}$ (by assumption that $P_i \not\in \mathbb{Z}_n$), $\sigma$ must act nontrivially on each of $1,\ldots,4$. These two facts imply that $\sigma$ must be a product of two disjoint two-cycles (e.g., it swaps $1 \leftrightarrow 3$ and $2 \leftrightarrow 4$). Therefore, we only have freedom in choosing two of the $\{x_j\}$ (one for each of the two-cycles), and then the other two are uniquely determined. To count the number of nonzero terms here, there are only $3$ permutations $\sigma$ that are a product of disjoint two-cycles, at most $4^n$ choices for $\{x_j\}$, and at most $4^n$ choices for $P_i$. This amounts to $\leq 3 \cdot 16^n$ nonzero summands.
    \end{enumerate}
    In summary, we have at most $1+8^n+(3+6+4+6) \cdot 16^n \leq 20 \cdot 16^n$ summands. Therefore,
    \begin{equation}
        \E_{f}\left[2^{-M_2({\psi_f})}\right] \leq \frac{20}{2^n}.
    \end{equation}
\end{proof}
\end{lemmaS}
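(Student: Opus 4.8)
The plan is to turn the claim into a finite counting problem. First I would use the swap-trick identity of \cref{eq:eq111} at $\alpha=2$, namely $2^{-M_2(\psi)}=\tr(\Pi^{(4)}\psi^{\otimes4})=\frac{1}{2^n}\sum_{P\in\mathbb{P}_n}\tr(P^{\otimes4}\psi^{\otimes4})$, and push the expectation over $f$ inside:
\begin{equation}
    \E_f\!\left[2^{-M_2(\psi_f)}\right]=\frac{1}{2^n}\sum_{P\in\mathbb{P}_n}\tr\!\left(P^{\otimes4}\,\E_f\!\left[\ketbra{\psi_f}{\psi_f}^{\otimes4}\right]\right).
\end{equation}
Expanding $\ketbra{\psi_f}{\psi_f}^{\otimes4}$ in the computational basis yields a sum over eight bitstrings $x_1,\dots,x_4,y_1,\dots,y_4$ weighted by the phase average $\E_f[(-1)^{\sum_j f(x_j)+f(y_j)}]$ and a single Pauli matrix element, with an overall prefactor $32^{-n}$.

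I would next isolate the two conditions under which a summand survives. Since $f$ is uniform, the phase average equals $1$ exactly when every bitstring appears an even number of times in the multiset $\{x_j\}\cup\{y_j\}$ and vanishes otherwise. The Pauli matrix element is nonzero only when $P\ket{x_j}\propto\ket{y_j}$ for every $j$; decomposing $P=\phi\,P_XP_Z$ with $P_X$ shifting by a fixed string $a$, this forces $y_j=x_j\oplus a$ simultaneously for all $j$ and makes each surviving summand have modulus exactly $1$. Consequently $\E_f[2^{-M_2(\psi_f)}]$ is at most $32^{-n}$ times the number of pairs $(P,(x_1,\dots,x_4))$ compatible with the even-multiplicity condition, and it remains only to count.

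I would then split the count according to the coincidence pattern of $(x_1,x_2,x_3,x_4)$: all four equal, exactly three equal, two disjoint equal pairs, one equal pair plus two distinct singletons, and four distinct values. In each case the parity condition combined with $y_j=x_j\oplus a$ pins down enough of the data to bound the number of compatible pairs: the all-equal case gives $O(8^n)$, and each remaining case gives $O(16^n)$ --- for instance, in the all-distinct case the permutation relating $\{y_j\}$ to $\{x_j\}$ must be a fixed-point-free involution on four elements, hence a product of two disjoint transpositions, leaving only two free bitstrings and $O(1)$ permutations. Summing the constantly many cases gives $O(16^n)$ compatible pairs and thus $\E_f[2^{-M_2(\psi_f)}]=O(16^n/32^n)=O(2^{-n})$; tracking the explicit multiplicities then yields the stated constant.

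The step I expect to be the main obstacle is the purely diagonal contribution, $P\in\mathbb{Z}_n$ (equivalently $a=0$, so $y_j=x_j$). Here the even-multiplicity condition holds for \emph{every} choice of $(x_1,\dots,x_4)$, so naively counting nonzero summands overcounts by a factor $2^n$ and would spuriously contribute an $\Omega(1)$ term that destroys the bound. This case must instead be summed exactly, using $\sum_{(x_1,\dots,x_4)}\tr(P^{\otimes4}\ketbra{\{x_j\}}{\{x_j\}})=(\tr P)^4$, which vanishes for every non-identity $P\in\mathbb{Z}_n$, so that only $P=\id$ survives and contributes $O(16^n)$. Keeping this exact cancellation separate from the genuine counting in the remaining cases is the delicate point of the argument.
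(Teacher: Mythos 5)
Your proposal is correct and follows essentially the same route as the paper's proof: the swap-trick identity, the expansion over eight bitstrings, the two survival constraints (even multiplicity for the phase average and $P\ket{x_j}\propto\ket{y_j}$), and the case analysis over coincidence patterns of $(x_1,\dots,x_4)$, including the separate exact treatment of the diagonal $P\in\mathbb{Z}_n$ contribution. Your observation that the diagonal case must be summed exactly via $(\tr P)^4$ rather than counted naively is precisely the point the paper's Case I handles.
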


Note that in the proof of \cref{lem:expm2}, the only property of $f$ that we have used has been in \Cref{constr:even}. This constraint holds so long as $f$ is 8-wise independent, and does not require $f$ to be completely random. We can construct pseudorandom 8-wise independent functions as follows. We randomly choose $h$ from the 8-wise independent family
\begin{equation}
    H = \{h \mid \{0,1\}^n \to \{0,1\}\},
\end{equation}
and $q$ from a family of quantum-secure permutations
\begin{equation}
    Q = \{q \mid \{0,1\}^n \to \{0,1\}^n\}.
\end{equation}
Then, defining $f \coloneqq  h \circ q$, $f$ is both pseudorandom and 8-wise independent. This has been proven in full detail in Ref.~\cite{aaronson_quantum_2023}. This gives us the following theorem.

\begin{theoremS}[Tight bounds on the stabilizer entropies of subset phase states]\label{thm:lowerboundm2}
Let $\omega(\log n) \leq k \leq n$ and $\abs{S} = 2^k$. If $f$ is sampled from the ensemble of 8-wise independent pseudorandom functions, then the associated subset phase states satisfy
\begin{equation}
    M_\alpha({\psi_{f,S}}) = \Theta(k)
\end{equation}
with high probability for any $\alpha \in [0,2]$.
\end{theoremS}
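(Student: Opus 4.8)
The plan is to sandwich $M_\alpha(\psi_{f,S})$ between matching upper and lower bounds at the two endpoints $\alpha=0$ and $\alpha=2$, and then to propagate these across the whole interval $[0,2]$ using the monotone hierarchy $M_\alpha\ge M_{\alpha'}$ for $\alpha'>\alpha$. The \emph{upper} bound is immediate and deterministic: \cref{lem:upperboundonm0} gives $M_0(\psi_{f,S})\le 2\log|S|=2k$ for every $f$ and $S$, and since $M_\alpha\le M_0$ for all $\alpha\ge0$ this already yields $M_\alpha(\psi_{f,S})=O(k)$ throughout $\alpha\in[0,2]$. The substance of the theorem is therefore the matching \emph{lower} bound $M_\alpha=\Omega(k)$; here the hierarchy runs the other way, $M_\alpha\ge M_2$ for all $\alpha\le 2$, so it suffices to prove $M_2(\psi_{f,S})=\Omega(k)$ with high probability.

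To lower bound $M_2$, I would compute the first moment of $2^{-M_2(\psi_{f,S})}=\tr(\Pi^{(4)}\psi_{f,S}^{\otimes4})$ (using \cref{eq:eq111} with $\alpha=2$), generalizing \cref{lem:expm2} from the full-support phase state $|S|=2^n$ to a size-$2^k$ subset. Expanding $\psi_{f,S}^{\otimes4}$ over bitstrings $x_1,\dots,x_4,y_1,\dots,y_4\in S$ and writing each Pauli as $Z^aX^b$, the matrix element is nonzero only when $y_j=x_j\oplus b$ for a \emph{common} $b$, and it then carries the phase $(-1)^{a\cdot(x_1\oplus x_2\oplus x_3\oplus x_4)}$. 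Averaging over an $8$-wise independent $f$ kills every term except those for which the multiset $\{x_j\}\cup\{x_j\oplus b\}$ has all even multiplicities (this is exactly \Cref{constr:even} of \cref{lem:expm2}, and $8$-wise independence suffices since at most eight distinct strings appear), and summing over $a$ forces $\bigoplus_j x_j=0$. The whole expectation thus collapses to a constrained count of additive quadruples of $S$,
\begin{equation}
\E_{f}\!\left[2^{-M_2(\psi_{f,S})}\right]=\frac{1}{|S|^{4}}\sum_{b\in\{0,1\}^n}\#\Bigl\{\vec{x}\in S^{4}:\ \textstyle\bigoplus_j x_j=0,\ x_j\oplus b\in S\ \forall j,\ \{x_j\}\cup\{x_j\oplus b\}\ \text{even}\Bigr\}.
\end{equation}
The constraint $\bigoplus_j x_j=0$ is a genuine simplification relative to \cref{lem:expm2}: it annihilates the ``three equal, one distinct'' and ``two equal, two distinct'' cases outright, leaving only the $b=0$ additive-energy term, the all-distinct quadruples paired by $b$, the two-pairs configurations, and the all-equal term.

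The main obstacle — and the genuinely new feature compared to \cref{lem:expm2} — is that for full support $S=\{0,1\}^n$ the diagonal Paulis collapsed to the identity alone (the isolated $1$ in the count), whereas for a proper subset the surviving counts are governed by the additive structure of $S$, which can be large for \emph{structured} $S$ (if $S$ is a coset of a linear subspace then $M_2$ genuinely fails to be extensive, since relabelling the computational basis is not in general a Clifford operation). I would therefore also average over a (pseudo)random choice of $S$ and use a fourth-moment estimate: $\E_S$ of the additive energy $\mathcal{E}(S)=d^{-1}\sum_a(\sum_{x\in S}(-1)^{a\cdot x})^4$ is $|S|^4/d+O(|S|^2)$, and each surviving $b$-count contributes $O(|S|^4/d)$ or $O(|S|^2)$ once the membership constraints $x_j\oplus b\in S$ are imposed, so that $\E_{f,S}[2^{-M_2}]=O(2^{-n}+2^{-2k})\le O(2^{-k})$, mirroring the $20/2^n$ of \cref{lem:expm2} at $k=n$; since only low moments of $1_S$ enter, a pseudorandom $S$ suffices in place of a truly random one. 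A Markov inequality then converts this first-moment bound into the claimed probability statement: $\Pr[2^{-M_2}\ge 2^{-(1-\delta)k}]=O(2^{-\delta k})$, so for any constant $\delta\in(0,1)$ we get $M_2(\psi_{f,S})\ge(1-\delta)k=\Omega(k)$ except with probability $O(2^{-\delta k})=\negl(n)$ (using $k=\omega(\log n)$). Combined with the hierarchy and the deterministic upper bound, this delivers $M_\alpha(\psi_{f,S})=\Theta(k)$ for every $\alpha\in[0,2]$ with high probability.
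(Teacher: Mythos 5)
Your overall strategy---a first-moment bound on $2^{-M_2(\psi_{f,S})}$ via the Pauli expansion of $\tr(\Pi^{(4)}\psi_{f,S}^{\otimes 4})$, the $8$-wise-independence pairing constraint, Markov's inequality, and then the hierarchy $M_\alpha \ge M_2$ together with the deterministic upper bound $M_0 \le 2\log\abs{S}$---is exactly the paper's. Your observation that summing over the $Z$-part of \emph{every} Pauli forces $\bigoplus_j x_j = 0$, which kills the ``three equal, one distinct'' and ``two equal, two distinct'' configurations outright, is a genuine streamlining: the paper exploits this cancellation only for the diagonal Paulis (\cref{itm:sum1}) and counts \cref{itm:sum3} and \cref{itm:sum5} of \cref{lem:expm2} by hand.

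The one real problem is your treatment of the additive-energy term. You correctly identify that for structured $S$ this term is governed by $\mathcal{E}(S)=\#\{\vec{x}\in S^4 : \bigoplus_j x_j=0\}$, but you then average over $S$ and assert that a pseudorandom $S$ has the same low moments as a truly random subset. That assertion does not follow as stated: when $S$ is the image of a pseudorandom permutation, computational indistinguishability gives no direct statistical control over $\E_S[\mathcal{E}(S)]$ (a bound can be recovered via a sampling-based distinguisher, but you would need to make that argument explicitly). The detour is also unnecessary: $\mathcal{E}(S)\le\abs{S}^3$ holds \emph{deterministically} for every subset (choose $x_1,x_2,x_3$ freely; $x_4$ is then determined), so this term contributes at most $\abs{S}^3/\abs{S}^4=2^{-k}$ for an \emph{arbitrary fixed} $S$, which is all that is needed for $M_2=\Omega(k)$. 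This is precisely what the paper does in \cref{eq:z-case}, and it yields the stronger statement that the bound holds for every fixed $S$ with randomness only over $f$. Relatedly, your parenthetical that $M_2$ ``genuinely fails to be extensive'' when $S$ is a coset of a linear subspace is incorrect: the relabeling onto a linear subspace is a CNOT circuit, hence Clifford, so for such $S$ the state is Clifford-equivalent to a $k$-qubit full-support random phase state tensored with $\ket{0}^{\otimes(n-k)}$ and still has $M_2=\Theta(k)$ with high probability---the random phases, not the geometry of the subset, supply the magic, consistent with the worst-case count $\mathcal{E}(S)\le\abs{S}^3$ being exactly saturated there and still sufficing.
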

\begin{proof}
    The upper bound follows from \cref{thm:malpha-prs}. For the lower bound, we focus on the case $\alpha=2$, and define
    \begin{equation}
        Z_f \coloneqq  -\log(M_2(\ket{\psi_{f,S}})) = \frac{1}{2^n} \sum_{P \in \mathbb{P}_n} \tr(P^{\otimes 4} \ketbra{\psi_{f,S}}{\psi_{f,S}}^{\otimes 4}),
    \end{equation}
    for an arbitrary fixed subset $S$ (with $\abs{S}=2^k$). Our goal will be to show that
    \begin{equation}
        \mathbb{E}_f[Z_f] = O(\abs{S}^{-1}),
    \end{equation}
    regardless of the choice of subset $S$. In general, do this by mirroring the techniques of \cref{lem:expm2}. However, handling \Cref{itm:sum1} requires some special care. We will show that
    \begin{equation}
        \frac{1}{\abs{S}^4 2^n} \sum_{P_Z \in \mathbb{Z}_n} \left(\sum_{x \in S} \tr(P \ketbra{x}{x})\right)^4 = O(\abs{S}^{-1}) \label{eq:z-case}
    \end{equation}
    We represent $P_Z \in \mathbb{Z}_n$ as a bitstring $z$, where $z_i=1$ if $P$ has a $Z$ on the $i$th position, so the sum over $P_Z \in \mathbb{Z}_n$ then turns into a sum over all $z \in \{0,1\}^n$ We also then have $g_z(x) \coloneqq  \tr(P \ketbra{x}{x}) = (-1)^{x \cdot z}$, where $x \cdot z$ is shorthand for $\sum x_i z_i$. By expanding out the fourth power of the sum over $x \in S$, we will get a collection of terms $g_z(x_i)g_z(x_j)g_z(x_k) g_z(x_l)$, which can be simplified to simply $g_z(x_i+x_j+x_k+x_l)$, where the addition is mod 2. Then, we note that when we take the sum over all $z \in \{0,1\}^n$, this term averages to zero, unless $x_i+x_j+x_k+x_l=0^n$ (i.e., a string of all zeros). In order for this requirement to hold, we see that we have at most $O(\abs{S}^3)$ choices for $x_i,x_j,x_k$, after which $x_l$ is uniquely determined. In summary, there are at most $O(\abs{S}^3)$ terms that do not sum to zero, and so we have shown \cref{eq:z-case}.

    The remaining cases are simple. Since \Cref{constr:even,constr:pi} still hold, we can run through \Crefrange{itm:sum2}{itm:sum6} of \cref{lem:expm2} making adjustments as needed. The main adjustment is that instead of $4^n$ choices for $P_i$, there are $2^n \abs{S}$ choices. The reasoning for this follows that of \cref{lem:upperboundonm0}: picking a $P_i$, is equivalent to picking a $P_X \in \mathbb{X}_n$ and a $P_Z \in \mathbb{Z}_n$. There are at most $2^n$ choices for $P_Z$, and only $\abs{S}$ choices for $P_X$, since there are at most $\abs{S}$ different bitstrings to which we can flip to when we are working with subset phase states. For \Cref{itm:sum2}, there are $\abs{S}$ choices for $\{x_j\}$, and for each of these choices, there are at most $\abs{S} 2^n$ choices of $P_i$ so that all of the corresponding $\{y_i\}$ belongs to $S$. For \Crefrange{itm:sum3}{itm:sum6}, there are $\abs{S}^2$ choices for $\{x_j\}$ and $\abs{S} 2^n$ choices for $P_i$.
    Therefore, we have that
    \begin{equation}
        Z =\frac{1}{2^n} \sum_i \left(P_i^{\otimes 4} \E\left[\st{\psi_{f,S}}^{\otimes 4}\right]\right) \leq \underbrace{O\left(\frac{1}{\abs{S}}\right)}_{\textrm{\Cref{itm:sum1}}} + \underbrace{\frac{\abs{S}^2 2^n + (3+6+4+6) \cdot \abs{S}^3 2^n}{2^n \abs{S}^4}}_{\textrm{\Crefrange{itm:sum2}{itm:sum6}}} \leq O\left(\frac{1}{\abs{S}}\right).
    \end{equation}
    For this reason, there must be some $N_0$ for which $N \geq N_0 \implies Z \leq \frac{1}{\sqrt{\abs{S}}} = 2^{-k/2}$. By virtue of Markov's inequality, we find
    \begin{equation}
        \Pr_{f}\left[Z_f \geq 2^{-k/4}\right] \leq 2^{-k/4},
    \end{equation}
    and it follows that
    \begin{equation}    \Pr_{\ket{\psi}_{f,S}}\left[M_2({\psi_{f,S}}) \leq \frac{k}{4} \right] \leq 2^{-k/4}.
    \end{equation}
    As an immediate corollary, when $f$ is sampled from the ensemble of $8$-wise independent pseudorandom functions, $M_\alpha({\psi_{f,S}}) = \Theta(k)$ with high probability for any $\alpha \in [0,2]$.
\end{proof}

\subsection{Robustness of magic as a pseudomagic measure}\label{subsec:lemmarobustness}

Recall the definition of the log-robustness of magic $\mathcal{R}({\rho})$ as
\begin{equation}
    \mathcal{R}({\rho})\coloneqq \log\min\left\{\norm{c}_1 \big\vert \ \rho=\sum_{\ket{\sigma} \in \Sigma}c_\sigma\st{\sigma}\right\}.
\end{equation}
Note that this definition of $\mathcal{R}({\rho})$ is the logarithm of the conventional definition given in
Ref.~\cite{heinrich_robustness_2019}.

\begin{theoremS}[Robustness of subset phase states]\label{thm:robustness}
The robustness of subset phase states satisfies, with high probability over the ensemble
\begin{equation}
    \mathcal{R}({\psi_{f,S}}) = \Theta(\log \abs{S}).
\end{equation}
Moreover, we can state a more precise upper bound $\mathcal{R}(\psi_{f,S}) \leq \log \abs{S}$.
\begin{proof}
    The lower bound follows from combining \cref{thm:lowerboundm2} with the fact that the robustness of magic is lower bounded by $M_2$ to get $\mathcal{R}({\psi_k}) \geq M_2(\psi_k) = \Theta(\log \abs{S})$. We will now prove the upper bound by bounding the robustness of $\st{\psi_{f,S}}$. First, note that
    \begin{equation}
        \st{\psi_{f,S}}=\frac{1}{\abs{S}}\sum_{i,j}(-1)^{f(x_i)+f(x_j)}\ketbra{x_i}{x_j}=\frac{1}{\abs{S}}\sum_{i}\st{x_i}+\frac{1}{\abs{S}}\sum_{i=1}^{\abs{S}}\sum_{j=i+1}^{\abs{S}}(-1)^{f(x_i)+f(x_j)}(\ketbra{x_i}{x_j}+c.c.).
    \end{equation}
    Now, we have that $\st{x_i}$ are stabilizer states for every $x_i\in S$ and that
    \begin{equation}
        \ketbra{x_i}{x_j}+c.c=\left(\frac{\ket{x_i}+\ket{x_j}}{\sqrt{2}}\right)\left(\frac{\bra{x_i}+\bra{x_j}}{\sqrt{2}}\right)-\left(\frac{\ket{x_i}-\ket{x_j}}{\sqrt{2}}\right)\left(\frac{\bra{x_i}-\bra{x_j}}{\sqrt{2}}\right) \eqqcolon \sigma_{i,j}^{+}-\sigma_{i,j}^{-},
    \end{equation}
    where $\sigma_{i,j}^{\pm}$ are stabilizer states. Therefore, we can express the density matrix for phase states
    \begin{equation}
        \st{\psi_{f,S}}=\frac{1}{\abs{S}}\sum_{i}\st{x_i}+\frac{1}{\abs{S}}\sum_{i=1}^{\abs{S}}\sum_{j=i+1}^{\abs{S}}(-1)^{f(x_i)+f(x_j)}(\sigma_{i,j}^{+}-\sigma_{i,j}^{-}).
    \end{equation}
    From the above expression we can easily upper bound the robustness of magic as
    \begin{equation}
        \mathcal{R}(\psi_{f,S})\le \log\left(\frac{\abs{S}}{\abs{S}}+\frac{2}{\abs{S}}\sum_{i=1}^{\abs{S}}\sum_{j=i+1}^{\abs{S}} 1\right) = \log\left(1+ \frac{2}{\abs{S}} \frac{\abs{S}(\abs{S}-1)}{2}\right)= \log \abs{S}.
    \end{equation}
\end{proof}
\end{theoremS}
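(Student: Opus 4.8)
The plan is to establish the two directions separately: the lower bound follows by reduction to a stabilizer-entropy estimate already in hand, while the upper bound is obtained by writing down an explicit stabilizer decomposition of the density operator and tallying the $\ell_1$ weight of the coefficients.

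For the lower bound, I would invoke the inequality $M_{\alpha}(\psi) < 2\mathcal{R}(\psi)$, which holds for pure states and $\alpha > 1/2$ (listed among the properties of the stabilizer entropies). This converts the problem into lower bounding a stabilizer R\'enyi entropy, for which \cref{thm:lowerboundm2} already supplies the tight estimate $M_2(\psi_{f,S}) = \Theta(\log\abs{S})$ with high probability. Chaining these gives $\mathcal{R}(\psi_{f,S}) > \tfrac{1}{2}M_2(\psi_{f,S}) = \Omega(\log\abs{S})$, and the high-probability guarantee is inherited verbatim from the $M_2$ statement at no additional cost.

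For the upper bound, the idea is to expand $\st{\psi_{f,S}}$ and separate diagonal from off-diagonal contributions, $\st{\psi_{f,S}} = \tfrac{1}{\abs{S}}\sum_i \st{x_i} + \tfrac{1}{\abs{S}}\sum_{i<j}(-1)^{f(x_i)+f(x_j)}(\ketbra{x_i}{x_j} + \text{c.c.})$. The diagonal projectors $\st{x_i}$ are computational basis states and hence stabilizer states. For each off-diagonal pair I would use $\ketbra{x_i}{x_j} + \text{c.c.} = \sigma_{i,j}^+ - \sigma_{i,j}^-$, where $\sigma_{i,j}^\pm$ project onto $(\ket{x_i} \pm \ket{x_j})/\sqrt{2}$; these lie in $\Sigma$ because an equal superposition of two computational basis vectors is a stabilizer state. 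This rewrites $\st{\psi_{f,S}}$ as a real linear combination of elements of $\Sigma$, which is exactly the object whose minimal $\ell_1$ norm defines $\mathcal{R}$.

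Accounting for the weights then closes the bound: the $\abs{S}$ diagonal terms contribute $\abs{S}\cdot\tfrac{1}{\abs{S}} = 1$, and each of the $\binom{\abs{S}}{2}$ off-diagonal pairs contributes $\tfrac{2}{\abs{S}}$ (magnitude $\tfrac{1}{\abs{S}}$ for each of $\sigma^+$ and $\sigma^-$), giving $\norm{c}_1 \le 1 + \tfrac{2}{\abs{S}}\binom{\abs{S}}{2} = \abs{S}$ and therefore $\mathcal{R}(\psi_{f,S}) \le \log\abs{S}$. I expect the only delicate point to be verifying that the $\sigma_{i,j}^\pm$ genuinely belong to $\Sigma$; once that is granted, the remainder is a routine coefficient count, and no probabilistic argument is needed for the upper bound since the decomposition is deterministic for every fixed $f$ and $S$.
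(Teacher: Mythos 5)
Your proposal is correct and follows essentially the same route as the paper: the lower bound via the relation between $\mathcal{R}$ and $M_2$ combined with the tight $M_2 = \Theta(\log\abs{S})$ estimate, and the upper bound via the identical diagonal/off-diagonal stabilizer decomposition with the same $\ell_1$ coefficient count yielding $\norm{c}_1 \le \abs{S}$. Your use of the explicit constant in $M_2(\psi) < 2\mathcal{R}(\psi)$ is in fact slightly more careful than the paper's bare statement that $\mathcal{R} \ge M_2$, but this only affects constants and not the $\Theta(\log\abs{S})$ conclusion.
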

\begin{corollaryS}[Robustness of magic as a pseudomagic measure]\label{cor:robustness}
The robustness of magic is a good candidate for defining the notion of pseudomagic.
\end{corollaryS}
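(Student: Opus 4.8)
The plan is to mirror the argument for \cref{cor:stabilizerentropy}, assembling a concrete pair of computationally indistinguishable ensembles whose $\mathcal{R}$-values are separated by an extensive gap. For the low-robustness side I would take the pseudorandom subset phase states $\{\ket{\psi_{f,S}}\}$ with $\abs{S}=2^{\omega(\log n)}$; \cref{thm:robustness} already supplies $\mathcal{R}(\psi_{f,S})=\Theta(\log\abs{S})$ with high probability, so choosing $\abs{S}=2^{\Theta(\poly\log n)}$ keeps this ensemble both pseudorandom and at robustness $\Theta(\poly\log n)$. For the high-robustness side I would take the Haar ensemble $\mathcal{E}_{\haar}$.

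The only genuinely new estimate needed is a matching lower bound on the robustness of a typical Haar state. This is immediate from the listed property $M_{\alpha}(\psi)<2\mathcal{R}(\psi)$ (valid for pure states and $\alpha>1/2$) together with \cref{lem:levy}: a Haar random state satisfies $M_{2}(\psi)=\Omega(n)$ with overwhelming probability, whence $\mathcal{R}(\psi)>M_{2}(\psi)/2=\Omega(n)$ with overwhelming probability. Combining the two sides with the pseudorandomness of subset phase states from Ref.~\cite{aaronson_quantum_2023} then exhibits $(\{\ket{\psi_{f,S}}\},\mathcal{E}_{\haar})$ as a computationally indistinguishable pair with $\mathcal{R}=\Theta(\poly\log n)$ versus $\mathcal{R}=\Omega(n)$, i.e.\ a pseudomagic pair in the sense of \cref{def:pseudomagic} with $\mathcal{M}=\mathcal{R}$.

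Since this is a corollary, I do not expect a real obstacle: the heavy lifting (the $\Theta(\log\abs{S})$ bound) was discharged in \cref{thm:robustness}, and the Haar lower bound reduces to a one-line application of a known inequality plus Levy concentration. The single point meriting care is that a legitimate pseudomagic measure must force the low-magic side to exceed $\omega(\log n)$; here that requirement is automatically met because \cref{lem:stab-bound} lower-bounds $M_{\alpha}$ for any pseudorandom ensemble, and $\mathcal{R}\ge M_{\alpha}/2$ inherits the same bound, so the construction is internally consistent and no separate lower-bound argument for $\mathcal{R}$ on the pseudorandom side is required.
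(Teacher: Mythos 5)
Your proposal is correct and follows exactly the route the paper intends: the paper states this corollary without an explicit proof because it is meant to follow immediately from \cref{thm:robustness} (giving $\mathcal{R}(\psi_{f,S})=\Theta(\log\abs{S})=\Theta(\poly\log n)$ for the pseudorandom subset phase states) combined with the bound $M_{2}(\psi)<2\mathcal{R}(\psi)$ and Levy concentration for Haar states, in direct analogy with \cref{cor:stabilizerentropy}. Your additional check that the low-robustness side still satisfies the $\omega(\log n)$ floor via \cref{lem:stab-bound} is consistent with the paper's framework and requires no further argument.
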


\subsection{Sufficient conditions for pseudomagic measures}\label{subsec:sufficient}
We first show that a sufficient criterion for a magic measure $\mathcal{M}$ to be a good pseudomagic measure is that $\mathcal{M}$ is lower bounded by a stabilizer entropy and upper bounded by the robustness of magic.

\begin{theoremS}[Sufficient conditions for bounded
good pseudomagic measures]\label{thm:sufficient}
Let $\mathcal{M}({\psi})$ be a magic monotone for pure states. Then, if  $\mathcal{M}({\psi})$ is bounded as $\Omega(M_\alpha({\psi}))\leq \mathcal{M}({\psi}) \leq O(\mathcal{R}({\psi}))$ for some constant $\alpha$, then $(i)$ $\mathcal{M}=\omega(\log n)$ for any ensemble of pseudorandom states; $(ii)$ $\mathcal{M}$ is a measure of pseudomagic with maximum gap $\Omega(n)$ vs.~$O(\poly\log n)$.
\begin{proof}
    From \cref{lem:stab-bound}, we have that for every ensemble of pseudorandom states $\mathcal{E}$, $\mathcal{M}({\psi_k})=\omega(\log n)$ with high probability over the choice of $\ket{\psi_k}\in\mathcal{E}$. From \cref{thm:robustness}, we have that for pseudorandom subset phase states $\mathcal{M}({\psi_k})=O(\log \abs{S})$ and thus for $\abs{S}=2^{O(\poly\log n)}$, $\mathcal{M}({\psi_k})=O(\poly\log n)$, satisfying the definition of pseudomagic.
\end{proof}
\end{theoremS}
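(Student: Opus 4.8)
The plan is to prove both claims by sandwiching $\mathcal{M}$ between the two magic measures that appear in the hypothesis, and then invoking the bounds already established for each of them. For part $(i)$ I would use only the lower bound $\mathcal{M}(\psi)\ge\Omega(M_\alpha(\psi))$: since \cref{lem:stab-bound} guarantees that any pseudorandom ensemble has $M_\alpha(\psi)=\omega(\log n)$ with high probability, the very same high-probability event forces $\mathcal{M}(\psi)=\Omega(\omega(\log n))=\omega(\log n)$. A constant prefactor hidden inside $\Omega$ cannot degrade a super-logarithmic lower bound, so no additional work is needed for this direction.

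For part $(ii)$ I would exhibit an explicit pseudomagic pair. For the low-magic ensemble I would take the pseudorandom subset phase states $\{\ket{\psi_{f,S}}\}$ with $\abs{S}=2^{O(\poly\log n)}$; applying the upper bound $\mathcal{M}(\psi)\le O(\mathcal{R}(\psi))$ together with \cref{thm:robustness}, which gives $\mathcal{R}(\psi_{f,S})=\Theta(\log\abs{S})=O(\poly\log n)$, yields $\mathcal{M}(\psi_{f,S})=O(\poly\log n)$ with high probability. For the high-magic ensemble I would take Haar random states: the lower bound $\mathcal{M}\ge\Omega(M_\alpha)$ combined with \cref{lem:Haar}, which shows Haar states have $M_\alpha=\Omega(n)$ at constant $\alpha$, gives $\mathcal{M}=\Omega(n)$, and since every monotone considered here is bounded by $n+o(1)$ this is in fact $\Theta(n)$.

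The two ensembles are computationally indistinguishable because subset phase states with $\abs{S}=2^{\omega(\log n)}$ are pseudorandom, and the regime $\abs{S}=2^{O(\poly\log n)}$ falls strictly inside it; indistinguishability from the Haar ensemble is then exactly the content of \cref{def:pseudorandom}. This assembles a pseudomagic pair with gap $\Omega(n)$ vs.~$O(\poly\log n)$, establishing $(ii)$. I expect the only genuine subtlety to be bookkeeping with the asymptotic regimes: one must check that the constant in $\mathcal{M}\ge\Omega(M_\alpha)$ does not collapse the $\omega(\log n)$ separation, and that the $O(\mathcal{R})$ upper bound keeps the low-magic ensemble strictly below the high-magic one. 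Because both directions reduce to previously proven estimates — the lower bound coming from the pseudorandomness distinguisher of \cref{lem:stab-bound} and the upper bound from the explicit stabilizer decomposition of \cref{thm:robustness} — there is no new analytic obstacle, and the argument is essentially a sandwiching step layered on top of the transitivity of computational indistinguishability.
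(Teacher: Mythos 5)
Your proposal is correct and follows essentially the same route as the paper: part $(i)$ via the lower bound $\mathcal{M}\ge\Omega(M_\alpha)$ combined with \cref{lem:stab-bound}, and part $(ii)$ via the upper bound $\mathcal{M}\le O(\mathcal{R})$ combined with \cref{thm:robustness} applied to subset phase states with $\abs{S}=2^{O(\poly\log n)}$. The only difference is that you spell out the high-magic side (Haar states with $M_\alpha=\Omega(n)$, hence $\mathcal{M}=\Theta(n)$) and the indistinguishability explicitly, which the paper leaves implicit; this is a matter of detail, not of method.
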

Then, we have the following two corollaries that readily follow from \cref{thm:sufficient}.

\begin{corollaryS}[Stabilizer fidelity, stabilizer extent, and max-relative entropy]\label{cor:3-pseudomagic}
Each of the following magic measures fulfill the conditions in the definition of pseudomagic, hence they can be considered good pseudomagic measures with gap $\Theta(n)$ vs. $\Theta(\poly \log n)$.
\begin{enumerate}[label=(\roman*)]
    \item\label{itm:item1} the stabilizer fidelity $\mathcal{F}_{\mathrm{stab}}({\psi})\coloneqq -\log\max_{\ket{\sigma} \in \Sigma}\abs{\braket{\sigma}{\psi}}^2$,
    \item\label{itm:item2} stabilizer extent $\xi({\psi})\coloneqq \log\min\left\{\sum_\phi \abs{c_{\phi}}^2 \mid \ket{\psi} = \sum_{\ket{\phi} \in \Sigma} c_\phi \ket{\phi} \right\}$, and
    \item\label{itm:item3} the max relative entropy of magic $D_{\max}(\ket{\psi})=\log\min\{\lambda \mid \lambda\sigma-\psi\geq 0\}$, where $A \geq 0$ means $A$ is positive semidefinite (when $A$ is a matrix).
\end{enumerate}
\begin{proof}
    These magic measures obey bounds with respect to the stabilizer R\'enyi entropy $M_2$ and the robustness of magic $\mathcal{R}$ that have already been established.
    \begin{equation}
        \frac{1}{4} M_2(\psi) < \mathcal{F}_{\textrm{stab}}(\psi) \leq \xi(\psi), D_{\textrm{max}}(\psi) \leq \mathcal{R}(\psi),
    \end{equation}
    where the first inequality was shown in Ref. \cite{haug2023stabilizer} and the last two were shown in Ref. \cite{liu2022many}.
    The tightness of the gap $\Theta(n)$ vs. $\Theta(\poly \log n)$ (as opposed to $\Omega(n)$ vs. $O(\poly \log n)$) is due to the fact that when $\psi$ is a pseudorandom subset phase state, we have already shown that $M_2(\psi) = \Theta(\log \abs{S})$ in \cref{thm:lowerboundm2} and $\mathcal{R}(\psi) = \Theta(\log \abs{S})$ in \cref{thm:robustness}. Furthermore, we know $M_2(\psi) = \Omega(n)$ and $\mathcal{R}(\psi) = \Omega(n)$ for Haar random states, but since both of these measures are upper bounded by $n+o(1)$ anyways, they can trivially be rewritten $M_2(\psi) = \Theta(n)$ and $\mathcal{R}(\psi)=\Theta(n)$.
\end{proof}
\end{corollaryS}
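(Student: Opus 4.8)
The plan is to treat this as a direct application of the sufficient-condition theorem \cref{thm:sufficient}: once I exhibit, for a single fixed $\alpha$, a sandwich $\Omega(M_\alpha(\psi)) \le \mathcal{M}(\psi) \le O(\mathcal{R}(\psi))$ for each of $\mathcal{F}_{\mathrm{stab}}$, $\xi$, and $D_{\max}$, the theorem immediately yields that each is a good pseudomagic measure with gap $\Omega(n)$ vs.\ $O(\poly\log n)$. So the substance of the proof is assembling the right chain of inequalities, which I would anchor at $\alpha = 2$.

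For the lower end, I would use the bound $M_\alpha(\psi) < \tfrac{2\alpha}{\alpha-1}\mathcal{F}_{\mathrm{stab}}(\psi)$ (pure states, $\alpha > 1$, from \cite{haug2023stabilizer}), which at $\alpha = 2$ gives $\tfrac14 M_2(\psi) < \mathcal{F}_{\mathrm{stab}}(\psi)$. For the upper end and to reach items (ii) and (iii), I would invoke the pure-state comparisons $\mathcal{F}_{\mathrm{stab}}(\psi) \le \xi(\psi),\, D_{\max}(\psi) \le \mathcal{R}(\psi)$ from \cite{liu2022many}. Stringing these together gives, for all pure $\psi$,
\begin{equation}
\frac14 M_2(\psi) < \mathcal{F}_{\mathrm{stab}}(\psi) \le \xi(\psi),\ D_{\max}(\psi) \le \mathcal{R}(\psi),
\end{equation}
so each measure satisfies the hypothesis of \cref{thm:sufficient} with $\alpha = 2$, establishing the gap $\Omega(n)$ vs.\ $O(\poly\log n)$.

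To sharpen this to the tight $\Theta(n)$ vs.\ $\Theta(\poly\log n)$ gap claimed in the statement, I would evaluate the same sandwich on the two ensembles. On pseudorandom subset phase states, \cref{thm:lowerboundm2} gives $M_2(\psi) = \Theta(\log\abs{S})$ and \cref{thm:robustness} gives $\mathcal{R}(\psi) = \Theta(\log\abs{S})$; squeezing pins each of the three measures to $\Theta(\log\abs{S}) = \Theta(\poly\log n)$ on the low-magic side. On Haar-random states the same squeeze gives $\Omega(n)$, which I would promote to $\Theta(n)$ using the fact that all three measures are bounded above by $n + o(1)$.

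The step I expect to require the most care is the bookkeeping that licenses \cref{thm:sufficient} rather than any single estimate: its hypothesis insists on a lower bound by a stabilizer entropy at one \emph{constant} $\alpha$, so I must check that the three cited literature inequalities can all be anchored to the same $M_2$ (not to different $M_\alpha$ for different measures), and that each of $\mathcal{F}_{\mathrm{stab}}$, $\xi$, $D_{\max}$ is indeed a pure-state magic monotone, the remaining hypothesis of the theorem.
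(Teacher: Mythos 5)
Your proposal is correct and follows essentially the same route as the paper: the same sandwich $\tfrac14 M_2(\psi) < \mathcal{F}_{\mathrm{stab}}(\psi) \le \xi(\psi), D_{\max}(\psi) \le \mathcal{R}(\psi)$ anchored at $\alpha=2$, fed into the sufficient-condition theorem, with tightness obtained by squeezing between $M_2 = \Theta(\log\abs{S})$ and $\mathcal{R} = \Theta(\log\abs{S})$ on subset phase states and by the $n+o(1)$ upper bound on Haar-random states. Your explicit check that all three literature inequalities can be anchored to the single constant $\alpha=2$ is a sensible precaution that the paper leaves implicit.
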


\subsection{Stabilizer nullity is a poor candidate to define pseudomagic}\label{subsec:stabnull}
In this section, we show that the stabilizer nullity is a poor candidate to define pseudomagic. More precisely, we show that pseudorandom states must necessarily have maximal stabilizer nullity, mirroring the behavior observed in Haar-random states. To start, we formally define stabilizer nullity, first introduced in Ref.~\cite{beverland_lower_2020} (see also Ref.~\cite{jiang_lower_2021}). Following the notation of Ref.~\cite{leone2023learningdopedstates}, for any $\ket{\psi}$, one can associate a subset $G_{\psi}$ of the Pauli group $\mathbb{P}_n$ comprised of Pauli operators that have unit expectation value on $\psi$:
\begin{equation}
    G_{\psi}\coloneqq \{P\in\mathbb{P}_n \mid \tr(P\psi)=\pm1\}.
\end{equation}
It is easy to show that $G_{\psi}$ is a group~\cite{jiang_lower_2021}, so we say $G_{\psi}$ is the \emph{stabilizer group} associated with the state $\psi$. For stabilizer states $\sigma$, the cardinality of $G_{\sigma}$ is maximal: $|G_{\sigma}|=d$. Conversely, for non stabilizer states $\psi$ one has $|G_{\psi}|<d$. Therefore, one can define a magic measure called the \emph{stabilizer nullity}
\begin{equation}
    \nu(\psi)\coloneqq n-\log\abs{G_{\psi}}.
\end{equation}
We are now ready to present the outcome of this section, contained in the following lemma.
\begin{lemmaS}\label{lem:stabnullity}
Any ensemble $\mathcal{E}=\{\ket{\psi_k}\}$ of computationally pseudorandom quantum states must obey $\nu(\psi_k)=n$ with high probability over $k$.
\end{lemmaS}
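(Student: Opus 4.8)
The plan is to prove the contrapositive via an efficient distinguisher, exploiting a rigid structural consequence of non-maximal nullity. The crucial observation is that if $\nu(\psi)<n$ then there is a \emph{non-identity} Pauli $P\in\mathbb{P}_n$ with $P\ket{\psi}=\pm\ket{\psi}$, and this single stabilizing Pauli forces \emph{half} of all Pauli expectation values to vanish: for any $Q$ anticommuting with $P$ we have $\tr(Q\psi)=\bra{\psi}Q\ket{\psi}=\bra{\psi}PQP\ket{\psi}=-\bra{\psi}Q\ket{\psi}=0$, using $P\ket{\psi}=\pm\ket{\psi}$ and $PQP=-Q$. Hence \emph{every} Pauli with $\tr(Q\psi)\neq 0$ must commute with $P$. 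This is the lever that turns an otherwise exponentially weak signal (a single stabilizing Pauli contributes only $\tr^2(P\psi)/d=2^{-n}$ to the Weyl distribution) into an efficiently detectable one.

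First I would set up the distinguisher using Bell sampling. Measuring two copies of $\ket{\psi}$ in the Bell basis returns a Pauli $Q$ with probability proportional to $\tr^2(Q\psi)$ (up to the usual transpose/conjugation, which only sends $P\mapsto\pm P$ and preserves commutation relations). By the observation above, when $\nu(\psi)<n$ every sampled $Q$ commutes with the fixed stabilizer $P$; equivalently, in the symplectic representation over $\mathbb{F}_2^{2n}$, the vectors of all samples are symplectically orthogonal to the nonzero vector of $P$. The distinguisher $\mathcal{A}$ therefore draws $m=O(n)$ Bell samples $Q_1,\dots,Q_m$ (using $2m=O(n)$ copies) and, by efficient linear algebra over $\mathbb{F}_2$, tests whether their symplectic span is all of $\mathbb{F}_2^{2n}$; it outputs $1$ (``structured'') iff the span is a proper subspace, i.e.\ iff some non-identity Pauli lies in the common centralizer.

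Next I would bound the two acceptance probabilities. On the ensemble $\mathcal{E}$, whenever $\nu(\psi_k)<n$ the stabilizer $P$ lies in the common centralizer of \emph{every} possible set of samples, so $\mathcal{A}$ outputs $1$ with probability at least $\Pr_k[\nu(\psi_k)<n]$. On Haar-random inputs I would show that $m=O(n)$ Bell samples span $\mathbb{F}_2^{2n}$ with overwhelming probability, so $\mathcal{A}$ outputs $1$ only with negligible probability. Concretely, by Levy's lemma (\cref{lem:levy}) the Haar Weyl distribution $\tr^2(Q\psi)/d$ is nearly uniform over the $4^n$ Paulis, so every hyperplane of $\mathbb{F}_2^{2n}$ carries mass at most $1/2+o(1)$; a union bound over the $\le 2^{2n}$ hyperplanes then gives $\Pr_{\haar}[\text{samples fail to span}]\le 2^{2n}(1/2+o(1))^{m}=\negl(n)$ for suitable $m=O(n)$. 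Assuming $\Pr_k[\nu(\psi_k)<n]\ge 1/\poly(n)$ then yields a distinguishing advantage $\ge 1/\poly(n)-\negl(n)$, contradicting pseudorandomness (\cref{def:pseudorandom}); hence $\nu(\psi_k)=n$ with high probability.

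The main obstacle is the Haar-side spanning estimate: one must rule out the exponentially many proper subspaces of $\mathbb{F}_2^{2n}$ in which $O(n)$ samples might accidentally lie. This reduces to showing that the Haar Weyl distribution places no more than roughly half its mass on any symplectic hyperplane, which I expect to follow from the concentration of $\tr^2(Q\psi)$ around $2^{-n}$ established in \cref{lem:levy}; controlling the residual non-uniformity of the distribution (and the negligible identity outcome, whose symplectic vector is zero and never aids spanning) is the only delicate point.
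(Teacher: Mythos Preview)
Your proposal is correct and reconstructs precisely the Bell-sampling distinguisher that the paper simply defers to Ref.~\cite{grewal2023improved}: sample Paulis with probability $\tr^2(Q\psi)/d$ and test whether $O(n)$ samples span $\mathbb{F}_2^{2n}$. One minor sharpening: the paper's \cref{lem:levy} concerns $M_\alpha$, not the hyperplane masses you need; the cleanest route on the Haar side is the exact identity $\sum_{Q:[Q,P]=0}\tr^2(Q\psi)/d=\tfrac{1}{2}\bigl(1+\tr^2(P\psi)\bigr)$, which reduces your ``mass at most $1/2+o(1)$'' claim to the standard per-Pauli concentration $\max_{P\neq I}|\tr(P\psi)|=o(1)$ and makes the union bound over the $4^n-1$ centralizer-hyperplanes go through for $m=O(n)$.
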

\begin{proof}
    The proof of this lemma follows easily from the fundamental results of Ref.~\cite{grewal2023improved}: it shows that an ensemble of pseudorandom quantum states $\mathcal{E}$ must feature $|G_{\psi_k}|=1$ with high probability over the choice of $k$. The proof is based on two key facts: $(i)$ Haar random states obey $\log\abs{G_{\psi}}=0$ with overwhelming probability; $(ii)$ the construction of an efficient algorithm that is able to discriminate whether $\log\abs{G_{\psi_k}}>1$ or $\log\abs{G_{\psi_k}}=1$. A simple corollary is that the stabilizer nullity for the ensemble $\mathcal{E}$ of pseudorandom states is $\nu(\psi_k)=n$ with high probability over the choice of $k$.
\end{proof}
This lemma tells us that the stabilizer nullity is unlikely to be a suitable magic measure for defining pseudomagic, as no ensemble of pseudorandom quantum states can exhibit a nonzero gap in stabilizer nullity compared to Haar-random states. Nevertheless, due to the generality of the definition for psedomagic compared to pseudorandom states, the possibility remains open that there are pseudomagic states, which do not meet the criteria of computational pseudorandomness, yet exhibit a substantial pseudomagic gap. This remains an open question for future investigations.

Beyond this, \cref{lem:stabnullity} has broader implications for the relationship between stabilizer nullity and other magic measures (particularly, the robustness of magic). \cref{cor:3-pseudomagic} establishes a sufficient condition for magic monotones $\mathcal{M}$ to exhibit a substantial pseudomagic gap, assuming that $\mathcal{M}$ is bounded by a certain stabilizer entropy $M_{\alpha}$ and the robustness $\mathcal{R}$. As mentioned in \smref{sec:prelim}, the stabilizer nullity is lower bounded by $M_{\alpha}(\psi)\le \nu(\psi)$ for any $\alpha$. On the other hand, there exists no known corresponding upper bound involving the robustness of magic. Notably, simply by combining \cref{cor:3-pseudomagic,lem:stabnullity}, we arrive at the following corollary, which rules out the possibility of having any upper bound on the stabilizer nullity in terms of the robustness of magic.
\begin{corollaryS}
    Let $\nu(\psi)$ be the stabilizer nullity and $\mathcal{R}(\psi)$ be the robustness of magic defined in Table 1 of the main text. Then, there is no constant $a$ such that $\nu(\psi)\le a\mathcal{R}(\psi)$ for every state $\psi$.
\end{corollaryS}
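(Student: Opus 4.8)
The plan is to argue by contradiction, producing for each large $n$ a single state that simultaneously has maximal stabilizer nullity yet only polylogarithmic robustness of magic, so that the ratio $\nu/\mathcal{R}$ diverges. The two ingredients are already available: \cref{lem:stabnullity} shows that any ensemble of pseudorandom states has $\nu(\psi)=n$ with high probability, while \cref{thm:robustness} shows that subset phase states satisfy $\mathcal{R}(\psi_{f,S})=\Theta(\log\abs{S})$ with high probability. The pseudorandom subset phase states $\ket{\psi_{f,S}}$ lie in both regimes at once, and are therefore exactly the witnesses needed.

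Concretely, first I would fix the subset size. Choosing $\abs{S}=2^{\poly\log n}$, say $\abs{S}=2^{\log^2 n}$, serves two purposes at once. Since $\log^2 n=\omega(\log n)$, the resulting subset phase states satisfy \cref{def:pseudorandom}, so \cref{lem:stabnullity} applies and yields $\nu(\psi_{f,S})=n$; and since $\log\abs{S}=\log^2 n=\poly\log n$, \cref{thm:robustness} gives $\mathcal{R}(\psi_{f,S})=\Theta(\log^2 n)$. Both statements hold with high probability over the draw of $f$ (and $S$), so a union bound guarantees that for every sufficiently large $n$ there is at least one choice of $f,S$ for which both $\nu(\psi_{f,S})=n$ and $\mathcal{R}(\psi_{f,S})=\Theta(\log^2 n)$ hold.

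Now suppose, toward a contradiction, that some constant $a$ satisfied $\nu(\psi)\le a\,\mathcal{R}(\psi)$ for every state $\psi$. Applying this to the witness state just produced would force
\begin{equation}
    n=\nu(\psi_{f,S})\le a\,\mathcal{R}(\psi_{f,S})=a\cdot\Theta(\log^2 n),
\end{equation}
i.e.\ $n=O(\log^2 n)$, which fails for all sufficiently large $n$ and any fixed constant $a$. This contradiction completes the argument.

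I do not expect a genuine obstacle here, as the corollary is essentially a repackaging of the two previously established facts; the only point requiring care is the choice of $\abs{S}$. It must be large enough ($2^{\omega(\log n)}$) for pseudorandomness, and hence for \cref{lem:stabnullity} to force maximal nullity, yet small enough ($2^{\poly\log n}$) for \cref{thm:robustness} to keep the robustness polylogarithmic. The window $2^{\omega(\log n)}\le\abs{S}\le 2^{\poly\log n}$ is nonempty, so both constraints are simultaneously met, and the two high-probability guarantees can be intersected by a union bound since both are negligible-failure events over the same randomness.
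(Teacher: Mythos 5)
Your proof is correct and follows essentially the same route as the paper: it exhibits pseudorandom subset phase states as witnesses that simultaneously have maximal stabilizer nullity (via \cref{lem:stabnullity}) and only $\Theta(\log\abs{S})=\poly\log n$ robustness (via \cref{thm:robustness}), which is exactly the combination the paper invokes. The only cosmetic difference is that the paper phrases the contradiction through the sufficient-condition framework of \cref{cor:3-pseudomagic}, whereas you apply the two bounds directly; note also that the upper bound $\mathcal{R}(\psi_{f,S})\le\log\abs{S}$ in \cref{thm:robustness} is deterministic, so your union bound is not even needed for that half.
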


\section{The continuity of pseudorandomness}\label{sec:prs-cont}

\begin{lemmaS}[Indistinguishability of ensembles]\label{lem:error-ensemble}
For any two ensembles of states $\mathcal{E} \coloneqq  \{\psi_k\}$ and $\mathcal{E}' \coloneqq  \{\phi_k\}$, if
\begin{equation}
    \TV(\psi_k,\phi_k) < o\left(\frac{1}{\poly n}\right)
\end{equation}
for all $k$, then $\mathcal{E}$ and $\mathcal{E}'$ are statistically (hence computationally) indistinguishable.
\end{lemmaS}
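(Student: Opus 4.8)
The plan is to reduce the statement to a standard hybrid argument: I would show that the $t$-copy states are close in trace distance whenever the single copies are, and then invoke the fact that trace distance upper bounds any distinguishing advantage (computational or not). Throughout I take $\TV$ to be the trace distance and let $t=\poly(n)$ denote the number of copies a distinguisher may request.

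First I would fix the key $k$ and the copy budget $t=\poly(n)$, and bound the distance between $\psi_k^{\otimes t}$ and $\phi_k^{\otimes t}$ by the single-copy distance via a telescoping hybrid. Define the hybrids $H_i \coloneqq \psi_k^{\otimes i}\otimes\phi_k^{\otimes(t-i)}$ for $i=0,\dots,t$, so that $H_0=\phi_k^{\otimes t}$ and $H_t=\psi_k^{\otimes t}$. Because the trace distance is invariant under tensoring both arguments with a common state, each consecutive pair satisfies $\TV(H_{i-1},H_i)=\TV(\psi_k,\phi_k)$, and the triangle inequality yields the subadditivity bound $\TV(\psi_k^{\otimes t},\phi_k^{\otimes t})\le t\,\TV(\psi_k,\phi_k)$.

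Next I would substitute the hypothesis. Since $\TV(\psi_k,\phi_k)=o(1/\poly n)$ decays faster than any inverse polynomial, multiplying by the polynomial factor $t$ leaves a quantity that is still $o(1/\poly n)$, hence $\negl(n)$. As this holds for every $k$, the worst-case $t$-copy distance is negligible; by joint convexity of the trace distance the key-averaged distance $\TV(\E_k[\psi_k^{\otimes t}],\E_k[\phi_k^{\otimes t}])$ is no larger, so the ensembles are statistically indistinguishable given polynomially many copies under either the matched or the averaged formulation.

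Finally, to upgrade \emph{statistically} to \emph{computationally}, I would invoke the Holevo--Helstrom bound: the optimal success probability of any procedure, even a computationally unbounded one, at distinguishing two states is governed by their trace distance. A CBO is in particular such a procedure, so its advantage is at most $\TV(\psi_k^{\otimes t},\phi_k^{\otimes t})=\negl(n)$, and computational indistinguishability follows a fortiori. The only real subtlety — and the step I would be most careful about — is the bookkeeping of what $o(1/\poly n)$ means, so that the polynomial blow-up incurred by the $t$-copy hybrid is genuinely absorbed; the rest is routine.
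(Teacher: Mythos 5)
Your proposal is correct and follows essentially the same route as the paper: the subadditivity bound $\TV(\psi_k^{\otimes t},\phi_k^{\otimes t})\le t\cdot\TV(\psi_k,\phi_k)$ followed by convexity/triangle inequality over the key average, with statistical indistinguishability implying computational indistinguishability. You merely spell out two steps the paper leaves implicit — the telescoping-hybrid proof of subadditivity and the Holevo--Helstrom justification for the final implication — which is fine.
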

\begin{proof}
    This follows from the fact that for any $t = O(\poly n)$, $\TV(\psi_k^{\otimes t}, \phi_k^{\otimes t}) \leq t \cdot \TV(\psi_k, \phi_k) < o\left(\frac{1}{\poly n}\right)$. Then by a simple triangle inequality,
    \begin{equation}
        \TV\left(\E_{\psi_k \sim \mathcal{E}}[\ketbra{\psi_k}{\psi_k}^{\otimes t}], \E_{\phi_k \sim \mathcal{E}}[\ketbra{\phi_k}{\phi_k}^{\otimes t}]\right) \leq \E\left[\TV(\ketbra{\psi_k}{\psi_k}^{\otimes t}, \ketbra{\phi_k}{\phi_k}^{\otimes t})\right] < o\left(\frac{1}{\poly n}\right).
    \end{equation}
\end{proof}
\begin{lemmaS}[Transitivity of computational indistinguishability]\label{lem:transitivty}
For any ensembles $\mathcal{E}_1, \mathcal{E}_2, \mathcal{E}_3$, if $\mathcal{E}_1 \cong \mathcal{E}_2$ and $\mathcal{E}_2 \cong \mathcal{E}_3$ (where $\cong$ denotes computational indistiguishability), then $\mathcal{E}_1 \cong \mathcal{E}_3$.
\end{lemmaS}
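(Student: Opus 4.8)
The plan is to obtain transitivity directly from the triangle inequality, applied uniformly over all efficient distinguishers. The essential observation is that the definition of computational indistinguishability quantifies over \emph{every} polynomial-time distinguisher, so one and the same distinguisher can be used to invoke both hypotheses simultaneously. This is a two-hop hybrid argument with no intermediate states to interpolate, so it reduces to bookkeeping on distinguishing advantages.

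First I would fix an arbitrary efficient quantum algorithm $\mathcal{A}$ that receives $t=\poly(n)$ copies of a state drawn from the relevant ensemble, and abbreviate its acceptance probability on ensemble $\mathcal{E}_i$ by $p_i \coloneqq \Pr[\mathcal{A}(\mathcal{E}_i)=1]$. The hypothesis $\mathcal{E}_1 \cong \mathcal{E}_2$, applied to this particular $\mathcal{A}$, gives $\abs{p_1-p_2}=\negl(n)$, and the hypothesis $\mathcal{E}_2 \cong \mathcal{E}_3$, applied to the \emph{same} $\mathcal{A}$, gives $\abs{p_2-p_3}=\negl(n)$.

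The second step is to combine these bounds via the triangle inequality,
\begin{equation}
\abs{p_1-p_3} \le \abs{p_1-p_2} + \abs{p_2-p_3} = \negl(n) + \negl(n),
\end{equation}
and then to invoke the elementary closure property that the sum of two negligible functions is again negligible. Since $\mathcal{A}$ was an arbitrary efficient distinguisher, this establishes $\abs{p_1-p_3}=\negl(n)$ for every such $\mathcal{A}$, which is precisely the statement $\mathcal{E}_1 \cong \mathcal{E}_3$.

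I do not expect a genuine obstacle here. The one point that requires care is the order of quantifiers: one must hold $\mathcal{A}$ fixed across \emph{both} applications of the indistinguishability hypotheses, rather than permitting the witnessing distinguisher to differ between the two hops. Because computational indistinguishability is universally quantified over distinguishers, this is immediate, and since the chain has length two there is no accumulation of hybrids and hence no polynomial blow-up in the advantage to control.
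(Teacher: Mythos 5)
Your argument is correct and is essentially identical to the paper's own proof: fix an arbitrary efficient distinguisher, apply both indistinguishability hypotheses to that same distinguisher, and conclude via the triangle inequality together with the closure of negligible functions under addition. No discrepancies to report.
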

\begin{proof}
    This follows from a simple triangle inequality. For any efficient algorithm $\mathcal{A}$ and any $t = O(\poly n)$,
    \begin{equation}
        \begin{gathered}
            \abs{\Pr_{\psi_k \sim \mathcal{E}_1}[\mathcal{A}\big(\ket{\psi_{k}}^{\otimes t}\big)=1] - \Pr_{\psi_k \sim \mathcal{E}_3}[\mathcal{A}\big(\ket{\psi_k}^{\otimes t}\big)=1]} \\
            \leq \abs{\Pr_{\psi_k \sim \mathcal{E}_1}[\mathcal{A}\big(\ket{\psi_{k}}^{\otimes t}\big)=1] - \Pr_{\psi_k \sim \mathcal{E}_2}[\mathcal{A}\big(\ket{\psi_k}^{\otimes t}\big)=1]} + \abs{\Pr_{\psi_k \sim \mathcal{E}_2}[\mathcal{A}\big(\ket{\psi_{k}}^{\otimes t}\big)=1] - \Pr_{\psi_k \sim \mathcal{E}_3}[\mathcal{A}\big(\ket{\psi_k}^{\otimes t}\big)=1]} \\
            \leq \negl(n).
        \end{gathered}
    \end{equation}
\end{proof}
Combining \cref{lem:error-ensemble} and \cref{lem:transitivty} allows us to conclude that the pseudorandomness of an ensemble is robust to to small state preparation errors. More precisely, if we have a target ensemble of pseudorandom states $\mathcal{E} = \{\psi_k\}$, but over the course of preparing these states, we make small errors such that the resulting ensemble $\mathcal{E}'=\{\psi_k'\}$ satisfies
\begin{equation}
    \TV(\psi_k, \psi_k') < o\left(\frac{1}{\poly n}\right),
\end{equation}
for all $k$, then $\mathcal{E}'$ is still pseudorandom. This is because \cref{lem:error-ensemble} guarantees $\mathcal{E} \cong \mathcal{E}'$, which, combined with \cref{lem:transitivty}, tells us $\mathcal{E} \cong \haar \implies \mathcal{E}' \cong \haar$.

\section{Quantum chaos: Proof of Theorem 2}

\label{sec:chaosproof}
\begin{definitionS}[Quantum chaos]\label{defchaos}
Consider a unitary $U$. Its $2k$-point out-of-time-order correlators (OTOCs)~\cite{roberts_chaos_2017,hosur_chaos_2016} are
\begin{equation}
    C_{2k}(U)\coloneqq \frac{1}{d}\tr(\tilde{P}_1 Q_1\tilde{P}_2 Q_2\cdots\tilde{P}_k Q_k),
\end{equation}
where $P_i$ and $Q_i$ are non-identity Pauli operators for $i=1,\ldots, k$ and $\tilde{P}_i\coloneqq  U^{\dag}P_i U$. We say $U$ is chaotic if
\begin{equation}
    C_{2k}(U)=\tilde{O}(C_{2k}(U_{\haar})),
\end{equation}
where $\tilde{O}$ stands for an irrelevant \emph{polynomial overhead}, and $C_{2k}(U_{\haar})$ is the average of the $2k$-point OTOCs over Haar random unitaries.
\end{definitionS}

\begin{theoremS}[Exponential separation from Haar values]
    Let $\mathcal{E}$ be an ensemble of pseudomagic states. Let $\ket{\psi}\in\mathcal{E}$ and let $U$ such that $\ket{\psi}=U\ket{0}^{\otimes n}$. Then there exists $2k$-point OTOCs that are exponentially separated from the Haar value as
    \begin{equation}
        C_{2k}(U)=\Omega(\exp(n)C_{2k}(U_{\haar}))
    \end{equation}
    and thus $U$ cannot be considered chaotic by \cref{defchaos}.
    \begin{proof}
        The proof will be constructive. We simply construct a $2k$-point OTOC that exhibits exponential separation from the Haar average value.
        Let $U$ and $\ket{\psi_U}$ be unitary and state vector such that $\ket{\psi_U}=U\ket{0}^{\otimes n}$. Consider the $\alpha$-stabilizer entropy $M_{\alpha}(
        \psi_U
        )$ with integer $\alpha>1$
        \begin{equation}
            M_{\alpha}({\psi_U})=\frac{1}{1-\alpha}\log \frac{1}{d}\sum_{P \in \mathbb{P}_n}\tr^{2\alpha}(\tilde{P}\st{0}^{\otimes n})
        \end{equation}
        where $\tilde{P}\coloneqq U^{\dag}PU$. Let us exploit the fact that $\st{0}^{\otimes n}$ is pure to write $\tr^2(\tilde{P}\st{0}^{\otimes n})=\tr(\tilde{P}\st{0}^{\otimes n}\tilde{P}\st{0}^{\otimes n})$ multiple times. Expanding $\st{0}^{\otimes n}$ in terms of Pauli operators as $\st{0}^{\otimes n}=d^{-1}\sum_{Z\in\mathbb{Z}_n}Z$ with $\mathbb{Z}_n$ the commuting subgroup of the Pauli group generated by single-qubit Pauli-$Z$s, we write
        \begin{equation}
            M_{\alpha}({\psi_U})=\frac{1}{1-\alpha}\log\frac{1}{d^{2\alpha+1}}\sum_{P,Z_1,\ldots, Z_{2\alpha}}\tr(\tilde{P}Z_1\tilde{P}Z_{2}\cdots\tilde{P}Z_{2\alpha}).
        \end{equation}
        Now, let us denote by brackets $\langle\cdot\rangle_{\mathbb{P}_n,\mathbb{Z}_n}$ the group average with respect to the whole Pauli group $\mathbb{P}_n$ and the subgroup $\mathbb{Z}_n$. Noting that $C_{4\alpha}(U,P,Z_i)\coloneqq  d^{-1}\tr(\tilde{P}Z_1\tilde{P}Z_{2}\cdots\tilde{P}Z_{2\alpha})$ is a $4\alpha$-point OTOC, let us define $C_{4\alpha}(U)\coloneqq  \langle C_{4\alpha}(U,P,Z_i)\rangle_{\mathbb{P}_n,\mathbb{Z}_n}$ the averaged $4\alpha$ point OTOC. We can thus express the stabilizer entropy in terms of a single out of time ordered correlation function
        \begin{equation}
            M_{\alpha}({\psi_U})=\frac{1}{1-\alpha}\log d^2 C_{4\alpha}(U).
        \end{equation}
        Now we are finally ready to show the main result. Let $\ket{\psi_{PM}}$ be a pseudomagic state satisfying $M_{\alpha}({\psi_{PM}})=o(n)$ (the subset phase states with $\abs{S}=O(\exp \poly\log n)$ easily accomplish this). Let $\ket{\psi_{\haar}}$ be a Haar random state. With overwhelming probability, one has $M_{\alpha}({\psi_{\haar}})=\Omega(n/\alpha)$. Let us consider the difference
        \begin{equation}
            M_{\alpha}({\psi_{\haar}})-M_{\alpha}({\psi_{PM}})=\Omega(n/\alpha).
        \end{equation}
        Denoting $U_{PM}$ the unitary that constructs $\ket{\psi_{PM}}=U_{PM}\ket{0}^{\otimes n}$ and $U_{\haar}$ the equivalent for $\ket{\psi_{\haar}}$, we readily get
        \begin{equation}
            C_{4\alpha}(U_{PM})=\Omega(\exp(n)C_{4\alpha}(U_{\haar})).
        \end{equation}
        Defining $2k=4\alpha$ proves the statement.
    \end{proof}
\end{theoremS}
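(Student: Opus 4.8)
The plan is to proceed constructively, exploiting the fact that the stabilizer R\'enyi entropy $M_\alpha$ is itself, up to a logarithm, an averaged $4\alpha$-point OTOC of the preparation unitary $U$. Once this identity is in hand, the claimed exponential separation of OTOCs follows directly from the merely \emph{additive} gap in stabilizer entropy between pseudomagic states and Haar-random states.

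First I would write the $\alpha$-stabilizer entropy (for integer $\alpha>1$) of $\ket{\psi_U}=U\ket{0}^{\otimes n}$ as
\[
M_{\alpha}(\psi_U)=\frac{1}{1-\alpha}\log\frac{1}{d}\sum_{P\in\mathbb{P}_n}\tr^{2\alpha}(\tilde P\,\st{0}^{\otimes n}),
\]
with $\tilde P\coloneqq U^{\dag}PU$, which follows from the definition of $M_\alpha$ together with cyclicity of the trace and $\psi_U=U\st{0}^{\otimes n}U^{\dag}$. Since $\st{0}^{\otimes n}$ is pure, I would repeatedly apply $\tr^2(\tilde P\,\st{0}^{\otimes n})=\tr(\tilde P\,\st{0}^{\otimes n}\tilde P\,\st{0}^{\otimes n})$ to turn $\tr^{2\alpha}$ into a single trace of a $2\alpha$-fold product. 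Expanding each projector as $\st{0}^{\otimes n}=d^{-1}\sum_{Z\in\mathbb{Z}_n}Z$, where $\mathbb{Z}_n$ is the commuting subgroup generated by single-qubit Pauli-$Z$s, converts the expression into a group average over $P\in\mathbb{P}_n$ and $Z_1,\dots,Z_{2\alpha}\in\mathbb{Z}_n$ of $d^{-1}\tr(\tilde P Z_1\tilde P Z_2\cdots\tilde P Z_{2\alpha})$, which is precisely a $4\alpha$-point OTOC in the sense of \cref{defchaos}. Writing $C_{4\alpha}(U)$ for this averaged correlator, I arrive at the clean identity $M_\alpha(\psi_U)=\frac{1}{1-\alpha}\log\!\left(d^2\,C_{4\alpha}(U)\right)$, equivalently $C_{4\alpha}(U)=d^{-2}\,2^{(1-\alpha)M_\alpha(\psi_U)}$.

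Finally I would invoke the magic gap. For a pseudomagic state $\ket{\psi_{PM}}$ --- e.g.\ a subset phase state with $\abs{S}=O(\exp\poly\log n)$, for which \cref{thm:malpha-prs} gives $M_\alpha(\psi_{PM})=o(n)$ --- and for a Haar-random state $\ket{\psi_{\haar}}$, for which \cref{lem:levy} gives $M_\alpha(\psi_{\haar})=\Omega(n/\alpha)$ with overwhelming probability, the difference satisfies $M_\alpha(\psi_{\haar})-M_\alpha(\psi_{PM})=\Omega(n/\alpha)$ for fixed $\alpha\ge 2$. Substituting the identity above for the two preparation unitaries $U_{PM}$ and $U_{\haar}$, this additive gap in $M_\alpha$ exponentiates into a multiplicative one, yielding $C_{4\alpha}(U_{PM})=\Omega(\exp(n)\,C_{4\alpha}(U_{\haar}))$; setting $2k=4\alpha$ (so $k\ge 4$) gives the statement.

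The step I expect to be the main obstacle is the bookkeeping in the middle: verifying that the repeated purity substitution together with the $\mathbb{Z}_n$ expansion reproduces the OTOC operator ordering exactly, with the correct power of $d$, so that the prefactors $d^2$ and $1/(1-\alpha)$ emerge precisely as stated. The remaining translation of an additive logarithmic gap into a multiplicative exponential one is then routine.
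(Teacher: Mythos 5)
Your proposal is correct and follows essentially the same route as the paper's proof: rewriting $M_\alpha(\psi_U)$ via the purity substitution and the expansion $\st{0}^{\otimes n}=d^{-1}\sum_{Z\in\mathbb{Z}_n}Z$ to obtain the identity $M_\alpha(\psi_U)=\frac{1}{1-\alpha}\log\bigl(d^2 C_{4\alpha}(U)\bigr)$, and then exponentiating the additive $\Omega(n/\alpha)$ gap in stabilizer entropy between pseudomagic and Haar-random states into the multiplicative $\exp(n)$ separation of OTOCs with $2k=4\alpha$. The explicit inverted form $C_{4\alpha}(U)=d^{-2}2^{(1-\alpha)M_\alpha(\psi_U)}$ you record is a nice touch that makes the final step transparent.
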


\section{Fannes inequality for \texorpdfstring{$M_1$}{M1} and cryptography from pseudomagic: Proof of Theorem 3}\label{sec:crypto}

Informally speaking, an EFI pair is a pair of efficient quantum algorithms whose output states are statistically far but computationally indistinguishable.

\begin{definitionS}[EFI pairs \cite{brakerski2022computational}]
    We call $\mu\coloneqq \left(\mu_{b, \lambda}\right)$ a pair of EFI states if it satisfies the following criteria:
    \begin{itemize}
        \item \emph{Efficient generation}: There exists a uniform QPT quantum algorithm $A$ that on input $\left(1^\lambda, b\right)$ for some integer $\lambda$ and $b \in\{0,1\}$, outputs the mixed state $\mu_{b, \lambda}$.
        \item \emph{Statistically distinguishable}: $\TV\left(\mu_{0, \lambda}, \mu_{1, \lambda}\right) \geq \Omega(\frac{1}{\poly \lambda})$.
        \item \emph{Computational indistinguishability}: $\left(\mu_{0, \lambda}\right)_\lambda$ is computationally indistinguishable to $\left(\mu_{1, \lambda}\right)_\lambda$.
    \end{itemize}
\end{definitionS}

We will now show that pseudomagic ensembles imply EFI pairs. A critical ingredient in this proof will be a Fannes-type inequality which establishes for stabilizer entropy what the Fannes-Audenaert inequality establishes for entropy -- namely its continuity in the space of states over the metric of trace distance. We now prove such an inequality for $M_1$. Since this proof is of independent interest, we devote a separate section to it before we prove the implication about EFI pairs.

\subsection{A Fannes Inequality for the Stabilizer Shannon Entropy}
Recall from \cref{eq:malpha012} that $M_1$ can be written
\begin{equation}
    M_{1}(\psi)=-\sum_{P \in \mathbb{P}_n}\frac{\tr^2(P\psi)}{d}\log \tr^2(P\psi).
\end{equation}
\begin{theoremS}[Fannes-like inequality for $M_1$ (also to appear in Ref. \cite{leone_clifford_})]\label{thm:Fannes}
For any two states $\ket{\psi}$ and $\ket{\phi}$,
it holds that
\begin{equation}\label{eq:Fannes}
\abs{M_{1}(\psi)-M_{1}(\phi)}\le\begin{cases} \norm{\psi-\phi}_1\log(d^2-1)+H_{\textrm{bin}}[\norm{\psi-\phi}_1] &\text{for $\norm{\psi-\phi}_1\le 1/2$},\\
\norm{\psi-\phi}_1\log(d^2-1)+1 &\text{for $\norm{\psi-\phi}_1> 1/2$.}\\
\end{cases}
\end{equation}
\end{theoremS}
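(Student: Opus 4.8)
The plan is to observe that, for a pure state $\ket{\psi}$, $M_1(\psi)$ is simply a shifted Shannon entropy of the classical distribution $\Xi_\psi(P)=\tr^2(P\psi)/d$ over the $d^2$ Pauli operators, and then to invoke the sharp (Audenaert) form of the classical Fannes inequality. Since $\sum_P \tr^2(P\psi)=d\,\tr(\psi^2)=d$ for pure states, $\Xi_\psi$ is a genuine probability distribution on $N=d^2$ outcomes, and a one-line manipulation of $-\sum_P \tfrac{\tr^2(P\psi)}{d}\log\tr^2(P\psi)$ gives $M_1(\psi)=S(\Xi_\psi)-\log d$, with $S$ the Shannon entropy. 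Hence $\abs{M_1(\psi)-M_1(\phi)}=\abs{S(\Xi_\psi)-S(\Xi_\phi)}$, and the problem collapses to a purely classical entropy-continuity statement.

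I would then apply the sharp classical Fannes inequality for $N$-outcome distributions: writing $T\coloneqq \TV(\Xi_\psi,\Xi_\phi)$ for the total-variation distance, one has $\abs{S(\Xi_\psi)-S(\Xi_\phi)}\le T\log(N-1)+H_{\mathrm{bin}}(T)$, which with $N=d^2$ already supplies the claimed $\log(d^2-1)$ prefactor. Because $T\log(N-1)$ is increasing in $T$ and $H_{\mathrm{bin}}(T)\le 1$ always, it suffices to establish $T\le \norm{\psi-\phi}_1$ and substitute: on the range $\norm{\psi-\phi}_1\le 1/2$, monotonicity of $T\log(N-1)+H_{\mathrm{bin}}(T)$ yields the first branch, while for $\norm{\psi-\phi}_1>1/2$ bounding the binary-entropy term by $1$ yields the second branch.

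The crux --- and the step I expect to be the main obstacle --- is therefore the inequality $\TV(\Xi_\psi,\Xi_\phi)\le \norm{\psi-\phi}_1$ relating the variation distance of the induced Pauli-weight distributions to the trace distance of the states themselves. I would prove it using the factorization $\tr^2(P\psi)-\tr^2(P\phi)=\tr(P(\psi-\phi))\,\tr(P(\psi+\phi))$, then Cauchy--Schwarz over the index $P$ together with the Pauli--Parseval identity $\sum_P \tr^2(PA)=d\,\tr(A^2)$, giving
\begin{equation}
\TV(\Xi_\psi,\Xi_\phi)=\frac{1}{2d}\sum_P \abs{\tr^2(P\psi)-\tr^2(P\phi)}\le \frac{1}{2d}\sqrt{d\,\tr((\psi-\phi)^2)}\,\sqrt{d\,\tr((\psi+\phi)^2)}.
\end{equation}
Evaluating the two traces for pure states in terms of the fidelity $F=\abs{\braket{\psi}{\phi}}^2$, namely $\tr((\psi-\phi)^2)=2(1-F)$ and $\tr((\psi+\phi)^2)=2(1+F)$, collapses the right-hand side to $\sqrt{1-F^2}$, which is controlled by the trace distance $\sqrt{1-F}$. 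The delicate point is to track the normalization convention for $\norm{\cdot}_1$ carefully so that the resulting constant is exactly $1$; this matters because the two-branch form of the final bound depends on cleanly substituting $T\mapsto\norm{\psi-\phi}_1$ into the monotone Fannes bound.
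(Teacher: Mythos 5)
Your proposal is correct, and it reaches the theorem by a genuinely different route at the one step that matters. Both you and the paper reduce the claim to the classical Fannes--Audenaert inequality for the Shannon entropies of the Pauli distributions $\Xi_\psi,\Xi_\phi$ over $N=d^2$ outcomes (using $M_1(\psi)=S(\Xi_\psi)-\log d$ for pure states), and both must then establish $\TV(\Xi_\psi,\Xi_\phi)\le \norm{\psi-\phi}_1$. The paper does this by realizing $\Xi_\psi$ as the diagonal of the doubled state $\psi\otimes\psi^*$ in the Bell basis $\{\ket{P}\}$, so that $\TV(\Xi_\psi,\Xi_\phi)$ is half the trace distance of the dephased states, and then invoking contractivity of the trace norm under the dephasing channel together with the identity $\norm{\psi\otimes\psi^*-\phi\otimes\phi^*}_1=\norm{\psi^{\otimes 2}-\phi^{\otimes 2}}_1$. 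You instead factor $\tr^2(P\psi)-\tr^2(P\phi)=\tr(P(\psi-\phi))\tr(P(\psi+\phi))$ and apply Cauchy--Schwarz with the Pauli--Parseval identity; remarkably, both arguments land on the same intermediate quantity $\TV(\Xi_\psi,\Xi_\phi)\le\sqrt{1-\abs{\braket{\psi}{\phi}}^2}$ (the paper's $\tfrac12\norm{\psi^{\otimes 2}-\phi^{\otimes 2}}_1$ equals exactly this for pure states). Your version is more elementary and self-contained, avoiding the conjugate-doubling construction and data processing; the paper's version is more conceptual and makes the channel structure explicit. On the normalization caveat you raise: with the standard (unhalved) trace norm, $\norm{\psi-\phi}_1=2\sqrt{1-F}$ and your bound $\sqrt{1-F^2}\le\sqrt{2}\sqrt{1-F}\le\norm{\psi-\phi}_1$ closes with constant $1$, which is also the convention the paper's own final step implicitly requires (its earlier line $\norm{\psi-\phi}_1^2=1-F$ uses the halved convention, under which the last inequality of either proof would acquire a spurious $\sqrt{2}$); so your worry is legitimate but resolves in your favor under the convention that makes the theorem statement correct.
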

\begin{proof}
    We make use of two main ingredients
    \begin{itemize}
        \item
        Consider a state vector $\ket{\psi}$ and its conjugate (in the computational basis) denoted as $\ket{\psi^{*}}$ and defined as $\ket{\psi^{*}}=\sum_{i}\braket{i}{\psi}^{*}\ket{i}$. Define $\ket{P}:= I\otimes P\ket{I}$ where $\ket{I}=\frac{1}{\sqrt{d}}\sum_{i}\ket{i}\otimes \ket{i}$ the Bell state. Note that $\{\ket{P}\}$ is a basis for the Hilbert space $\mathcal{H}^{\otimes 2}$. The dephased state vector $\ket{\psi}\otimes \ket{\psi^*}$ (which we denote $\ket{\psi \otimes \psi^*}$ for short) in the basis $\ket{P}$ equals
        \begin{equation}
            \mathcal{D}_{P}(\psi\otimes\psi^*):= \sum_{P}\st{P}\psi\otimes\psi^*\st{P}=\sum_{P}\frac{\tr^2(P\psi)}{d}\st{P}\label{eq:dephased}
        \end{equation}
        indeed $\mel{P}{\psi\otimes\psi^*}{P}=\abs{\braket{\psi\otimes\psi^*}{P}}^2$. And
        \begin{equation}
            \braket{\psi\otimes\psi^*}{P}=\frac{1}{\sqrt{d}}\sum_{i}\braket{\psi^*}{i}\mel{\psi}{P}{i}=\frac{1}{\sqrt{d}}\sum_{i}\mel{\psi}{P}{i}\braket{i}{\psi}=\frac{\mel{\psi}{P}{\psi}}{\sqrt{d}}.
        \end{equation}
        \item Moreover, let us show that $\norm{\psi\otimes\psi^*-\phi\otimes\phi^*}_1=\norm{\psi^{\otimes 2}-\phi^{\otimes 2}}_1$. To see this, note that
        \begin{equation}
            \norm{\psi\otimes\psi^*-\phi\otimes\phi^*}_1=2\sqrt{1-\tr(\psi\otimes \psi^{*}\phi\otimes \phi^*)}=2\sqrt{1-\tr(\psi\phi)\tr (\psi^{*}\phi^*)}
        \end{equation}
        then consider $\tr(\psi^*\phi^*)=\abs{\braket{\psi^*}{\phi^*}}^2$ and write it in the computational basis
        \begin{equation}
            \braket{\psi^*}{\phi^*}=\sum_{i} \braket{\psi^*}{i} \braket{i}{\phi^*}=\sum_{i} \braket{\phi}{i} \braket{i}{\psi} = \braket{\phi}{\psi}
        \end{equation}
        which implies $\tr(\psi^*\phi^*)=\tr(\psi\phi)$. We used the definition of conjugate state, i.e. $\ket{\psi^*}=\sum_{i}(\braket{i}{\psi})^*\ket{i}\equiv\sum_{i}\braket{\psi}{i}\ket{i}$. Returning to the $1$-norm distance, we have
        \begin{equation}
            \norm{\psi\otimes\psi^*-\phi\otimes\phi^*}_1=2\sqrt{1-\tr(\psi\phi)\tr (\psi^{*}\phi^*)}=2\sqrt{1-\tr^2(\psi\phi)}=\norm{\psi^{\otimes 2}-\phi^{\otimes 2}}_1.\label{eq:equalityfiga}
        \end{equation}
    \end{itemize}
    Consider now two pure states $\psi,\phi$ and the Shannon entropies $S(\Xi_{\psi}),S(\Xi_{\phi})$ of the (purely classical) probability distributions $\Xi_{\psi}(P),\Xi_{\phi}(P)$. The classical 
    \emph{Fannes-Audenaert inequality} for the Shannon entropy tells us
    \begin{equation}
        \abs{S(\Xi_{\psi})-S(\Xi_{\phi})}\le \TV(\Xi_{\psi},\Xi_{\phi})\log(d^2-1)+H_{\rm bin}[\TV(\Xi_{\psi},\Xi_{\phi})],
        \label{eq:fannesclassical}
    \end{equation}
    where $\TV(\mathbf{p},\mathbf{q})\coloneqq \frac{1}{2}\sum_{i}\abs{p_i-q_i}$ is the total variation distance between two probability distributions, and $H_{\rm bin}[x]$ is the binary entropy function $H_{\rm bin}[\epsilon]=-\epsilon \log \epsilon - (1-\epsilon) \log(1-\epsilon)$. Consider the state in \cref{eq:dephased} and notice that
    \begin{equation}
        \TV(\Xi_{\psi},\Xi_{\phi})=\frac{1}{2}\sum_{P}\abs{\Xi_{\psi}(P)-\Xi_{\phi}(P)}=\frac{1}{2}\norm{\mathcal{D}_{P}(\psi\otimes\psi^*)-\mathcal{D}_{P}(\phi\otimes\phi^*)}_1
    \end{equation}
    simply represents the distance between the dephased states in the Bell basis. We can now combine \cref{eq:equalityfiga} with the fact that the $1$-norm is non-increasing under any quantum channel (in this case, $\mathcal{D}_P$) to bound
    \begin{equation}
        \TV(\Xi_{\psi},\Xi_{\phi})\le\frac{1}{2}\norm{\psi^{\otimes 2}-\phi^{\otimes 2}}_1\le  \norm{\psi-\phi}_1.\label{eq:tv-td}
    \end{equation}
    Assuming that $\norm{\psi-\phi}_1\le 1/2$ then $\TV(\Xi_{\psi},\Xi_{\phi})\le 1/2$ and we have that $H_{\rm bin}$ is monotonic and we can write $H_{\rm bin}(\TV(\Xi_{\psi},\Xi_{\phi}))\le H_{\rm bin}(\norm{\psi-\phi}_1)$ and thus we can write
    \begin{equation}
        \abs{M_{1}(\psi)-M_{1}(\phi)}\le \norm{\psi-\phi}_1\log(d^2-1)+H_{\rm bin}[\norm{\psi-\phi}_1]
        .
    \end{equation}
    Otherwise, if $\norm{\psi-\phi}_1>1/2$, we can bound $H_{\rm bin}(\TV(\Xi_{\psi},\Xi_{\phi}))\le 1$
    \begin{equation}
        \abs{M_{1}(\ket{\psi})-M_{1}(\ket{\phi})}\le \norm{\psi-\phi}_1\log(d^2-1)+1.
    \end{equation}
\end{proof}
Following an identical technique to the above theorem, we can prove a much looser bound for higher $\alpha$.
\begin{corollaryS}[Fannes-like inequality for $M_{\alpha > 1}$]\label{cor:fannes2}
For any $\alpha > 1$ and any two state vectors $\ket{\phi}$ and $\ket{\psi}$,
\begin{equation}
    \abs{M_\alpha(\psi) - M_\alpha(\phi)} \leq \frac{d^2 \alpha}{\alpha-1} \norm{\psi-\phi}_1.
\end{equation}
\end{corollaryS}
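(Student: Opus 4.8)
The plan is to mirror the two-stage structure of the proof of \cref{thm:Fannes}, but to replace its one genuinely $\alpha$-dependent ingredient — the classical Fannes--Audenaert inequality \cref{eq:fannesclassical}, which only applies to the Shannon entropy — by an elementary continuity estimate for the log-ratio that is valid for all $\alpha>1$. It is precisely this substitution that degrades the clean $\log(d^2-1)$ prefactor of the $M_1$ case into a crude polynomial-in-$d$ factor, explaining why the corollary is advertised as ``much looser.'' For the numerator and denominator bounds it is cleanest to work directly with the swap-trick form \cref{eq:eq111}, writing $T_\alpha(\psi)\coloneqq\tr(\Pi^{(2\alpha)}\psi^{\otimes2\alpha})$, so that $M_\alpha(\psi)-M_\alpha(\phi)=\tfrac{1}{1-\alpha}\bigl(\log T_\alpha(\psi)-\log T_\alpha(\phi)\bigr)$.

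Concretely, I would proceed in three steps. First, apply the elementary inequality $\abs{\log a-\log b}\le\abs{a-b}/\min\{a,b\}$ to obtain
\[
\abs{M_\alpha(\psi)-M_\alpha(\phi)}\le\frac{1}{\alpha-1}\,\frac{\abs{T_\alpha(\psi)-T_\alpha(\phi)}}{\min\{T_\alpha(\psi),T_\alpha(\phi)\}}.
\]
Second, bound the numerator by reusing the Lipschitz computation from \cref{lem:levy}: since $\Pi^{(2\alpha)}$ is a Hermitian unitary, $\norm{\Pi^{(2\alpha)}}_\infty=1$, and the telescoping estimate \cref{eqn:repeat-distance} gives $\norm{\psi^{\otimes2\alpha}-\phi^{\otimes2\alpha}}_1\le 2\alpha\norm{\psi-\phi}_1$, so that $\abs{T_\alpha(\psi)-T_\alpha(\phi)}=\abs{\tr\bigl(\Pi^{(2\alpha)}(\psi^{\otimes2\alpha}-\phi^{\otimes2\alpha})\bigr)}\le 2\alpha\norm{\psi-\phi}_1$. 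Third, lower bound the denominator by isolating the identity Pauli in $T_\alpha(\psi)=d^{-1}\sum_{P\in\mathbb{P}_n}\tr^{2\alpha}(P\psi)$: the $P=\mathbb{1}$ term alone contributes $d^{-1}$, hence $T_\alpha(\psi)\ge d^{-1}$. Combining the three steps yields $\abs{M_\alpha(\psi)-M_\alpha(\phi)}\le\frac{2\alpha d}{\alpha-1}\norm{\psi-\phi}_1$, which is $\le\frac{d^2\alpha}{\alpha-1}\norm{\psi-\phi}_1$ since $2d\le d^2$ for $d\ge2$. (Equivalently, one can phrase the argument entirely classically in the spirit of \cref{thm:Fannes}, using $\TV(\Xi_\psi,\Xi_\phi)\le\norm{\psi-\phi}_1$ from \cref{eq:tv-td} together with the pointwise bound $\abs{x^\alpha-y^\alpha}\le\alpha\abs{x-y}$ on $[0,1]$; the only care needed is the lower bound on the collision quantity $\sum_P\Xi_\psi(P)^\alpha$.)

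The conceptual obstacle — and the source of the dimension factor — is that the Rényi entropy $S_\alpha$ for $\alpha>1$ admits no dimension-free continuity modulus in total variation: the quantity $T_\alpha$ (a purity-type collision sum) can in principle be as small as $\sim d^{1-\alpha}$, so any log-ratio estimate must pay for lower bounding it. Here the identity-Pauli trick gives the comparatively benign bound $T_\alpha\ge d^{-1}$, but a genuine dimension dependence is unavoidable. This is harmless for our purposes: the sharp $\log d$ scaling is only required for $M_1$, which feeds the EFI construction of \cref{sec:crypto}, whereas \cref{cor:fannes2} is invoked only to certify continuity of $M_\alpha$ at the coarse level sufficient for the remaining applications.
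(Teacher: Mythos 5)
Your proposal is correct in substance but takes a genuinely different route from the paper's. The paper disposes of this corollary in one line: it imports a Fannes--Audenaert-type continuity bound for \emph{classical} R\'enyi entropies, $\abs{S_\alpha(\mathbf{p})-S_\alpha(\mathbf{q})}\le\dim(\mathbf{p})\cdot\frac{\alpha}{\alpha-1}\TV(\mathbf{p},\mathbf{q})$ (Eq.~(20) of Ref.~\cite{hanson2022}), applies it to the Pauli distributions $\Xi_\psi,\Xi_\phi$ with $\dim=d^2$, and finishes with $\TV(\Xi_\psi,\Xi_\phi)\le\norm{\psi-\phi}_1$ from \cref{eq:tv-td}. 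This is exactly the ``substitution'' you anticipated, except taken as a black box rather than rederived. Your self-contained log-ratio argument buys transparency (the dimension factor is visibly traced to the lower bound $T_\alpha\ge 1/d$ coming from the identity Pauli) and in fact a slightly better prefactor ($2\alpha d$ versus $d^2\alpha$); the paper's route buys brevity and structural uniformity with the proof of \cref{thm:Fannes}.

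Two caveats on your execution. First, your primary (operator-based) numerator bound only makes sense when $2\alpha$ is a positive integer, whereas the corollary is stated for all real $\alpha>1$; moreover it leans on the assertion from \cref{lem:levy} that $\norm{\Pi^{(2\alpha)}}_\infty=1$, which is delicate (for instance $\Pi^{(4)}=d^{-1}\sum_P P^{\otimes 4}$ satisfies $(\Pi^{(4)})^2=d\,\Pi^{(4)}$, so its operator norm is $d$, not $1$). The argument you relegate to a parenthesis is therefore the one to promote to the main line: writing $T_\alpha(\psi)=d^{-1}\sum_P\bigl(\tr^2(P\psi)\bigr)^\alpha$ and using the pointwise bound $\abs{x^\alpha-y^\alpha}\le\alpha\abs{x-y}$ on $[0,1]$ gives $\abs{T_\alpha(\psi)-T_\alpha(\phi)}\le 2\alpha\,\TV(\Xi_\psi,\Xi_\phi)\le 2\alpha\norm{\psi-\phi}_1$ for every real $\alpha\ge 1$, with no operator-norm input and no integrality restriction, and combined with $T_\alpha\ge d^{-1}$ this closes the proof. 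Second, two cosmetic points: if $\log$ is base $2$ the mean-value estimate picks up a factor $1/\ln 2$, so the final comparison $2d/\ln 2\le d^2$ requires $n\ge 2$; and your closing remark that $T_\alpha$ ``can be as small as $\sim d^{1-\alpha}$'' is inconsistent with your own bound $T_\alpha\ge d^{-1}$ once $\alpha>2$ --- the identity term always keeps $T_\alpha$ at least $1/d$. Neither affects the validity of the argument.
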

\begin{proof}
    There are analogies to the Fannes-Audenaert inequality in \cref{eq:fannesclassical} for classical R\'enyi entropies. In Eq. (20) of Ref. \cite{hanson2022}, it is shown that
    \begin{equation}
        \abs{S_\alpha(\mathbf{p}) - S_\alpha(\mathbf{q})} \leq \dim(\mathbf{p}) \cdot \frac{\alpha}{\alpha-1}  \TV(\mathbf{p},\mathbf{q}),
    \end{equation}
    for any $\alpha>1$ and any classical distributions $\mathbf{p},\mathbf{q}$ defined over $\dim(\mathbf{p})$ elements. In our case, this simply means
    \begin{equation}
        \abs{S_\alpha(\Xi_\psi) - S_\alpha(\Xi_\phi)} \leq \frac{d^2 \alpha}{\alpha-1}\TV(\Xi_\psi, \Xi_\phi) \leq \frac{d^2 \alpha}{\alpha-1} \norm{\psi-\phi}_1,
    \end{equation}
    where the factor $d^2$ is because the classical distributions $\Xi_\psi, \Xi_\phi$ are defined over all $d^2$ Pauli matrices. The second inequality follows from \cref{eq:tv-td}.
\end{proof}

\subsection{Proof of Theorem 3}
\begin{theoremS}[EFI pairs from pseudo-magic ensembles (Theorem 3, repeated)]
    Efficiently-generatable pseudo-magic ensembles with stabilizer 1-entropy that can be tuned between $\omega(\log n)$ and $n$ imply EFI pairs. Namely, we consider as the pair an algorithm that generates a random pseudo-magic state with magic of $\poly \log n$, and one that generates a random pseudo-magic state with magic of $n$.
\end{theoremS}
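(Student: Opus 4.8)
The plan is to verify the three defining properties of an EFI pair in turn, for the pair consisting of the low-magic pseudomagic ensemble $\mathcal{E}_0$ (with $M_1=\Theta(\poly\log n)$) and a high-magic ensemble $\mathcal{E}_1$ (with $M_1=\Theta(n)$), where $\mathcal{E}_1$ is chosen to be efficiently preparable, e.g.\ the ensemble $\mathcal{E}_U=\{U\ket{\psi_{f,S}}\}$ of Theorem 1(b). \emph{Efficient generation} is immediate: by hypothesis both ensembles are produced by QPT circuits, so the algorithm $A(1^\lambda,b)$ simply samples a key and runs the corresponding preparation circuit. \emph{Computational indistinguishability} is likewise immediate — it is exactly the defining property of a pseudomagic pair that no QPT distinguisher, even with polynomially many copies, separates $\mathcal{E}_0$ from $\mathcal{E}_1$. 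Thus the whole content of the theorem reduces to the \emph{statistical distinguishability} condition $\TV(\mu_{0,\lambda},\mu_{1,\lambda})\ge\Omega(1/\poly\lambda)$.

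For statistical distinguishability I would invoke the Fannes-type inequality for $M_1$ (Theorem S) in contrapositive form. That inequality asserts that $M_1$ is Lipschitz-continuous in trace distance, $\abs{M_1(\psi)-M_1(\phi)}\le\norm{\psi-\phi}_1\log(d^2-1)+H_{\mathrm{bin}}[\norm{\psi-\phi}_1]$. Since $\log(d^2-1)=2n-o(1)$ and the binary-entropy term is at most $1$, rearranging gives, for a typical low-magic $\ket{\psi}\in\mathcal{E}_0$ and high-magic $\ket{\phi}\in\mathcal{E}_1$,
\begin{equation}
\norm{\psi-\phi}_1\;\ge\;\frac{\abs{M_1(\psi)-M_1(\phi)}-1}{2n-o(1)}\;=\;\frac{\Theta(n)-\poly\log n}{2n}\;=\;\Omega(1).
\end{equation}
In words: because the two ensembles are pinned to $M_1$ values that differ by $\Theta(n)$, and $M_1$ cannot jump discontinuously under small trace-distance perturbations, the two ensembles cannot be statistically close. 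This is precisely the direction (magic gap $\Rightarrow$ trace-distance gap) complementary to Lemma 1, and it explains why a \emph{tunable} stabilizer entropy is the correct hypothesis: it supplies a provable $\Theta(n)$ separation to feed into Fannes.

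The step I expect to be the main obstacle is converting this per-state bound into a bound on the statistical distance between the \emph{mixed states} $\mu_{b,\lambda}=\E_k[\cdot]$ actually emitted by the algorithm. The difficulty is that averaging over the key kills the first moment: a single copy of either ensemble averages to the maximally mixed state $\id/d$, so the single-copy averaged states coincide and the $\Omega(1)$ per-state bound does not transfer directly. The fix is to let $A$ output $t=2\alpha=O(1)$ copies and to distinguish via a permutation-symmetric observable; using the identity $\tr(\Pi^{(2\alpha)}\psi^{\otimes2\alpha})=2^{(1-\alpha)M_\alpha(\psi)}$ one gets $\tr(\Pi^{(2\alpha)}\mu_{b,\lambda})=\E_k[2^{(1-\alpha)M_\alpha(\psi_k)}]$, whose values for $\mathcal{E}_0$ and $\mathcal{E}_1$ are separated by the magic gap. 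One must then confirm that this residual separation — which is only inverse-quasipolynomial in $n$ once the low-magic ensemble is forced to have $M_1=\omega(\log n)$ — is still $\Omega(1/\poly\lambda)$ for the security parameter $\lambda$ of the underlying pseudorandom function. This is arranged by fixing the qubit number $n$ to be polylogarithmic in $\lambda$ (equivalently, taking $\lambda$ quasipolynomial in $n$), so that Fannes-continuity of $M_1$ certifies a polynomial statistical gap while the pseudorandomness of $\{\ket{\psi_{f,S}}\}$ continues to guarantee computational indistinguishability against all $\poly(\lambda)$-time distinguishers.
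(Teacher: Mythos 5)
Your treatment of efficient generation, of computational indistinguishability, and of the core statistical step --- reading the Fannes-type inequality for $M_1$ in contrapositive so that the $\Theta(n)$ vs.\ $\poly\log n$ gap in stabilizer entropy forces a per-state trace-distance lower bound $\norm{\psi-\phi}_1=\Omega(1)$ --- is exactly the paper's route (the paper gets computational indistinguishability by a triangle inequality through the Haar ensemble rather than by citing the pseudomagic definition, but this is immaterial). Where you genuinely diverge is in flagging the passage from this \emph{per-state} bound to a bound on $\TV(\mu_{0,\lambda},\mu_{1,\lambda})$ for the \emph{key-averaged} outputs; the paper simply asserts that pairwise trace-distance separation ``suffices.'' Your worry is well founded: a single copy of either ensemble averages over the key to essentially $\id/2^n$, so the single-copy outputs coincide and the pairwise bound does not transfer. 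Identifying this elided step is to your credit --- it is a real gap in the written proof, not a pedantic one.

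Your proposed repair, however, does not close the gap under the intended parameterization. The permutation-symmetric observable $\Pi^{(2\alpha)}$ certifies a separation of order $\E_k\bigl[2^{(1-\alpha)M_\alpha(\psi_k)}\bigr]\approx 2^{-\Theta(M_\alpha^{\mathrm{low}})}$, and since Lemma 1 forces the low-magic ensemble to have $M_\alpha=\omega(\log n)$, this is $n^{-\omega(1)}$ --- superpolynomially small in $n$ no matter how the ensemble is tuned. Restoring $\Omega(1/\poly\lambda)$ therefore forces $\lambda=n^{\omega(1)}$, so the scheme must remain computationally indistinguishable against $n^{\omega(1)}$-time adversaries, i.e., you need quasipolynomial security of the underlying pseudorandom functions, a strictly stronger assumption that changes the theorem. (Note also that ``$n$ polylogarithmic in $\lambda$'' and ``$\lambda$ quasipolynomial in $n$'' are not equivalent; the latter is what your bound needs, and it makes $n$ superpolynomial in $\lambda$, which would break efficient generation.) A repair that keeps $\lambda=n$ and standard assumptions, for the subset-phase-state instantiation: let the generator emit $t=\poly(n)$ copies. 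An unbounded tester measures every copy in the computational basis and brute-forces the $\poly(n)$-bit key; for the low-magic ensemble all observed strings lie in a key-indexed subset of size $2^{\poly\log n}$, while for the high-magic ensemble (computational-basis support $2^{\Theta(n)}$) this event has probability at most $2^{\poly(n)-\Omega(nt)}=\negl(n)$, giving $\TV(\mu_{0,\lambda},\mu_{1,\lambda})=1-\negl(n)$. On that route the Fannes inequality is not needed for statistical farness at all; it remains the right tool only if one insists on a constant number of output copies.
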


\begin{proof}
    By a triangle inequality (since both ensembles are computationally indistinguishable from a Haar-random state), the outputs of these two algorithms are computationally indistinguishable from each other. It remains to show they are statistically distinguishable.

    Let $\ket{\psi}$ be sampled from the first ensemble, which has magic $\Theta(\poly\log n)$ and $\ket{\phi}$ be sampled from the second ensemble, which has magic $\Theta(n)$. To show that the two ensembles are statistically distinguishable, it suffices to show that the two are far in trace distance, that is,
    \begin{equation}
        \norm{\psi-\phi}_1 \geq \Omega \left(\frac{1}{\poly n} \right).
    \end{equation}
    The continuity of the stabilizer Shannon entropy tells us that states which have very different magic must be far in trace distance. More specifically, \cref{thm:Fannes} gives us
    \begin{equation}
        \norm{\psi - \phi}_1 \log(2^{2n}-1) \geq \Omega(n - \poly\log n),
    \end{equation}
    which is to say that
    \begin{equation}
        \norm{\psi - \phi}_1 \geq \Omega\left(1 - \frac{\poly\log n}{n}\right) \geq \Omega\left(\frac{1}{\poly n}\right).
    \end{equation}
\end{proof}

\section{Implications to magic-state distillation: Proof of Theorem 4}\label{sec:distillproof}

An immediate application of pseudorandom state ensembles with tunable stabilizer R\'{e}nyi entropy is to derive new upper bounds on the problem of magic state distillation via stabilizer operations. 

\begin{theoremS}[Bounds to magic state distillation]\label{MSD}
Let $\mathcal{M}$ be a (sub-additive) magic monotone such that $\Omega(M_{\alpha})\le \mathcal{M}\le O(\mathcal{R})$ for some $\alpha=\Theta(1)$. Then any algorithm that uses stabilizer operations to synthesize a target quantum state $\ket{B}$ from copies of an arbitrary input state $\rho$ requires
\begin{equation}
    \Omega\left(\frac{\mathcal{M}(\ket{B}\bra{B})}{\log^{1+c}(\mathcal{M}(\rho))}\right)\label{eq:mag-distill}
\end{equation}
copies of the input to succeed with non-negligible probability, where $c>0$.
\end{theoremS}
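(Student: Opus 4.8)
The plan is to argue by contradiction through a reduction to the pseudorandomness of subset phase states, exploiting that a magic monotone cannot increase under stabilizer operations while computational indistinguishability forces an efficient, input-agnostic protocol to behave identically on genuinely magical inputs and on cheap impostors. Concretely, fix $n$ with $\mathcal{M}(\rho)=\Theta(n)$ and let $\mathcal{A}$ be an efficient stabilizer protocol that, from $N$ copies of a generic magic-$\mathcal{M}(\rho)$ input, outputs a state of high fidelity with $\ket{B}$. Since $\mathcal{A}$ must work on arbitrary inputs of this magic level -- in particular on Haar-random states, which have $\mathcal{M}=\Theta(n)$ -- it succeeds on Haar input. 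I would then feed $\mathcal{A}$ instead with $N$ copies of a pseudorandom subset phase state $\ket{\psi_{f,S}}$ with $\abs{S}=2^{k}$ and $k=\lceil\log^{1+c}n\rceil=\omega(\log n)$. By \cref{thm:robustness} and the hypothesis $\mathcal{M}\le O(\mathcal{R})$, this input has \emph{actual} magic $\mathcal{M}(\psi_{f,S})=O(\log^{1+c}n)=O(\log^{1+c}\mathcal{M}(\rho))$, yet it is computationally indistinguishable from the Haar ensemble.

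The heart of the argument is then two competing bounds on the output $\sigma$ produced on the pseudomagic input. On one hand, subadditivity and monotonicity of $\mathcal{M}$ give $\mathcal{M}(\sigma)\le N\,\mathcal{M}(\psi_{f,S})=O(N\log^{1+c}\mathcal{M}(\rho))$ for a deterministic protocol, and in the probabilistic case the selective-measurement bound for log-robustness, $\mathcal{R}(\sigma)\le \mathcal{R}(\psi_{f,S}^{\otimes N})+\log(1/p)$ with success probability $p\ge 1/\poly(n)$, together with $\mathcal{M}\le O(\mathcal{R})$ gives the same estimate up to an additive $O(\log n)$. On the other hand, I would build an efficient distinguisher that runs $\mathcal{A}$ and then performs a SWAP test (equivalently, the measurement $\{\ketbra{B}{B},\id-\ketbra{B}{B}\}$) against a freshly prepared copy of $\ket{B}$; since the inputs are computationally indistinguishable and $\mathcal{A}$ is efficient, the acceptance probabilities must match up to $\negl(n)$, so $\sigma$ inherits fidelity $1-\negl(n)$ with $\ket{B}$ and hence $\norm{\sigma-\ketbra{B}{B}}_1=\negl(n)$. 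Feeding this negligible trace distance into the Fannes-type continuity of \cref{thm:Fannes}, whose $O(n)$ Lipschitz factor makes $n\cdot\negl(n)=\negl(n)$, yields $\mathcal{M}(\ketbra{B}{B})\le \mathcal{M}(\sigma)+\negl(n)$. Chaining the two inequalities gives $\mathcal{M}(\ketbra{B}{B})\le O(N\log^{1+c}\mathcal{M}(\rho))$, which rearranges to the claimed $N=\Omega\big(\mathcal{M}(\ketbra{B}{B})/\log^{1+c}\mathcal{M}(\rho)\big)$.

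I expect the main obstacle to be the step that turns computational indistinguishability into a genuine, information-theoretic statement about the output's magic. Indistinguishability on its own never forces two states close in trace distance -- that is precisely the phenomenon of pseudomagic -- so the Hadamard-test estimator of $M_\alpha$ used in \cref{lem:stab-bound} only certifies the $\omega(\log n)$ floor and is far too weak here. The resolution is that the target $\ket{B}$ is a \emph{fixed, efficiently preparable reference}: comparing the output against $\ket{B}$ via a SWAP test converts indistinguishability into actual fidelity, after which Fannes continuity (and not merely the monotone inequality) upgrades low magic into detectable trace-distance separation. Care is also needed in the ``succeeds with non-negligible probability'' regime, where the success flag must be treated as an efficiently observable event and the postselected output controlled through the selective-measurement monotonicity of $\mathcal{R}$; this is exactly where the hypothesis $\mathcal{M}\le O(\mathcal{R})$ enters, while $\Omega(M_\alpha)\le\mathcal{M}$ guarantees that $\ket{B}$'s magic is visible to the same family of tests.
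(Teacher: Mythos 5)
Your proposal follows essentially the same reduction as the paper's proof of \cref{MSD}: assume an efficient input-agnostic protocol, feed it low-magic pseudorandom subset phase states in place of Haar-random inputs, bound the output's magic from above via monotonicity and subadditivity (using $\mathcal{M}\le O(\mathcal{R})$ together with $\mathcal{R}(\psi_{f,S})\le\log\abs{S}$ from \cref{thm:robustness}), and use a SWAP test against a reference copy of $\ket{B}$ to turn the protocol into a distinguisher. The paper closes the contradiction slightly differently: it compares the \emph{number} of copies of $\ket{B}$ synthesizable per input copy on Haar versus subset-phase inputs (hence its intermediate exponent $\log^{1+c/2}n$) and uses the discrepancy in counts as the distinguishing statistic, whereas you fix the copy count and derive the contradiction directly in the magic budget. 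Both arrangements are sound, and your explicit handling of probabilistic success via $\mathcal{R}(\sigma)\le\mathcal{R}(\psi_{f,S}^{\otimes N})+\log(1/p)$ is more careful than the paper's.

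The one step that does not go through as written is the invocation of \cref{thm:Fannes} to conclude $\mathcal{M}(\ket{B}\bra{B})\le\mathcal{M}(\sigma)+\negl(n)$. That inequality is proved only for $M_1$ and for \emph{pure} states; the theorem's hypothesis sandwiches $\mathcal{M}$ between $M_\alpha$ and $\mathcal{R}$ on pure states but grants no continuity for $\mathcal{M}$ itself, the output $\sigma$ may be mixed, and the Fannes analogue for $M_{\alpha>1}$ carries a Lipschitz constant $d^2\alpha/(\alpha-1)$, so that $\negl(n)\cdot 4^n$ need not be small (computational indistinguishability only guarantees superpolynomial, not exponential, smallness of the advantage). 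The repair is to do what the paper implicitly does: treat the synthesis as exact, so that conditioned on the efficiently observable success flag the output literally equals $\ket{B}\bra{B}^{\otimes m}$ and no continuity argument is needed; indistinguishability is then used only to transfer the non-negligible success probability from Haar to subset-phase inputs. With that substitution your argument is complete, and the approximate-synthesis regime is the one the paper defers to its separate $D_{\max}$-based robustness remark at the end of the section.
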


\begin{proof}
    Assume towards contradiction that there is an algorithm that uses only a number
    \begin{equation}
        o\left(\frac{\mathcal{M}(\ket{B})}{\log^{(1+c)}(\mathcal{M}(\rho))}\right)
    \end{equation}
    of copies of the (here possibly mixed) input $\rho$ to synthesize a single copy of the state $\ket{B}$, or equivalently, per copy of $\rho$, at least
    \begin{equation}
        \Omega\left(\frac{\log^{(1+c)}(\mathcal{M}(\rho))}{\mathcal{M}(\ket{B}\bra{B})}\right)\label{eqn:synth-assumption}
    \end{equation}
    copies of $\ket{B}$ may be synthesized.

    As detailed in
    Ref.~\cite{leone_stabilizer_2022}, a necessary condition to be able to synthesize a target quantum state $\ket{B}$ from an initial quantum state $\rho$ is that ${\cal M}(\rho) > {\cal M}(\ket{B}\bra{B})$ for any measure ${\cal M}$ of nonstabilizerness. This condition imposes that, per copy of $\rho$, at most
    \begin{equation}\label{eq:UB}
    O\left(\frac{\mathcal{M}(\rho)}{\mathcal{M}(\ket{B}\bra{B})}\right)
    \end{equation}
    copies of $\ket{B}$ can be synthesized, where we have used the assumed sub-additivity of $\mathcal{M}$, i.e., $\mathcal{M}(\rho^{\otimes r})\le r \mathcal{M}(\rho)$. 

    Recall our assumption that $\Omega(M_\alpha) \leq \mathcal{M} \leq O(\mathcal{R})$, which implies that there exists positive constants $n_0,s,s'$ such that
    \begin{equation}
        s M_{\alpha}(\psi) \leq \mathcal{M}(\psi) \leq s' \mathcal{R}(\psi) \label{eq:s-mbound}
    \end{equation}
    for all $n$-qubit states $\psi$ when $n>n_0$. For the rest of this proof, we will  assume $n>n_0$. To prove our statement on the limitations of a distillation protocol that works on arbitrary input states $\rho$ (either pure or mixed), it is sufficient to focus on two special classes of input states, namely Haar random states and subset phase states. By combining \cref{lem:levy,eq:s-mbound}, we find that
    \begin{equation}
        \Pr_{\ket{\psi} \sim \haar}\left[\mathcal{M}(\psi)\leq \frac{s \beta n}{\alpha}\right] \leq 2^{-\frac{Cd^{1-2\beta}}{4(\alpha+1)^2}},
    \end{equation}
    for $C, \beta = O(1)$. Then, by our assumption in \cref{eqn:synth-assumption}, the number of copies of $\ket{B}$ that could be synthesized with high probability per copy of $\ket{\psi}$ is 
    \begin{equation}
       \Omega\left(\frac{\log^{(1+c)}(s \beta n/\alpha)}{\mathcal{M}(\ket{B}\bra{B})}\right).
    \end{equation}
    On the other hand, if $\ket{\psi_{f,S}}$ were a subset phase state with subset size $\abs{S} = 2^{\log^{1+c/2}(n)}$, combining \cref{thm:robustness} with \cref{eq:s-mbound} allows us to conclude
    \begin{equation}
        \mathcal{M}(\psi_{f,S}) \leq s'\log^{1+c/2}(n),
    \end{equation}
    then the upper bound in \cref{eq:UB} translates into being able to synthesize no more than
    \begin{equation}
        O\left(\frac{\log^{(1+c/2)} (n)}{\mathcal{M}(\ket{B}\bra{B})}\right)
    \end{equation}
    copies of $\ket{B}$ per copy of $\ket{\psi}$. %
    Then, a potential distinguisher handed $\poly(n)$ copies of an unknown $\ket{\psi}$ which is either one of these two types of states could simply run the black-box state synthesis protocol, count the number of copies of $\ket{B}$ synthesized with a SWAP test with a pre-existing copy of $\ket{B}$ -- assumed to be a canonical state such as $\ket{T}$ -- and thereby distinguish between these two states with high probability. Since the subset phase states are indistinguishable from Haar random states (given polynomially many copies), this is a contradiction.
\end{proof}
The above statement is deliberately agnostic concerning the magic monotone used, and is applicable for mixed resource states $\rho$
as well. Indeed, the powerful feature of constructing pseudomagic states is that, as pure states, no-go results can be proven using pseudomagic pure states and they remain valid for any state $\rho$ with appropriate extensions of magic monotones to mixed states. Moreover, it is well-known that while stabilizer entropies serve as effective magic measures (see \smref{sec:prelim} for details) for pure states, they fall short in quantifying the distillable magic for mixed states. This discrepancy arises because the set of states which have zero stabilizer entropy does not align with the corresponding set for other magic monotones, particularly those discussed in Table 1 of the main text. Whereas the set of states for which stabilizer entropies is zero encompasses stabilizer-mixed states, as detailed in \smref{sec:prelim}, the set of zero magic monotones in Table 1 encompasses convex combinations of stabilizer states, known as the \emph{stabilizer polytope}. However, the theorem mentioned above remains applicable to any magic monotone that is only bounded by some $\alpha=\Theta(1)$ stabilizer entropy and the robustness of magic for pure states. As we discuss below, whether this bound extends to mixed states as well is unimportant.

To be more specific, let us examine a significant example. Two prominent magic monotones used for distilling mixed states and for quantifying quantum speed-ups~\cite{PRXQuantum.2.010345} are the max-relative entropy of magic denoted as $D_{\max}(\rho)$, and the mixed-state extent denoted as $\Xi(\rho)$. The former has been introduced in Table 1 of the main text, while the latter is an extension of the stabilizer extent $\xi(\psi)$ defined for pure states. This extension, as commonly practiced in quantum resource theories, is achieved through a process known as \emph{convex-roof} extension~\cite{PRXQuantum.2.010345}
\begin{equation}
    \Xi(\rho)\coloneqq  \log\inf \left\{\sum_j p_j \xi( \psi_j) \mid \rho = \sum_{j} p_j  \ket{\psi_j}\bra{\psi_j}\right\},
\end{equation}
applicable for mixed states, where the infimum is taken over pure state decompositions of $\rho$, as the convex hull or the convex roof of the stabilizer extent $\xi$, defined in Table 1.
For pure states $\psi$, it holds that $\Xi(\psi)=D_{\max}(\psi)=\xi(\psi)$, i.e., both the max-relative entropy and the mixed-state extent reduce to the stabilizer extent $\xi(\psi)$ .

This enables us to further explore the consequences of \cref{MSD}. While for pure states $\psi$ it holds that
\begin{equation}
    \Xi(\psi)=D_{\max}(\psi)=\xi(\psi)\le \mathcal{R}(\psi),
\end{equation}
for mixed states $\rho$, the bounds are completely different~\cite{PRXQuantum.2.010345}:
\begin{equation}
    D_{\max}(\rho)\le \Xi(\rho),\quad \log\left[2^{D_{\max}(\rho)+1}-1\right]\le \mathcal{R}(\rho)
\end{equation}
whereas no known bound exists between $\Xi(\rho)$ and $\mathcal{R}(\rho)$. Similarly, no known bound exists between $\Xi$ and $M_{\alpha}$, as well as between $D_{\max}$ and $M_{\alpha}$ for mixed states. However, by virtue of \cref{MSD}, the following corollary holds.

\begin{corollaryS}[Mixed state
magic] The mixed-state extent
$\Xi$ and the max-relative entropy $D_{\max}(\rho)$ satisfy the conditions of
\cref{MSD} for $\alpha=2$ and thus they can be used as a
measure to bound magic state distillation.
\end{corollaryS}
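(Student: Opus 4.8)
The plan is to verify that $\Xi$ and $D_{\max}$ meet the hypotheses of \cref{MSD}. Both are genuine (sub-additive) magic monotones, as already noted, so the only substantive condition to check is the sandwich $\Omega(M_\alpha(\psi))\le \mathcal{M}(\psi)\le O(\mathcal{R}(\psi))$ with $\alpha=\Theta(1)$. The key observation driving the whole argument is that the proof of \cref{MSD} only ever invokes this sandwich on \emph{pure} input states: the two families fed to the distinguisher there---Haar-random states and subset phase states---are both pure, so it suffices to establish the bounds for pure $\psi$.

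First I would recall the pure-state identity $\Xi(\psi)=D_{\max}(\psi)=\xi(\psi)$, established above, which collapses both monotones to the stabilizer extent whenever $\psi$ is pure. Then I would invoke the chain of inequalities from \cref{cor:3-pseudomagic},
\begin{equation}
    \tfrac{1}{4} M_2(\psi) < \mathcal{F}_{\textrm{stab}}(\psi) \le \xi(\psi) \le \mathcal{R}(\psi),
\end{equation}
valid for all pure states $\psi$. Combining these two facts yields, for pure $\psi$,
\begin{equation}
    \tfrac{1}{4} M_2(\psi) \le \Xi(\psi),\, D_{\max}(\psi) \le \mathcal{R}(\psi),
\end{equation}
which is precisely the required sandwich $\Omega(M_2(\psi))\le \mathcal{M}(\psi)\le O(\mathcal{R}(\psi))$ with the constant choice $\alpha=2$.

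With the bounds in hand, I would feed $\mathcal{M}\in\{\Xi,D_{\max}\}$ directly into \cref{MSD}: its proof uses the lower bound $\mathcal{M}(\psi)\gtrsim M_2(\psi)$ on Haar-random inputs (via Levy's lemma) and the upper bound $\mathcal{M}(\psi_{f,S})\lesssim \mathcal{R}(\psi_{f,S})\le \log\abs{S}$ on subset phase states (via \cref{thm:robustness}); both invocations are on pure states, so the pure-state sandwich is all that is needed, and the distillation lower bound follows verbatim.

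The only genuine subtlety---and the reason the corollary is not entirely automatic---is that for \emph{mixed} $\rho$ the analogous bounds between $\Xi$, $D_{\max}$, $M_\alpha$ and $\mathcal{R}$ are not known to hold, as the preceding discussion emphasizes. I expect this to be the main conceptual obstacle, and the crux of the argument is to observe that it is irrelevant: although \cref{MSD} constrains distillation from an arbitrary (possibly mixed) $\rho$, its proof certifies impossibility using only the two pure witness ensembles, so establishing the sandwich on pure states alone closes the argument.
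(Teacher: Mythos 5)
Your proposal is correct and matches the paper's own (informal) justification: the paper likewise observes that the proof of \cref{MSD} only invokes the sandwich $\Omega(M_\alpha)\le\mathcal{M}\le O(\mathcal{R})$ on the two pure witness ensembles, uses the pure-state collapse $\Xi(\psi)=D_{\max}(\psi)=\xi(\psi)$ together with the chain $\tfrac{1}{4}M_2(\psi)<\mathcal{F}_{\mathrm{stab}}(\psi)\le\xi(\psi)\le\mathcal{R}(\psi)$, and explicitly notes that the absence of analogous mixed-state bounds is immaterial. No gaps to report.
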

The above corollary reflects what was discussed earlier, i.e., the strength of our pseudomagic construction lies in the ability to leverage established bounds applicable to pure states, thereby drawing inferences for magic-state distillation applied to mixed states. Perhaps the most remarkable example of this is the observation that, despite stabilizer entropies having distinct null sets in the context of mixed states compared to other magic monotones, they can still be employed to draw conclusions regarding general magic-state distillation protocols.

It is also worth mentioning that in practical magic state distillation, one usually does not require perfect magic states $\ket{B}\bra{B}$ to be distilled,
but this is expected to be possible only approximately with high probability~\cite{PRXQuantum.2.010345,PhysRevLett.125.060405,Howard2017}.
This is reminiscent of approximate asymptotic
state transformations in entanglement theory. We here point out that statements of the above form can be made robust in the
appropriate sense to capture such approximate state transformations.
In the situation where $\rho^{\otimes k}$ is not exactly distilled to $\ket{B}\bra{B}^{\otimes m}$,
but where this happens probabilistically with some probability $p\in [0,1]$ in an approximate fashion, so that there exists an $\epsilon>0$
such that $\rho^{\otimes k}$ is transformed into $p\tau$ with the property that
\begin{equation}
    \tr(\tau \ketbra{B}{B}^{\otimes m})\geq 1-\epsilon,
\end{equation}
then it necessarily holds that (see Ref.~\cite{PRXQuantum.2.010345})
\begin{equation}
    k\geq \frac{\log (p)+\log(1-\epsilon)+m \log F(\ket{B}\bra{B})^{-1}}{D_{\max}(\rho)},
\end{equation}
with $F(\ket{B}\bra{B})$ being the stabilizer fidelity of $\ket{B}\bra{B}$. This can be used to state a robust version of the above
theorem, in the sense of allowing for approximate
state transformations.

\section{Implications to nonstabilizer testing}\label{sec:th4}

A property test is an algorithm that checks whether a given unknown state has a particular property or not. We now focus on property testers for non-stabilizerness and present a theorem that characterizes the requirements for any such tester.

\begin{definitionS}[Non-stabilizer tester] An algorithm $\mathcal{A}_{\mathcal{M}}$ is a non-stabilizer tester if given $K$ copies of a state $\ket{\psi}$ and two threshold values $m,M\in[0,n]$ with $m<M$, obeys the following conditions.
\begin{itemize}
    \item If $\mathcal{M}({\psi})\ge M$, then $\mathcal{A}_{\mathcal{M}}$ accepts $\ket{\psi}$ with probability $\ge\frac{2}{3}$.
    \item If $\mathcal{M}({\psi})< m$, then $\mathcal{A}_{\mathcal{M}}$ accepts $\ket{\psi}$ with probability $\le\frac{1}{3}$.
\end{itemize}
The conditions above are referred to \emph{completness} and \emph{soundness} conditions, respectively.
\end{definitionS}
In this section, we prove the following
theorem.

\begin{theoremS}[Copies required for non-stabilizer testers]\label{thm:testing-k}
    Any tester for the stabilizer entropy $M_{\alpha}$ requires $K=\Omega(2^{\frac{m}{4+c}})$ (for $c>0$) copies for any $m<M<\frac{n}{4\alpha}$.
    \begin{proof}
        First of all, by plugging $\beta=1/4$ into \cref{lem:levy}, we see that for a Haar random state vector $\ket{\psi}$,
        \begin{equation}
            \Pr_{\ket{\psi} \sim \haar}\left[M_{\alpha}({\psi})<\frac{n}{4\alpha}\right]<2^{-\Omega(2^{n/2})}.
        \end{equation}
        Consequently, one has that for any $M< n/(4\alpha)$ one can write $\Pr_{\ket{\psi} \sim \haar}\left[M_{\alpha}({\psi})<M\right]<2^{-\Omega(2^{n/2})}$. Let us now define the ensembles
        \begin{equation}
            \begin{aligned}
                \mathcal{E}_{M_{\alpha}\ge M}&\coloneqq \{\ket{\psi} \mid M_{\alpha}({\psi})\ge M\},\\
                \mathcal{E}_{M_{\alpha}< M}&\coloneqq \{\ket{\psi} \mid M_{\alpha}({\psi})< M\}
            \end{aligned}
        \end{equation}
        of states with $\alpha$-stabilizer entropy at least $M$ and at most $M$. We use the notation $\Phi_{\mathcal{E}}^{(K)}$ to denote the ensemble average
        \begin{equation}
            \Phi_{\mathcal{E}}^{(K)}=\E_{\ket{\psi} \sim \mathcal{E}}\left[\st{\psi}^{\otimes K}\right],
        \end{equation}
        where $\sim$ denotes the uniform distribution over $\mathcal{E}$.
        We show that the ensemble $\mathcal{E}_{M_{\alpha}>M}$ is exponentially close to the ensemble of Haar random state, i.e.,
        \begin{equation}
            \norm{\Phi_{\mathcal{E}_{M_\alpha\ge M}}^{(K)}-\Phi_{\haar}^{(K)}}_1= O(2^{-\Omega(2^{n/2})}).
        \end{equation}
        First of all note that denoting $\pi_{M}\coloneqq \Pr_{\ket{\psi} \sim \haar}\left[M_{\alpha}({\psi})<M\right]$, we have $\pi_{M}\le 2^{-\Omega(2^{n/2})}$. Notice that we can express $\Phi_{\haar}^{(K)}$ for every $K$ as
        \begin{equation}
            \Phi_{\haar}^{(K)}=\pi_{M}\Phi_{\mathcal{E}_{M_\alpha< M}}^{(K)}+(1-\pi_{M})\Phi_{\mathcal{E}_{M_\alpha\ge M}}^{(K)},
        \end{equation}
        thus the following chain of equalities and inequality holds,
        \begin{equation}
            \begin{aligned}
                \norm{\Phi_{\mathcal{E}_{M_\alpha\ge M}}^{(K)}-\Phi_{\haar}^{(K)}}_1 &= \norm{\Phi_{\mathcal{E}_{M_\alpha\ge M}}^{(K)}-\pi_{M}\Phi_{\mathcal{E}_{M_\alpha< M}}^{(K)}-(1-\pi_{M})\Phi_{\mathcal{E}_{M_\alpha\ge M}}^{(K)}}_1\\
                &= \norm{-\pi_{M}\Phi_{\mathcal{E}_{M_\alpha< M}}^{(K)}+\pi_{M}\Phi_{\mathcal{E}_{M_\alpha\ge M}}^{(K)}}_1\\
                &= \pi_{M}\norm{\Phi_{\mathcal{E}_{M_\alpha< M}}^{(K)}-\Phi_{\mathcal{E}_{M_\alpha\ge M}}^{(K)}}_1\le \pi_{M}.
            \end{aligned}
        \end{equation}
        Now, consider the set $\mathcal{E}_{f,S}$ of subset phase states, i.e.,
        \begin{equation}
            \ket{\psi_{f,S}}=\frac{1}{\sqrt{\abs{S}}}\sum_{x\in S}(-1)^{f(x)}\ket{x},
        \end{equation}
        where both the Boolean function $x\mapsto f(x)$ and the subset $S$ have been chosen at random. Then, from Theorem $2.1$ of Ref.~\cite{aaronson_quantum_2023} we have that for every $K<\abs{S}<d$
        \begin{equation}
            \norm{\Phi_{\mathcal{E}_{f,S}}^{(K)}-\Phi_{\haar}^{(K)}}_1=O\left(\frac{K^2}{\abs{S}}\right),
        \end{equation}
        and consequently one has
        \begin{equation}
            \norm{\Phi_{\mathcal{E}_{M_\alpha\ge M}}^{(K)}-\Phi_{\mathcal{E}_{f,S}}^{(K)}}_1\le \norm{\Phi_{\mathcal{E}_{f,S}}^{(K)}-\Phi_{\haar}^{(K)}}_1+\norm{\Phi_{\mathcal{E}_{M_\alpha\ge M}}^{(K)}-\Phi_{\haar}^{(K)}}_1=O\left(\frac{K^2}{\abs{S}}\right).\label{eq:eq65}
        \end{equation}
        The above equation implies that the ensemble of phase states and the ensemble of states with stabilizer entropy $M_{\alpha}({\psi})\ge M$ (for
        $M\le n/(4\alpha)$) cannot be distinguished by a non-stabilizer tester unless $K=\Omega(\sqrt{\abs{S}})$. Indeed, following the arguments of Ref.~\cite{haug2023pseudorandom}, by applying the Helstrom bound~\cite{bae_quantum_2015}, we know for any state discrimination protocol between two ensembles $\mathcal{E},\mathcal{E}'$ the discrimination probability given $K$ copies, is upper bounded by $\frac{1}{2}+\norm{\Phi_{\mathcal{E}}^{(K)}-\Phi_{\mathcal{E}'}^{(K)}}_1$.
        Therefore, for a non-stabilizer tester $\mathcal{A}_{\mathcal{M}}$, to have a acceptance probability $\ge 2/3$ one should have
        \begin{equation}
            \norm{\Phi_{\mathcal{E}_{M_\alpha\ge M}}^{(K)}-\Phi_{\mathcal{E}_{f,S}}^{(K)}}_1\ge \frac{1}{6}.
        \end{equation}
        To conclude the proof, observe that $\ket{\psi}\in\mathcal{E}_{f,S} \implies M_{\alpha}({\psi})\le \log \abs{S}$, so $K=\Omega\left(2^{\frac{m}{4+c}}\right)$ for $c>0$. Indeed, if we assume that $K=O\left(2^{\frac{m}{4+c}}\right)$ copies were sufficient, one could take a random subset phase state $\ket{\psi}\in\mathcal{E}_{f,S}$ with $\abs{S}=2^{m/2}$ having $M_{\alpha}({\psi})\le m$ for $c>0$ and could distinguish between $\ket{\psi}\in \mathcal{E}_{f,S}$ and $\ket{\psi}\in \mathcal{E}_{M_{\alpha}\ge M}$ with $K=O\left(2^{\frac{m}{4+c}}\right)$. However, from \cref{eq:eq65} one would need 
        \begin{equation}
        K=\Omega\left(\sqrt{\abs{S}}\right)=\Omega(2^{m/4}) 
        \end{equation}
        copies to achieve this discrimination, so we have arrived at a contradiction.
    \end{proof}
\end{theoremS}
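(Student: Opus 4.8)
The plan is to turn the testing task into a two-ensemble state-discrimination problem and then invoke a sample-complexity lower bound for discrimination. The guiding idea is to \emph{sandwich} the ``high-magic'' states that the tester is obliged to accept between Haar-random states and low-magic subset phase states: if the latter two are statistically indistinguishable given $K$ copies, then so are the high-magic states and the low-magic subset phase states, and a correct tester would have to distinguish exactly those.

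First I would show that Haar-random states essentially never fall below the threshold. Applying \cref{lem:levy} with $\beta = 1/4$ gives $\Pr_{\ket{\psi}\sim\haar}[M_\alpha(\psi) < n/(4\alpha)] \leq 2^{-\Omega(2^{n/2})}$, so for every $M < n/(4\alpha)$ the weight $\pi_M \coloneqq \Pr_{\ket{\psi}\sim\haar}[M_\alpha(\psi) < M]$ is doubly-exponentially small. Defining the conditional ensembles $\mathcal{E}_{M_\alpha \geq M}$ and $\mathcal{E}_{M_\alpha < M}$ together with their $K$-copy averages $\Phi_{\mathcal{E}}^{(K)}$, one can write $\Phi_{\haar}^{(K)} = \pi_M \Phi_{\mathcal{E}_{M_\alpha<M}}^{(K)} + (1-\pi_M)\Phi_{\mathcal{E}_{M_\alpha\geq M}}^{(K)}$, and a one-line triangle inequality then yields $\norm{\Phi_{\mathcal{E}_{M_\alpha\geq M}}^{(K)} - \Phi_{\haar}^{(K)}}_1 \leq \pi_M$. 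In words, conditioning the Haar ensemble on large magic perturbs it only negligibly.

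Next I would import the statistical closeness of random subset phase states to Haar states, $\norm{\Phi_{\mathcal{E}_{f,S}}^{(K)} - \Phi_{\haar}^{(K)}}_1 = O(K^2/\abs{S})$ for $K < \abs{S} < d$ (Theorem 2.1 of Ref.~\cite{aaronson_quantum_2023}), and combine it with the previous bound via a triangle inequality to get $\norm{\Phi_{\mathcal{E}_{M_\alpha\geq M}}^{(K)} - \Phi_{\mathcal{E}_{f,S}}^{(K)}}_1 = O(K^2/\abs{S})$. Since every subset phase state obeys $M_\alpha(\psi) \leq M_0(\psi) \leq 2\log\abs{S}$ by \cref{lem:upperboundonm0}, choosing $\abs{S} = 2^{m/2}$ places the whole subset-phase ensemble below the lower threshold $m$. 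A valid tester accepts $\mathcal{E}_{M_\alpha\geq M}$ with probability $\geq 2/3$ and the subset-phase ensemble with probability $\leq 1/3$, hence distinguishes them with advantage $\geq 1/3$; the Helstrom bound caps this advantage by the $K$-copy trace distance, so $O(K^2/\abs{S}) = \Omega(1)$ is forced. This gives $K = \Omega(\sqrt{\abs{S}}) = \Omega(2^{m/4})$, and absorbing constants together with the $M_\alpha \leq m$ versus $M_\alpha < m$ boundary into a slack exponent yields $K = \Omega(2^{m/(4+c)})$ for any $c>0$.

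The main obstacle is the second step: the argument only works because the concentration in \cref{lem:levy} is strong enough (doubly exponential) that $\pi_M$ is utterly negligible against the polynomial $O(K^2/\abs{S})$ term, so the conditioning contributes nothing to the final trace distance. Once this sandwich is in place, the remaining ingredients—the triangle inequalities, the black-box subset-phase/Haar indistinguishability of Ref.~\cite{aaronson_quantum_2023}, and the Helstrom bound—assemble routinely, and the only care needed is matching the magic upper bound $2\log\abs{S}$ for subset phase states to the threshold $m$.
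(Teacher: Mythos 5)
Your proposal is correct and follows essentially the same route as the paper's proof: the same Levy-lemma conditioning of the Haar ensemble on $M_\alpha \geq M$, the same triangle-inequality sandwich against the subset phase ensemble via Theorem 2.1 of Ref.~\cite{aaronson_quantum_2023}, and the same Helstrom-bound reduction from testing to discrimination, yielding $K = \Omega(\sqrt{\abs{S}}) = \Omega(2^{m/4})$ with $\abs{S} = 2^{m/2}$. Your use of the bound $M_\alpha \leq 2\log\abs{S}$ from \cref{lem:upperboundonm0} is in fact slightly more careful than the paper's stated $M_\alpha \leq \log\abs{S}$, and still places the subset phase ensemble below the threshold $m$ as required.
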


\begin{corollaryS}[Property testers for magic monotones]
    Given any magic monotone $\mathcal{M}$ such that $\mathcal{M}\ge s M_{\alpha}$ for some $\alpha,s$, a property tester for $\mathcal{M}$ requires $K=\Omega(2^{\frac{m/s}{4+c}})$.
    \begin{proof}
        Consider a magic monotone $\mathcal{M}$ such that $\mathcal{M}\ge sM_{\alpha}$. Assuming the contrapositive, i.e., that one can use $O(2^{\frac{m/s}{4+c}})$ to distinguish whether a state vector $\ket{\psi}$ has $\mathcal{M}({\psi})\le m$. However, if one could do so, one also would have distinguished $M_{\alpha}({\psi})\le m/s$ using $O(2^{\frac{m/s}{4+c}})$ copies and this is a contradiction to \cref{thm:testing-k}.
    \end{proof}
\end{corollaryS}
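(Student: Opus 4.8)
The plan is to derive this as a corollary of \cref{thm:testing-k} by a contrapositive reduction, using the one-sided relation $\mathcal{M}\ge sM_{\alpha}$ to transport the copy lower bound for $M_{\alpha}$-testing onto $\mathcal{M}$-testing with thresholds rescaled by the factor $s$. Concretely, I would assume for contradiction that some tester $\mathcal{A}_{\mathcal{M}}$ decides $\mathcal{M}(\psi)\ge M$ versus $\mathcal{M}(\psi)<m$ using only $K=o(2^{(m/s)/(4+c)})$ copies, and then show that $\mathcal{A}_{\mathcal{M}}$, run verbatim, already solves an $M_{\alpha}$-testing task at the shifted thresholds $M/s$ versus $m/s$, so that the lower bound $K=\Omega(2^{(m/s)/(4+c)})$ of \cref{thm:testing-k} applies and contradicts the assumption.

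The completeness direction of the induced $M_{\alpha}$-tester is immediate from the pointwise bound: if $M_{\alpha}(\psi)\ge M/s$ then $\mathcal{M}(\psi)\ge sM_{\alpha}(\psi)\ge M$, so $\mathcal{A}_{\mathcal{M}}$ accepts with probability $\ge 2/3$. For the soundness side I would reuse the concrete hard instances from the proof of \cref{thm:testing-k}, namely subset phase states $\ket{\psi_{f,S}}$ with $\abs{S}=2^{(m/s)/2}$; these obey $M_{\alpha}(\psi_{f,S})\le \log\abs{S}=(m/s)/2<m/s$, so they are genuine reject-instances for the $M_{\alpha}$-test. The contradiction is then packaged exactly as in \cref{thm:testing-k}: the ensembles $\mathcal{E}_{M_\alpha\ge M}$ and $\mathcal{E}_{f,S}$ satisfy $\norm{\Phi^{(K)}_{\mathcal{E}_{M_\alpha\ge M}}-\Phi^{(K)}_{\mathcal{E}_{f,S}}}_1=O(K^2/\abs{S})$, and since a valid tester must accept the former with probability $\ge 2/3$ and reject the latter, the Helstrom bound forces $K=\Omega(\sqrt{\abs{S}})=\Omega(2^{(m/s)/4})$, sharpened to $\Omega(2^{(m/s)/(4+c)})$.

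The main obstacle is the soundness step, where the one-sidedness of $\mathcal{M}\ge sM_{\alpha}$ bites: a small $M_{\alpha}(\psi)$ does not by itself bound $\mathcal{M}(\psi)$ from above, so I cannot conclude pointwise that $\mathcal{A}_{\mathcal{M}}$ rejects the subset phase states. The way I would close this is to note that for the hard instances it suffices that $\mathcal{M}(\psi_{f,S})$ is itself small, which holds for every monotone of interest because $\mathcal{M}\le O(\mathcal{R})$ and \cref{thm:robustness} gives $\mathcal{R}(\psi_{f,S})\le\log\abs{S}$; equivalently, one may lean entirely on the statistical closeness of $\mathcal{E}_{M_\alpha\ge M}$ and $\mathcal{E}_{f,S}$, which caps the achievable accept-probability gap of any $K$-copy algorithm at $O(K^2/\abs{S})$ regardless of what property it purports to test. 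A minor bookkeeping point is to keep the rescaled thresholds admissible, i.e.\ $m/s<M/s<n/(4\alpha)$, which holds whenever $m<M<sn/(4\alpha)$, so that \cref{thm:testing-k} is in force.
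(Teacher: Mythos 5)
Your route is the same as the paper's---a contrapositive reduction to \cref{thm:testing-k} with thresholds rescaled by $s$, instantiated on the same hard pair of ensembles (Haar-like high-$M_\alpha$ states versus subset phase states)---but you are more careful than the paper's two-line proof, which silently asserts that an $\mathcal{M}$-tester ``would also have distinguished $M_\alpha(\psi)\le m/s$'' without addressing the soundness direction. You correctly identify that the one-sided hypothesis $\mathcal{M}\ge sM_\alpha$ only transports the \emph{completeness} side (high $M_\alpha$ forces high $\mathcal{M}$, so Haar states are genuine accept-instances), and that nothing in the stated hypothesis forces $\mathcal{M}(\psi_{f,S})<m$, so the subset phase states need not be reject-instances for $\mathcal{A}_{\mathcal{M}}$. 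Your patch via $\mathcal{M}\le O(\mathcal{R})$ together with $\mathcal{R}(\psi_{f,S})\le\log\abs{S}$ from \cref{thm:robustness} is the right fix, and it is what the surrounding results (\cref{thm:sufficient}, \cref{cor:3-pseudomagic}) implicitly supply for all the monotones of interest. One correction, though: your claimed ``equivalent'' fallback---leaning entirely on the statistical closeness $\norm{\Phi^{(K)}_{\mathcal{E}_{M_\alpha\ge M}}-\Phi^{(K)}_{\mathcal{E}_{f,S}}}_1=O(K^2/\abs{S})$---does \emph{not} suffice on its own. The Helstrom bound caps the accept-probability gap of any $K$-copy algorithm, but a contradiction only arises if the tester is \emph{obliged} to exhibit such a gap, i.e., only if the subset phase states actually lie in the reject promise class $\mathcal{M}(\psi)<m$; absent the upper bound on $\mathcal{M}(\psi_{f,S})$, the tester may legitimately accept them and no contradiction follows. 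So option (a) is necessary, not interchangeable with (b). A final bookkeeping remark: once the upper bound $\mathcal{M}\le s'\mathcal{R}$ enters, the admissible subset size is governed by $\log\abs{S}<m/s'$ rather than by the lower-bound constant $s$, so the exponent in the final copy bound should really be phrased in terms of the constant controlling the upper bound on the reject-instances; the paper's statement inherits the same looseness.
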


Thanks to \cref{thm:tune-magic} and \cref{thm:tune-entanglement}, \cref{thm:testing-k} can be strengthened to hold against non-stabilizerness testers designed for classes of input states with bounded entanglement between $\omega(\log(n))$ and $n$. We may similarly strengthen the property testing lower bounds in Section 3 of Ref.~\cite{aaronson_quantum_2023} to hold for input states with bounded magic.

\section{The independence of pseudomagic and pseudoentanglement}

In this section, we are going to show the following results:
\begin{itemize}
    \item Given two functions $f(n),g(n)\in[\omega(\log n), O(n)]$ there exists a pseudorandom ensemble with entanglement entropy $\Theta(f(n))$ over superpolynomially many cuts and magic $\Theta(g(n))$, for any magic measure $\mathcal{M}$ obeying $\Omega(M_2)\le \mathcal{M}\le O(\mathcal{R})$ (e.g., those listed in Table I of the main text).
    \item Given any function $f(n)\in[\omega(\log n),O(n)]$, there exists a pseudomagic pair with gap $\omega(\log n)$ vs. $O(n)$ and fixed entanglement $\Theta(f(n))$ over superpolynomially many cuts. 
    \item Given any function $g(n)\in[\omega(\log n),O(n)]$, there exists a pseudoentangled pair with maximal gap $\omega(\log n)$ vs. $O(n)$ and fixed magic $\Theta(g(n))$. 
\end{itemize}

We use the following notation throughout this section: given any unitary $U$ and any ensemble of states $\mathcal{E} \coloneqq  \{\ket{\psi_k}\}$, we define the ensemble induced by applying $U$ to every member of $\mathcal{E}$ as $\mathcal{E}_U \coloneqq  \{U \ket{\psi_k}\}$.

\begin{lemmaS}[Pseudorandomness is preserved under application of efficient unitaries]\label{lem:efficientu}
For any unitary $U$ with a $\poly$-depth circuit and any pseudorandom ensemble $\mathcal{E}$, the ensemble $\mathcal{E}_U$ is also a pseudorandom ensemble.
\begin{proof}
    Since $U$ is efficient by assumption, the new ensemble automatically has the two necessary properties of a pseudorandom ensemble:
    \begin{itemize}
        \item \emph{Efficient preparability}: we can prepare $\mathcal{E}_U$ by first preparing $\ket{\psi_k}$ and then applying the unitary $U$.
        \item \emph{Computational indistinguishability}: assume for contradiction's sake that there exists some efficient distinguisher $\mathcal{A}$ and some $t = O(\poly n)$ such that
        \begin{equation}
            \abs{\Pr_{\ket{\psi} \sim \mathcal{E}_U}[\mathcal{A}\big(\ket{\psi}^{\otimes t}\big)=1] - \Pr_{\ket{\phi} \sim \haar}[\mathcal{A}\big(\ket{\phi}^{\otimes t}\big)=1]} > \negl(n).
        \end{equation}
        Then, define the (efficient) distinguisher $\mathcal{A}' \coloneqq  \mathcal{A} \circ U^{\otimes t}$. Observe that $\Pr_{\ket{\psi} \sim \mathcal{E}}[\mathcal{A}'\big(\ket{\psi}^{\otimes t}\big)=1] = \Pr_{\ket{\psi} \sim \mathcal{E}_U}[\mathcal{A}\big(\ket{\psi}^{\otimes t}\big)=1]$, and similarly by the invariance of the Haar measure, $\Pr_{\ket{\phi} \sim \haar}[\mathcal{A}'\big(\ket{\phi}^{\otimes t}\big)=1] = \Pr_{\ket{\phi} \sim \haar}[\mathcal{A}\big(\ket{\phi}^{\otimes t}\big)=1]$. Therefore,
        \begin{equation}
            \abs{\Pr_{\ket{\psi} \sim \mathcal{E}}[\mathcal{D}'\big(\ket{\psi}^{\otimes t}\big)=1] - \Pr_{\ket{\phi} \sim \haar}[\mathcal{D}'\big(\ket{\phi}^{\otimes t}\big)=1]} > \negl(n),
        \end{equation}
        which contradicts the assumption that $\mathcal{E}$ is pseudorandom.
    \end{itemize}
\end{proof}
\end{lemmaS}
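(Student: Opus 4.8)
The plan is to verify the two defining properties of a pseudorandom ensemble (\cref{def:pseudorandom}) directly for $\mathcal{E}_U$, leaning on the corresponding properties of $\mathcal{E}$. The first property, efficient preparability, is immediate: to produce a sample $U\ket{\psi_k}$ one first prepares $\ket{\psi_k}$ (possible in polynomial time because $\mathcal{E}$ is pseudorandom) and then applies $U$ (possible in polynomial time because $U$ has a $\poly$-depth circuit). Since the composition of two efficient procedures is efficient, $\mathcal{E}_U$ is efficiently preparable.

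The substance of the argument is the second property, computational indistinguishability from Haar-random states, which I would establish by contraposition via a standard reduction. Suppose for contradiction that $\mathcal{E}_U$ were \emph{not} pseudorandom, so that some efficient distinguisher $\mathcal{A}$ and copy count $t=O(\poly n)$ achieve non-negligible advantage in separating $t$ copies of a random member of $\mathcal{E}_U$ from $t$ copies of a Haar-random state. From $\mathcal{A}$ I would build a distinguisher $\mathcal{A}'\coloneqq\mathcal{A}\circ U^{\otimes t}$ against $\mathcal{E}$ itself, which on input of $t$ copies of an unknown state first applies $U$ to each copy and then runs $\mathcal{A}$. Two observations then close the loop. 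First, if the input is $\ket{\psi_k}^{\otimes t}$ with $\ket{\psi_k}\sim\mathcal{E}$, then $U^{\otimes t}$ yields exactly $(U\ket{\psi_k})^{\otimes t}$, a $t$-fold copy of a random member of $\mathcal{E}_U$, so $\mathcal{A}'$'s acceptance probability on $\mathcal{E}$ equals $\mathcal{A}$'s acceptance probability on $\mathcal{E}_U$. Second --- the one genuinely essential ingredient --- by the invariance of the Haar measure under left-multiplication by the fixed $U$, the state $U\ket{\phi}$ is again Haar-distributed when $\ket{\phi}\sim\haar$, so $\mathcal{A}'$'s acceptance probability on a Haar input equals $\mathcal{A}$'s acceptance probability on a Haar input. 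Subtracting, the distinguishing advantage of $\mathcal{A}'$ against $(\mathcal{E},\haar)$ equals that of $\mathcal{A}$ against $(\mathcal{E}_U,\haar)$, which was assumed non-negligible, contradicting the pseudorandomness of $\mathcal{E}$.

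The only step needing care, and the closest thing to an obstacle, is verifying that $\mathcal{A}'$ is itself an \emph{efficient} distinguisher, since the contradiction is only valid against efficient adversaries. This holds because $t=O(\poly n)$ and $U$ has depth $\poly(n)$, so $U^{\otimes t}$ is a circuit of size $t\cdot\poly(n)=\poly(n)$ applied in parallel across the copies; prepending it to the efficient $\mathcal{A}$ keeps the entire procedure polynomial-time. With efficiency of $\mathcal{A}'$ confirmed and Haar-invariance invoked as above, both conditions of \cref{def:pseudorandom} hold for $\mathcal{E}_U$, completing the argument.
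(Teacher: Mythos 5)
Your proposal is correct and follows essentially the same route as the paper's proof: efficient preparability by composing the preparation of $\ket{\psi_k}$ with the circuit for $U$, and computational indistinguishability via the reduction $\mathcal{A}'\coloneqq\mathcal{A}\circ U^{\otimes t}$ together with the left-invariance of the Haar measure. Your explicit check that $\mathcal{A}'$ remains efficient is a point the paper leaves implicit, but it is the same argument.
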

A special case of the above lemma is that $\mathcal{E}_U$ is pseudorandom for any Clifford unitary $U$, since all Clifford unitaries can be implemented with $O(n^2)$ gates~\cite{aaronson_improved_2004}. This allows us to prove that there exist pseudomagic pairs where the low magic ensemble has an arbitrary degree of entanglement. Let us first prove the following lemma.
\begin{lemmaS}[Pseudomagic ensembles with tunable entanglement]\label{thm:tune-entanglement}
Let $\mathcal{E}_{f,S}$ be the ensemble of subset phase state with $\log |S|=\omega(\log n)$. Given any function $f(n)\in[\Omega(\log |S|), O(n)]$, there exist a polynomial-depth unitary $U^*$ such that the ensemble $\mathcal{E}_{U^{*}}=\{U^{*}\ket{\psi_{f,S}}\,|\, \ket{\psi_{f,S}}\in\mathcal{E}_{f,S}\}$ obeys:
\begin{itemize}
    \item $\mathcal{M}(\ket{\psi})=\Theta(\log|S|)$ with high probability over $\ket{\psi}\sim\mathcal{E}_{U^*}$, where $\mathcal{M}$ is a magic measure such that $\Omega(M_2)\le \mathcal{M}\le O(\mathcal{R})$;
    \item $S(\psi_A)=\Theta(f(n))$ with probability $\ge 1-\operatorname{negl(n)}$ over $\ket{\psi}\sim\mathcal{E}_{U^*}$;
    \item $\mathcal{E}_{U^{*}}$ is pseudorandom.
\end{itemize}
\begin{proof}
    We first show that one can tune entanglement of a pseudorandom ensemble without changing magic. Therefore, one has that $\mathcal{M}(\ket{\psi})=\Theta(\log |S|)$ with high probability. We do this by acting on the states in the ensemble with a Clifford unitary, which does not change magic. For any $\Omega(\log|S| ) \leq f(n) \leq O(n)$, we will let this Clifford unitary $U$ act on the first $f(n)$ qubits in the system, and leave the rest untouched. 

    Consider the cut $A|B$, where $A$ is the first $f(n)/2$ qubits and $B$ contains the rest of the system. The 2-R\'enyi entanglement of any state $\ket{\psi}$ across this cut can be written $-\log \tr(S_A \ketbra{\psi}{\psi}^{\otimes 2})$, where $S_A$ is the swap operator on the $A$ subsystem. Now, letting $\mathcal{C}_f$ (resp. $\haar_f$) be the ensemble of Clifford (resp. Haar) unitaries over $f(n)$ qubits, for any state $\ket{\psi}$:
    \begin{equation}
        \mathbb{E}_{U \sim \mathcal{C}_f}\tr(S_A (U \ketbra{\psi}{\psi} U^\dagger)) = \mathbb{E}_{U \sim \haar_f} \tr(U^{\otimes 2} S_A (U^\dagger)^{\otimes 2} \ketbra{\psi}{\psi}^{\otimes 2}),
    \end{equation}
    and using the identity
    \begin{equation}
        \mathbb{E}_{U \sim \haar_f} U^{\otimes 2} O (U^{\dagger})^{\otimes 2} = \frac{\tr O - 2^{-f(n)}\tr(O S)}{4^{f(n)}-1} I + \frac{\tr(O S) - 2^{-{f(n)}}\tr(O)}{4^{f(n)}-1} S,
    \end{equation}
    where $S$ is the swap operator over $f(n)$ qubits. Since $\tr(S_A) = 2^{f(n)/2} \cdot 4^{f(n)/2} = 2^{3f(n)/2}$ (recall that we are only taking the trace of $S_A$ over sites where we acted with our Haar random unitary), and $\tr(S_A S)=\tr(S_B)=2^{3f(n)/2}$, we get
\begin{equation}
    \mathbb{E}_{U \sim \haar_k} U^{\otimes 2} S_A (U^{\otimes 2})^\dagger = \frac{2^{f(n)/2}}{2^{f(n)}+1}(I+S),
\end{equation}
so $\mathbb{E}_{U \sim \haar_k} \tr(S_A (U \ketbra{\psi}{\psi} U^\dagger)^{\otimes 2}) \leq \frac{2^{f(n)/2+1}}{2^{f(n)}+1} \leq 2^{-f(n)/2 + 1}$. Now, consider $2^{f(n)/4-2}$ different cuts $(A_i,B_i)$ such that $A$ is taken to be $f(n)/2$ of the first $f(n)$ qubits (there are $\binom{f}{f/2} \sim 2^f(n)$ cuts like this). By a simple union bound, we find that there must be at least one $U^* \in \mathcal{C}_f$ such that
\begin{equation}
    \mathbb{E}_{\ket{\psi} \sim \mathcal{E}_{U^*}} \tr(S_A \ketbra{\psi}{\psi}) \leq 2^{-f(n)/4}
\end{equation}
for \emph{each} of the cuts $(A_i,B_i)$. Finally, let us show that acting with a Clifford unitary on $f(n)$ qubits can change the $0$-R\'enyi entropy of at most $f(n)$. To see this, let us specialize the discussion to a subset phase state $\ket{\psi_{f,S}}\propto \sum_{x\in S}(-1)^{f(x)}\ket{x}$ and let us apply the unitary $U^{*}$ on the first $f(n)$ qubits. It is immediate to see that we can write
\begin{equation}
 U^{*}\ket{\psi_{f,S}}\propto \sum_{x\in \tilde{S}}(-1)^{f(x)}\ket{x}   
\end{equation}
where $|\tilde{S}|\le |S|2^{f(n)}$. As such, one has that the $0$-R\'enyi entropy of entanglement $S_{0}(\psi_A)\le \log|S|+k=O(f(n))$, where we used the fact that $f(n)=\Omega(\log |S|)$. Combining these two facts and using Markov's inequality, we immediately have that 
\begin{equation}
    \Pr_{\ket{\psi} \sim \mathcal{E}_{U^*}} \left[S(\psi_A) = \Theta(f(n)) \right] \geq 1 - 2^{-f(n)/4-1}=1-\operatorname{negl}(n).
\end{equation}
Furthermore, this statement holds across  $2^{f(n)/4-2}=\omega(\poly(n))$ (which is superpolynomial) cuts, as desired. 

Finally, since Clifford unitaries can be implemented efficiently, we can exploit \cref{lem:efficientu} and claim that $\mathcal{E}_{U^*}$ is pseudorandom. This concludes the proof.

\end{proof}
\end{lemmaS}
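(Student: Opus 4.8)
The plan is to construct $U^*$ as a Clifford unitary that acts only on the first $f(n)$ qubits of each subset phase state, leaving the remaining qubits untouched. The virtue of this choice is that a Clifford leaves every magic measure in our sandwich invariant, while a generic Clifford on a block of $f(n)$ qubits can inject entanglement of order $f(n)$ across cuts internal to that block. I would therefore decouple the three bullets: magic is controlled by Clifford-invariance and holds for \emph{any} Clifford, whereas the entanglement bullet is where the real work lies.

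For the magic bullet, since $U^*$ is Clifford one has $M_2(U^*\psi_{f,S}(U^*)^\dagger)=M_2(\psi_{f,S})$ and $\mathcal{R}(U^*\psi_{f,S}(U^*)^\dagger)=\mathcal{R}(\psi_{f,S})$. Combining this with \cref{thm:lowerboundm2} and \cref{thm:robustness}, which give $M_2(\psi_{f,S})=\Theta(\log|S|)$ and $\mathcal{R}(\psi_{f,S})=\Theta(\log|S|)$ with high probability, and with the hypothesis $\Omega(M_2)\le\mathcal{M}\le O(\mathcal{R})$, I immediately obtain $\mathcal{M}(U^*\psi_{f,S}(U^*)^\dagger)=\Theta(\log|S|)$ with high probability over the ensemble, for any Clifford $U^*$. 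The entanglement upper bound is equally direct: a subset phase state has Schmidt rank at most $|S|$ across any cut, and a unitary on $f(n)$ qubits can inflate this rank by a factor at most $2^{f(n)}$, so the $0$-Rényi entropy obeys $S_0(\psi_A)\le \log|S|+f(n)=O(f(n))$ using $f(n)=\Omega(\log|S|)$; since $S\le S_0$, this yields $S(\psi_A)=O(f(n))$.

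The entanglement lower bound requires choosing the Clifford carefully. I would write the second-moment purity across a bipartition as $\tr(S_A\,\psi^{\otimes 2})$ with $S_A$ the swap on $A$, average $U$ over the Clifford group on the $f(n)$ active qubits, and use that this group is a unitary $3$-design (so its second moments match Haar) to replace the Clifford average by a Haar twirl. A standard Weingarten evaluation, using $\tr(S_A)=\tr(S_A S)=2^{3f(n)/2}$ for a balanced cut of the active block, gives an expected purity of order $2^{-f(n)/2}$. To get high entanglement across \emph{many} cuts simultaneously, I would fix a family of about $2^{f(n)/4}$ balanced cuts of the active block, bound the total expected purity summed over these cuts (and over $\psi$) by $2^{f(n)/4}\cdot 2^{-f(n)/2}=2^{-f(n)/4}$, and invoke a first-moment argument to extract a single $U^*$ for which the $\psi$-averaged purity is at most $2^{-f(n)/4}$ on every one of these cuts. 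A final Markov step upgrades this to $S(\psi_A)=\Omega(f(n))$ with probability $1-\negl(n)$, and since $f(n)=\omega(\log n)$ the number of cuts $2^{f(n)/4}$ is superpolynomial. Together with the upper bound this gives $S(\psi_A)=\Theta(f(n))$. Pseudorandomness then follows from \cref{lem:efficientu}, because any Clifford on at most $n$ qubits is implementable with $O(n^2)$ gates and hence is polynomial-depth.

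The main obstacle I anticipate is precisely this third step: organizing the two layers of averaging (over the Clifford $U$ and over the random phase/subset data defining $\psi$) and carrying out the union bound so that one fixed $U^*$ entangles across exponentially many cuts at once. The Weingarten computation itself is routine once the design property is invoked; the care is in choosing the exponents so that the number of cuts stays superpolynomial while the per-cut failure probability stays negligible.
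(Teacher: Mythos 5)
Your proposal is correct and follows essentially the same route as the paper: a Clifford on the first $f(n)$ qubits to preserve magic, a second-moment (design/Haar) computation of the average swap expectation $\tr(S_A\,\psi^{\otimes 2})$ giving purity $\sim 2^{-f(n)/2}$, a union/first-moment argument over $\sim 2^{f(n)/4}$ balanced cuts to fix a single $U^*$, the Schmidt-rank bound $S_0(\psi_A)\le\log|S|+f(n)$ for the upper bound, Markov to conclude, and the efficient-unitary lemma for pseudorandomness. (The only cosmetic difference is that you invoke the 3-design property where a 2-design suffices for the second moment.)
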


Finally, we can construct the pseudomagic pair with gap $\Theta(\poly\log n)$ vs. $\Theta(n)$ and fixed entanglement. Consider the ensemble of subset phase states $\mathcal{E}_{f,S}$ with $|S|=\omega(\poly(n))$ and construct the ensemble $\mathcal{E}_{U^{*}}$ discussed above. We just showed that $S(\psi_A)=\Theta(k(n))$ with high probability over $\ket{\psi}\sim\mathcal{E}_{U^{*}}$ and, since $U^*$ does not change magic, it holds that $\mathcal{M}(\ket{\psi})=\Theta(\log |S|)=\Theta(\poly\log n)$ for any magic measure $\Omega(M_{2})\le \mathcal{M}\le O(\mathcal{R})$.

We now prove the complementary statement: there also exist pseudoentangled ensembles where the low entanglement ensemble has tunable magic.

\begin{lemmaS}[Pseudoentangled ensembles with tunable magic]\label{thm:tune-magic}
Let $\mathcal{E}_{f,S}$ be the pseudorandom ensemble of subset phase states with $\log \abs{S} = \Theta(\poly \log n)$. For any function $g(n)\in[\Omega(\log|S|), O(n)]$, if the pseudorandom functions $f$ are 8-wise independent, there exists a product of \emph{efficient} local unitaries $\tilde{U} = \bigotimes_{i=1}^{g(n)} \tilde{U}_i$ such that the states in the ensemble $\mathcal{E}_{\tilde{U}}$ obey
\begin{itemize}
    \item $\mathcal{M}(\ket{\psi})=\Theta(g(n))$ with  probability $1-\operatorname{negl}(n)$ over $\ket{\psi}\sim\mathcal{E}_{U^*}$, where $\mathcal{M}$ is a magic measure such that $\Omega(M_2)\le \mathcal{M}\le O(\mathcal{R})$;
    \item $S(\psi_A)=\Theta(\log|S|)$ with probability $\ge 1-\operatorname{negl(n)}$ over $\ket{\psi}\sim\mathcal{E}_{U^*}$;
    \item $\mathcal{E}_{U^{*}}$ is pseudorandom.
\end{itemize}
\end{lemmaS}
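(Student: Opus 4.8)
The plan is to obtain $\mathcal{E}_{\tilde U}$ from the subset phase ensemble by injecting magic with single-qubit non-Clifford gates while leaving the entanglement structure untouched. Concretely, I would fix a single-qubit non-Clifford gate $V$ (a generic rotation such as $V=e^{-i\theta Y}$, chosen to avoid the accidental Clifford collapses discussed below) and set $\tilde U=\bigotimes_{i=1}^{g(n)}V$, acting as the identity on the remaining $n-g(n)$ qubits. Since $\tilde U$ is a depth-one circuit it is efficient, so \cref{lem:efficientu} immediately gives that $\mathcal{E}_{\tilde U}$ is pseudorandom. The entanglement claim is equally direct: for any bipartition $A|B$ the product unitary factorizes as $\tilde U=\tilde U_A\otimes \tilde U_B$, and entanglement entropy is invariant under local unitaries, so $S((\tilde U\psi)_A)=S(\psi_A)$. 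It therefore suffices to note that subset phase states have $S(\psi_A)=\Theta(\log|S|)$ with probability $1-\negl(n)$ across superpolynomially many cuts: the upper bound holds because the Schmidt rank across any cut is at most $|S|$, and the matching lower bound follows from a second-moment estimate over random $S$ together with a union bound, exactly as in the pseudoentanglement analysis.

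For the magic, recall $\Omega(M_2)\le \mathcal{M}\le O(\mathcal{R})$, so it suffices to prove $\mathcal{R}(\tilde U\psi_{f,S})=O(g(n))$ and $M_2(\tilde U\psi_{f,S})=\Omega(g(n))$. The upper bound is the easy direction: the log-robustness is sub-additive under composition with unitary channels, and a single-qubit non-Clifford gate contributes only an $O(1)$ additive overhead, so $\mathcal{R}(\tilde U\psi_{f,S})\le \mathcal{R}(\psi_{f,S})+O(g(n))$; combined with \cref{thm:robustness}, which gives $\mathcal{R}(\psi_{f,S})\le \log|S|$, and the hypothesis $g(n)=\Omega(\log|S|)$, this yields $\mathcal{R}(\tilde U\psi_{f,S})=O(g(n))$.

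The crux, and where I expect the main difficulty, is the lower bound $M_2(\tilde U\psi_{f,S})=\Omega(g(n))$. I would prove it in expectation and then invoke Markov, mirroring \cref{lem:expm2,thm:lowerboundm2}. Writing $2^{-M_2(\tilde U\psi_{f,S})}=\frac{1}{2^n}\sum_{P\in\mathbb{P}_n}\tr^4(\tilde U^\dagger P\tilde U\,\psi_{f,S})$ and expanding each conjugated single-qubit Pauli $V^\dagger P_i V$ in the Pauli basis, the sum reorganizes into the same six combinatorial cases as in \cref{lem:expm2}, now weighted by the rotation coefficients on the first $g(n)$ qubits. The key point is that conjugation by $V$ genuinely spreads Pauli weight, so any Pauli string supported nontrivially on the magic qubits is suppressed by a factor bounded away from $1$ per site; this produces an extra multiplicative decay of $2^{-\Omega(g(n))}$ on top of the $O(1/|S|)$ obtained for the bare subset phase state, giving $\E_f[2^{-M_2(\tilde U\psi_{f,S})}]\le 2^{-\Omega(g(n))}$. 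As in \cref{thm:lowerboundm2}, only $8$-wise independence of $f$ is used, entering precisely through the pairing constraint \cref{constr:even}, and Markov's inequality then upgrades the expectation bound to $M_2(\tilde U\psi_{f,S})=\Omega(g(n))$ with probability $1-\negl(n)$.

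The main obstacle is exactly this last step: I must choose $V$ so that the per-site suppression is robust for a \emph{random} phase function $f$ — the $T$-gate, for instance, can fail, since $T^{\otimes 2}$ returns a Bell pair to a stabilizer state, so that structured superpositions can recombine and lose their magic — and then carry the $g(n)$-qubit bookkeeping of the conjugated Paulis through the case analysis without sacrificing the exponential decay. Assembling the four ingredients gives $\mathcal{M}(\tilde U\psi_{f,S})=\Theta(g(n))$ and $S((\tilde U\psi)_A)=\Theta(\log|S|)$ with overwhelming probability, with $\mathcal{E}_{\tilde U}$ pseudorandom, as claimed.
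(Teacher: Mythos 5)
Your proposal is correct in substance and follows the same core strategy as the paper's proof: bound $\E_f\bigl[2^{-M_2(\tilde U\psi_{f,S})}\bigr]$ by tensor-factorizing $\frac{1}{2^n}\sum_P(\tilde U^\dagger P\tilde U)^{\otimes 4}$ over qubits, extract an exponential per-site suppression on the $g(n)$ rotated qubits, count the surviving terms via the $8$-wise-independence pairing constraint (the paper gets $\abs{S}^4\binom{8}{2,2,2,2}$ terms directly from the max-norm bound, without needing the full six-case analysis of \cref{lem:expm2}), and finish with Markov; the upper bound via $O(1)$ robustness per local gate and the invariance of entanglement under product unitaries also match the paper's logic. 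The genuine difference is that you derandomize: you fix one explicit non-Clifford $V$ and apply $V^{\otimes g(n)}$, whereas the paper averages over $g(n)$ independent single-qubit Haar unitaries, computes $W=\E_{U\sim\haar_1}\bigl[(U^\dagger)^{\otimes4}\bigl(\sum_{P\in\mathbb{P}_1}P^{\otimes4}\bigr)U^{\otimes4}\bigr]=\frac{8}{5}\Pi_{\mathrm{sym}}+4\Pi'$ to get a per-site max-norm of $4/5$, deduces only the \emph{existence} of a good product unitary $U^*$, and must then invoke Solovay--Kitaev plus the continuity of pseudorandomness, entanglement, and magic to replace $U^*$ by an efficient $\tilde U$. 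Your route buys a trivial efficiency argument (a fixed depth-one layer) at the cost of having to verify the per-site decay for your chosen $V$ --- the one step you flag but do not carry out. That step does go through: writing $\frac{1}{2}\sum_{P\in\mathbb{P}_1}(V^\dagger PV)^{\otimes4}=2(V^\dagger)^{\otimes4}\Pi V^{\otimes4}$ with $\Pi=\frac{1}{4}\sum_P P^{\otimes4}$ a rank-$4$ projector, a direct computation gives a max computational-basis matrix element of $\frac{1+s}{2}$ with $s=x^4+y^4+z^4$, where $(x,y,z)$ is the Bloch vector of $V\ket{0}$; thus the per-site factor is bounded away from $1$ exactly when $V\ket{0}$ is not a single-qubit stabilizer state. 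This confirms your warning that $T$ fails (since $T\ket{0}=\ket{0}$), though the operative criterion is where $V$ sends the computational basis rather than the $T^{\otimes2}$ Bell-pair phenomenon you cite, and it shows that an exactly implementable choice such as $V=TH$ (for which $s=1/2$) suffices, so your argument closes without any gate-synthesis approximation.
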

\begin{proof}
In what follows, we divide the proofs in two parts. In the first part, we show tight bounds on the magic possessed by states in the ensemble $\mathcal{E}_{U^*}$. In the second part, we prove the pseudorandomness of $\mathcal{E}_{U^*}$.

\textbf{(I)} Let $\haar^{\otimes g}$ be the distribution over a product of $g$ independent single qubit Haar random unitaries. These single-qubit Haar random unitaries will act on the first $g(n)$ qubits, with an identity acting on the rest.

We will show that
    \begin{equation}
        Z \coloneqq  \frac{1}{2^n} \E_{\ket{\psi} \sim \mathcal{E}} \E_{U \sim \haar^{\otimes g}}  \left[\sum_{P \in \mathbb{P}_n} \tr^4(P U \ketbra{\psi}{\psi} U^\dagger)\right] = O(\exp(-g(n))).
    \end{equation}
    We define $V \coloneqq \frac{1}{2^n}\mathbb{E}_{U \sim \haar^{\otimes g}}\sum_{P \in \mathbb{P}_n} \left(U^\dagger P U\right)^{\otimes 4}$, so that $Z$ can be written $Z = \mathbb{E}_{\ket{\psi} \sim \mathcal{E}} \tr(V \ketbra{\psi}{\psi}^{\otimes 4})$.
    We also define $W \coloneqq  \E_{U \sim \haar_1}\left[(U^\dagger)^{\otimes 4} (\sum_{P \in \mathbb{P}_1} P^{\otimes 4}) U^{\otimes 4}\right]$, where $\haar_1$ is the ensemble of single qubit Haar random unitaries. We can analytically calculate
    \begin{equation}
        W = \frac{8}{5} \Pi_{\textrm{sym}} + 4 \Pi',
    \end{equation}
    where $\Pi'$ is the projector onto the subspace spanned by $\ket{0101} - \ket{0110} - \ket{1001} + \ket{1010}$ and $\ket{0011} - \ket{0110} - \ket{1001} + \ket{1100}$. The details of $W$ are irrelevant: the important part is that, when written in the computational basis, all matrix elements of $W$ are bounded in magnitude by $\frac{8}{5}$. Therefore, all the matrix elements of $V$ are bounded in magnitude by $\left(\frac{4}{5}\right)^{g(n)}$.

    Using this, we rewrite
    \begin{equation}
        \begin{aligned}
            Z &= \mathbb{E}_{\ket{\psi} \sim \mathcal{E}} \tr(V \ketbra{\psi}{\psi}^{\otimes 4}) \\
            &= \frac{1}{\abs{S}^4}\sum_{\{x_j\},\{y_j\}} \E_f\left[(-1)^{\sum_j f(x_j)+f(y_j)}\right] \tr(V \ket{x_1,x_2,x_3,x_4} \bra{y_1,y_2,y_3,y_4}),
        \end{aligned}
    \end{equation}
    where now $x_j \in S$ and $y_j \in S$. Our bound on the matrix elements of $V$ tells us $\abs{\tr(V \ket{x_1,x_2,x_3,x_4} \bra{y_1,y_2,y_3,y_4})} \leq \left(\frac{4}{5}\right)^{g(n)}$, so
    \begin{equation}
        Z \leq \left(\frac{4}{5}\right)^{g(n)}\frac{1}{\abs{S}^4} \sum_{\{x_j\},\{y_j\}} \E_f\left[(-1)^{\sum_j f(x_j)+f(y_j)}\right].
    \end{equation}
    The problem just reduces to bounding the number of nonzero elements in the sum. Since $f$ is $8$-wise independent, $\E_f\left[(-1)^{\sum_j f(x_j)+f(y_j)}\right]$ is only nonzero when the $x_j, y_j$ can all be `paired up'. That is, any given bitstring must appear an even number of times in $\{x_j\} \cup \{y_j\}$. Therefore, the number of nonzero elements is at most $\abs{S}^4 \binom{8}{2,2,2,2}$, where $\binom{8}{2,2,2,2}=2520$ is a multinomial coefficient. The $\abs{S}^4$ is because we can select 4 pairs from $\abs{S}$, and there are $\binom{8}{2,2,2,2}$ ways of allocating these bitstrings amongst $x_1,x_2,x_3,x_4,y_1,y_2,y_3,y_4$. So to conclude, $Z \leq 2520\left(\frac{4}{5}\right)^{g(n)}$. For all $g(n) \geq 70$ (which we assume for the rest of the proof), $Z \leq \left(\frac{9}{10}\right)^{g(n)}$.
    To summarize, we arrive at
    \begin{equation}
        \E_{U \sim \haar^{\otimes k}}\left[\E_{\psi \sim \mathcal{E_U}}\left[\frac{1}{2^n} \sum_{P \in \mathbb{P}_n} \tr^4(P \ketbra{\psi}{\psi})\right]\right] \leq \left(\frac{9}{10}\right)^{g(n)},
    \end{equation}
    so there must be some $U^* = \bigotimes_{i=1}^n U_i$ such that $\E_{\psi \sim \mathcal{E_{U^*}}}\left[\frac{1}{2^n} \sum_{P \in \mathbb{P}_n} \tr^4(P \ketbra{\psi}{\psi})\right] \leq \left(\frac{9}{10}\right)^{g(n)}$. By Markov's inequality, we see that
    \begin{equation}
        \Pr_{\ket{\psi} \sim \mathcal{E_{U^*}}}\left[\frac{1}{2^n} \sum_{P \in \mathbb{P}_n} \tr^4(P \ketbra{\psi}{\psi}) \geq \left(\frac{19}{20}\right)^{g(n)}\right] \leq \left(\frac{18}{19}\right)^{g(n)},
    \end{equation}
    or
    \begin{equation}
        \Pr_{\ket{\psi} \sim \mathcal{E_{U^*}}}\left[M_2({\psi}) \leq \log_2(20/19) g(n)\right] \leq \left(\frac{18}{19}\right)^{g(n)},
    \end{equation}
    so the states of the pseudorandom ensemble $\mathcal{E_{U^*}}$ have second stabilizer entropy $M_2 = \Omega(k(n))$ with overwhelming probability. The action of the $g(n)$ Haar random unitaries can also only increase magic by at most $O(g(n))$. To see this, let us write the action of $U^{*}$ on $\ket{\psi_{f,S}}$ in the computational basis. It is immediate to see that
    \begin{equation}
U^{*}\ket{\psi_{f,S}}\propto\sum_{x\in\tilde{S}}(-1)^{f(x)}\ket{x}  \,,
    \end{equation}
    where $\tilde{S}\subset\{0,1\}^{n}$ with cardinality bounded $|\tilde{S}|\le |S|2^{k}$. As such, with techniques similar to the one used in \cref{sec:pseudomagicproofs}, and using the fact that $g(n)=\Omega(\log |S|)$, we can bound $\mathcal{R}(\ket{\psi})=O(g(n))$ and $M_{0}(\ket{\psi})=O(g(n))$. Therefore, for any measure $\mathcal{M}$ obeying $\Omega(M_2)\le \mathcal{M}\le O(\mathcal{R})$, we can write
\begin{equation}
     \Pr_{\ket{\psi} \sim \mathcal{E_{U^*}}}\left[\mathcal{M}({\psi}) =\Theta(g(n))\right] \geq 1-\operatorname{negl}(n)
\end{equation}
    \textbf{(II)} So far, we have established that for the ensemble $\mathcal{E}$ of pseudorandom subset phase states, there exists a unitary $U^* = \bigotimes_{i=1}^n U_i^*$ such that $\mathcal{E}_{U^*}$ has both high magic and unchanged entanglement. In order to finally show that $\mathcal{E}_{U^*}$ is also pseudorandom, however, via \cref{lem:efficientu}, we still have to show that $U^*$ can be constructed efficiently with a finite gate set. Luckily, the Solovay-Kitaev theorem guarantees that we can approximate even an inefficient $U^*$ in polynomial depth. Combining this with the continuity of pseudorandomness (\smref{sec:prs-cont}), entanglement, and magic, we can show that we do not need to exactly implement $U^*$. By the Solovay-Kitaev theorem, we can approximate each of the $U_i^*$ with a unitary $\tilde{U}_i$ such that
    \begin{equation}
        \norm{U_i^*-\tilde{U}_i} \leq \frac{2^{-3n}}{n}
    \end{equation}
    and $\tilde{U}_i$ has $O(n \log n)$ depth. Then, defining $\tilde{U} \coloneqq  \bigotimes_{i=1}^n \tilde{U}_i$, we see that
    \begin{equation}
        \norm{U^* - \tilde{U}} \leq 2^{-3n}
    \end{equation}
    and $\tilde{U}$ is still a $O(n \log n)$ depth circuit.
    \end{proof}

We can now present the main results listed at the beggining of the section. The technical part is contained in \cref{thm:tune-entanglement} \cref{thm:tune-magic}. Let us prove the following.
\begin{theoremS}
For any two functions $f(n), g(n)$ there exists a pseudorandom ensemble with entanglement entropy $\Theta(f(n))$ over superpolynomially many cuts and magic $\Theta(g(n))$, for any measure of magic $\Omega(M_2)\le \mathcal{M}\le O(\mathcal{R})$. 
\begin{proof}
We have three different cases two discuss. If $f(n)=\Theta(g(n))$, then it is sufficient to pick the pseudorandom ensemble of subset phase state $\mathcal{E}_{f,S}$ with $|S|=\Theta(\exp f(n))$. Then, if $f(n)=\omega(g(n))$, we consider the ensemble of subset phase states $\mathcal{E}_{f,S}$ with $|S|=\Theta(\exp g(n))$ and use \cref{thm:tune-entanglement} to construct ensemble $\mathcal{E}_{U}$ and boost the entanglement to $\Theta(f(n))$ with high probability over superpolynomially many cuts. Finally, if $g(n)=\omega(f(n))$, we consider the pseudorandom ensemble of subset phase states $\mathcal{E}_{f,S}$ with $|S|=\Theta(\exp f(n))$ and use \cref{thm:tune-magic} to construct an ensemble $\mathcal{E}_{U}$ to boost magic $\mathcal{M}(\ket{\psi})=\Theta(g(n))$ with high probability over $\ket{\psi}\sim \mathcal{E}_{U}$. This concludes the proof. 
\end{proof}
\end{theoremS}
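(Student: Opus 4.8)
The plan is to prove the statement by a case analysis on the relative sizes of $f(n)$ and $g(n)$, reducing everything to the two tuning lemmas already in hand: \cref{thm:tune-entanglement}, which raises entanglement while leaving magic untouched, and \cref{thm:tune-magic}, which raises magic while leaving entanglement untouched. The common starting point in every case will be a pseudorandom ensemble of subset phase states $\mathcal{E}_{f,S}$ with $\log\abs{S}=\Theta(\min\{f(n),g(n)\})$. By \cref{thm:lowerboundm2} such states have $M_2=\Theta(\log\abs{S})$, hence $\mathcal{M}=\Theta(\log\abs{S})$ for every measure obeying $\Omega(M_2)\le\mathcal{M}\le O(\mathcal{R})$, and by the subset-phase-state structure they carry entanglement $\Theta(\log\abs{S})$ across superpolynomially many cuts. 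Thus the baseline already has both resources equal to $\Theta(\min\{f,g\})$, and it only remains to boost the larger one.

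Concretely, I would split into three cases:
\begin{itemize}
    \item If $f(n)=\Theta(g(n))$, the baseline ensemble with $\log\abs{S}=\Theta(f(n))$ already has entanglement $\Theta(f(n))$ and magic $\Theta(g(n))$, and nothing further is needed.
    \item If $f(n)=\omega(g(n))$, take $\log\abs{S}=\Theta(g(n))$ and apply \cref{thm:tune-entanglement} with target entanglement $f(n)$. Since $f(n)\in[\Omega(\log\abs{S}),O(n)]$, the lemma produces a $\poly$-depth Clifford $U^*$ giving entanglement $\Theta(f(n))$ over superpolynomially many cuts; because $U^*$ is Clifford it leaves every magic monotone invariant, so magic stays at $\Theta(g(n))$.
    \item If $g(n)=\omega(f(n))$, take $\log\abs{S}=\Theta(f(n))$ and apply \cref{thm:tune-magic} with target magic $g(n)$. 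Since $g(n)\in[\Omega(\log\abs{S}),O(n)]$, the lemma produces an efficient $\tilde{U}=\bigotimes_i\tilde{U}_i$ boosting magic to $\Theta(g(n))$; since $\tilde{U}$ is a tensor product of single-qubit gates it factorizes as $U_A\otimes U_B$ across every bipartition and hence preserves the Schmidt spectrum exactly, keeping entanglement at $\Theta(f(n))$.
\end{itemize}
In all three cases the pseudorandomness of the output ensemble follows from \cref{lem:efficientu}, since the applied unitaries are $\poly$-depth, and the negligible Solovay--Kitaev approximation error invoked inside \cref{thm:tune-magic} is absorbed using the continuity of pseudorandomness, entanglement, and magic established in \smref{sec:prs-cont}.

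The only real obstacle is bookkeeping: I must verify that the parameter constraints of the two tuning lemmas are met and that the quantity I am \emph{not} tuning is genuinely left fixed. The first reduces to the observation that choosing $\log\abs{S}=\Theta(\min\{f,g\})$ forces the boosted target into the admissible window $[\Omega(\log\abs{S}),O(n)]$, while $\min\{f,g\}=\omega(\log n)$ guarantees $\log\abs{S}=\omega(\log n)$ as required. The second rests on the two exact invariances noted above: magic is a Clifford invariant (used in the entanglement-boost case) and entanglement across any cut is exactly preserved by a product of single-qubit unitaries (used in the magic-boost case). Because each boost moves exactly one resource while the preserved resource is invariant under the corresponding operation, the two quantities decouple and can be fixed independently, which is precisely the content of the theorem.
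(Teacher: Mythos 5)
Your proposal is correct and follows essentially the same route as the paper: the same three-way case split on the relative growth of $f$ and $g$, the same baseline ensemble with $\log\abs{S}=\Theta(\min\{f,g\})$, and the same invocation of \cref{thm:tune-entanglement} (Clifford boost, magic invariant) or \cref{thm:tune-magic} (local-unitary boost, entanglement invariant) to raise the larger resource. Your explicit bookkeeping of why the untuned quantity is exactly preserved is slightly more detailed than the paper's one-line proof, but the argument is the same.
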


\begin{theoremS}\label{th:newtun}
Given any function $f(n)\in[\omega(\log n),O(n)]$ there exists a pseudomagic pair with maximal gap $\Theta(\poly\log n)$ vs. $\Theta(n)$ and fixed entanglement $\Theta(f(n))$.
\begin{proof}
Let us first construct the high magic ensemble  first, that we define $\mathcal{E}_1$. Let us start from the pseudorandom subset phase state ensemble $\mathcal{E}_{f,S_{1}}$ with $|S_1|=\Theta(\exp f(n))$. We know, see Ref.~\cite{aaronson_quantum_2023}, that $S(\psi_A)=\Theta(f(n))$ with high probability over $\ket{\psi}\sim\mathcal{E}_{f,S_1}$  over superpolynomially many cuts. Then, let us use \cref{thm:tune-magic} that shows the existence of a product unitary $U_1$ such that the ensemble $\mathcal{E}_{1}\coloneqq \{U_1\ket{\psi_{f,S_1}}\,|\, \ket{\psi_{f,S_1}}\in\mathcal{E}_{f,S_1}\}$ have $\mathcal{M}(\ket{\psi})=\Theta(n)$ with high probability over $\ket{\psi}\sim \mathcal{E}_1$. Therefore the ensemble $\mathcal{E}_1$ is characterized by states having $\Theta(f(n))$ entanglement and $\Theta(n)$ magic. Let us now construct the low magic ensemble with entanglement $\Theta(f(n))$. Let us again start from the pseudorandom ensemble of subset phase states $\mathcal{E}_{f,S_2}$ with $S_{2}=\omega(\log n)$. Invoking \cref{thm:tune-entanglement}, applying a specific Clifford unitary $U_2$, we can construct an ensemble $\mathcal{E}_2\coloneqq \{U_2\ket{\psi_{f,S_2}}\,|\, \ket{\psi_{f,S_2}}\in\mathcal{E}_{f,S_1}\}$ featuring magic $\mathcal{M}(\ket{\psi})=\Theta(\log |S_2|)$ and entanglement $S(\psi_A)=\Theta(f(n))$ over superpolynomially many cuts and with high probability over $\ket{\psi}\sim\mathcal{E}_2$. Therefore $\mathcal{E}_2$ has low magic $\Theta(\poly\log n)$ and fixed entanglement $\Theta(f(n))$. As such, the pair $\mathcal{E}_1,\mathcal{E}_2$ is a pseudomagic pair with maximal gap $\Omega(\poly\log n)$ vs. $\Theta(n)$ and fixed entanglement $\Theta(f(n))$.
\end{proof}
\end{theoremS}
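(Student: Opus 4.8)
The plan is to assemble the pair from two separate pseudorandom subset-phase ensembles, each acted on by an efficient unitary, so that pseudorandomness (and hence mutual computational indistinguishability) is inherited for free from \cref{lem:efficientu} and the transitivity statement \cref{lem:transitivty}, while the magic and entanglement are dialed independently to their target values using the two tuning lemmas already established.

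First I would construct the high-magic member $\mathcal{E}_1$. I start from subset-phase states $\mathcal{E}_{f,S_1}$ with $|S_1|=2^{\Theta(f(n))}$; because $f(n)=\omega(\log n)$ this subset is superpolynomially large, so by Theorem 2.1 of Ref.~\cite{aaronson_quantum_2023} these states are pseudorandom and carry entanglement $S(\psi_A)=\Theta(f(n))$ across superpolynomially many cuts. I then apply the local-unitary construction of \cref{thm:tune-magic} to boost the magic up to $\Theta(n)$. The crucial observation is that the tuning unitary is a \emph{product of single-qubit gates}, so across any bipartition it factors as $U_A\otimes U_B$ and leaves every Schmidt spectrum—hence the entanglement—invariant. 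Thus $\mathcal{E}_1$ has magic $\Theta(n)$ and entanglement $\Theta(f(n))$ simultaneously.

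Next I would construct the low-magic member $\mathcal{E}_2$. I start from $\mathcal{E}_{f,S_2}$ with $|S_2|=2^{\omega(\log n)}$ chosen so that $\log|S_2|=\Theta(\poly\log n)$; by the tight bounds of \cref{thm:lowerboundm2,thm:robustness} every measure $\mathcal{M}$ with $\Omega(M_2)\le\mathcal{M}\le O(\mathcal{R})$ equals $\Theta(\poly\log n)$ here, while the entanglement is also only $\Theta(\poly\log n)$. I then invoke \cref{thm:tune-entanglement} to apply a Clifford unitary $U_2$ that raises the entanglement to $\Theta(f(n))$ over superpolynomially many cuts. Since Clifford unitaries leave $M_\alpha$—and hence any $\mathcal{M}$ sandwiched between $M_2$ and $\mathcal{R}$—invariant, the magic of $\mathcal{E}_2$ stays $\Theta(\poly\log n)$.

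Finally, both $U_1$ and $U_2$ are efficiently implementable (the Clifford exactly, the local unitaries via Solovay--Kitaev as in the proof of \cref{thm:tune-magic}), so \cref{lem:efficientu} gives $\mathcal{E}_1\cong\haar$ and $\mathcal{E}_2\cong\haar$; transitivity via \cref{lem:transitivty} then yields $\mathcal{E}_1\cong\mathcal{E}_2$, making $(\mathcal{E}_1,\mathcal{E}_2)$ a pseudomagic pair with gap $\Theta(\poly\log n)$ vs.~$\Theta(n)$ at fixed entanglement $\Theta(f(n))$. The genuinely hard part is not this assembly step but rather the two tuning lemmas it relies on: verifying that single-qubit unitaries can drive the magic all the way to $\Theta(n)$ without disturbing the already-large entanglement of $\mathcal{E}_1$, and symmetrically that a Clifford can inject $\Theta(f(n))$ entanglement without altering the low magic of $\mathcal{E}_2$. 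Once those are in hand, the pairing is immediate.
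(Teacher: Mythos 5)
Your proposal is correct and follows essentially the same route as the paper's own proof: build $\mathcal{E}_1$ from subset phase states with $|S_1|=2^{\Theta(f(n))}$ and boost magic via the product unitary of \cref{thm:tune-magic}, build $\mathcal{E}_2$ from subset phase states with $\log|S_2|=\Theta(\poly\log n)$ and boost entanglement via the Clifford of \cref{thm:tune-entanglement}, then invoke pseudorandomness and transitivity. Your added remarks on Schmidt-spectrum invariance under product unitaries and Clifford invariance of $M_\alpha$ only make explicit what the paper leaves implicit.
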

\begin{theoremS}\label{th:newtun2}
Given any function $g(n)\in[\omega(\log n),O(n)]$ there exists a pseudoentangled pair with maximal gap $\Theta(\poly\log n)$ vs. $\Theta(n)$ and fixed magic $\mathcal{M}=\Theta(g(n))$, for any measure $\Omega(M_2)\le \mathcal{M}\le O(\mathcal{R})$.
\begin{proof}
The proof is identical to the one of \cref{th:newtun}.
\end{proof}
\end{theoremS}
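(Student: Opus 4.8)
The plan is to mirror the construction of \cref{th:newtun} while exchanging the roles played by magic and entanglement. In \cref{th:newtun} the entanglement was pinned at $\Theta(f(n))$ and the gap was opened in magic, using the local unitaries of \cref{thm:tune-magic} to boost magic and the Clifford unitaries of \cref{thm:tune-entanglement} to reach the fixed entanglement. Here I instead want to pin the magic at $\Theta(g(n))$ and open the gap in entanglement, so I will use the Clifford of \cref{thm:tune-entanglement} (which leaves every magic measure invariant) to boost entanglement, and the local unitaries of \cref{thm:tune-magic} (which leave entanglement essentially unchanged) to boost magic when needed.

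First I would construct the high-entanglement member $\mathcal{E}_1$. Start from the pseudorandom subset-phase-state ensemble $\mathcal{E}_{f,S_1}$ with $\log|S_1|=\Theta(g(n))$, so that $\mathcal{M}(\psi)=\Theta(g(n))$ and $S(\psi_A)=\Theta(g(n))$. Since $g(n)\le O(n)$, the target entanglement $\Theta(n)$ lies in the admissible window $[\Omega(\log|S_1|),O(n)]$, so \cref{thm:tune-entanglement} supplies a polynomial-depth Clifford $U_1$ for which $\mathcal{E}_1=\{U_1\ket{\psi_{f,S_1}}\}$ has entanglement $\Theta(n)$ across superpolynomially many cuts with high probability. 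Because $U_1$ is Clifford, any $\mathcal{M}$ with $\Omega(M_2)\le\mathcal{M}\le O(\mathcal{R})$ is preserved, so $\mathcal{M}(\psi)=\Theta(g(n))$ still holds; thus $\mathcal{E}_1$ has entanglement $\Theta(n)$, magic $\Theta(g(n))$, and is pseudorandom by \cref{lem:efficientu}.

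Next I would construct the low-entanglement member $\mathcal{E}_2$. Start instead from $\mathcal{E}_{f,S_2}$ with $\log|S_2|=\Theta(\poly\log n)$ and $f$ drawn from the $8$-wise independent pseudorandom family, so that both entanglement and magic are $\Theta(\poly\log n)$. Applying \cref{thm:tune-magic} with target magic $\Theta(g(n))$ yields a product of efficient local unitaries $\tilde U$ for which $\mathcal{E}_2=\{\tilde U\ket{\psi_{f,S_2}}\}$ has $\mathcal{M}(\psi)=\Theta(g(n))$, while the local action keeps $S(\psi_A)=\Theta(\log|S_2|)=\Theta(\poly\log n)$; again $\mathcal{E}_2$ is pseudorandom. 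Finally both $\mathcal{E}_1$ and $\mathcal{E}_2$ are computationally indistinguishable from Haar, hence from each other by transitivity (\cref{lem:transitivty}), while their entanglement is separated as $\Theta(\poly\log n)$ vs.\ $\Theta(n)$ and their magic is pinned at $\Theta(g(n))$ — exactly a pseudoentangled pair with maximal gap and fixed magic.

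The step that requires the most care is the low-entanglement ensemble $\mathcal{E}_2$: I must choose $\log|S_2|$ to be simultaneously $\omega(\log n)$ (so the ensemble is pseudorandom and statistically close to Haar), $\Theta(\poly\log n)$ (to realize the low end of the gap), and at most $g(n)$ (so that raising magic up to $\Theta(g(n))$ via \cref{thm:tune-magic} is admissible, since that lemma only increases magic and requires $g(n)=\Omega(\log|S_2|)$). When $g(n)=\omega(\poly\log n)$ this is immediate, but for $g(n)$ near the lower edge $\omega(\log n)$ one must pick the low entanglement scale just below $g(n)$, which is where the flexibility in the $\Theta(\poly\log n)$ specification is used. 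Everything else — Clifford-invariance of magic, local-unitary near-invariance of entanglement, and preservation of pseudorandomness under efficient circuits — follows directly from the cited lemmas.
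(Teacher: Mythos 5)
Your proposal is correct and is exactly the argument the paper intends: the paper's proof of this theorem is literally ``identical to \cref{th:newtun}'' with the roles of magic and entanglement exchanged, i.e., start the high-entanglement ensemble from $\log|S_1|=\Theta(g(n))$ and boost entanglement via the magic-preserving Clifford of \cref{thm:tune-entanglement}, and start the low-entanglement ensemble from $\log|S_2|=\Theta(\poly\log n)$ and boost magic via the entanglement-preserving local unitaries of \cref{thm:tune-magic}. Your extra observation about choosing $\log|S_2|$ below $g(n)$ so that \cref{thm:tune-magic}'s hypothesis $g(n)=\Omega(\log|S_2|)$ is satisfied near the lower edge is a valid refinement of a point the paper leaves implicit.
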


\begin{remark}
We stress that \cref{th:newtun}, as well as \cref{th:newtun2}, can be generalized. In particular, we can show the existence of a pseudomagic pair with arbitrary gap (not maximal) and fixed entanglement. Conversely, the existence of pseudoentangled pair with arbitrary gap and fixed magic. The proof easily follows from the one of \cref{th:newtun}.
\end{remark}

In the following two theorems, as informally anticipated in the main text, we enhance the distillation theorem bounds by leveraging the independence of pseudomagic and pseudoentanglement

\begin{theoremS}[Prior knowledge of magic doesn't help entanglement distillation]
\label{entg_distillation}
Consider an entanglement distillation protocol that distills EPR pairs from states drawn from an ensemble $\{\psi_k\}$. Even if we are guaranteed that the states $\psi_k$ have magic $\Theta(g(n))$ for $\Omega(\poly\log n)\le g(n)\le O(n)$ with overwhelming probability, the protocol can distill at most $O(\log^{1+c} S(\rho))$ Bell pairs with high probability, where $S(\rho)$ is the von Neumann entanglement entropy across a bipartition of the system that is linear in $n$.
\end{theoremS}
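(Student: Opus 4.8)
The plan is to mirror the contradiction argument used in the proof of \cref{MSD}, but with the roles of magic and entanglement interchanged, invoking the pseudoentanglement-with-fixed-magic construction of \cref{th:newtun2} in place of the pseudomagic construction. First I would suppose, towards a contradiction, that some efficient entanglement distillation protocol --- promised only that its input has magic $\Theta(g(n))$ --- distills $\omega(\log^{1+c} S(\rho))$ Bell pairs with high probability. The proof then proceeds by feeding this single protocol two computationally indistinguishable input ensembles that both meet the magic promise but differ sharply in entanglement across the designated linear cut.

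Concretely, I would apply \cref{th:newtun2} (and the remark permitting arbitrary, non-maximal gaps) to obtain a pseudoentangled pair with fixed magic $\Theta(g(n))$: a high-entanglement ensemble $\mathcal{E}_A$ with $S(\psi_A)=\Theta(n)$ across a linear cut, and a low-entanglement ensemble $\mathcal{E}_B$ with $S(\psi_A)=\Theta(\log^{1+c/2} n)$ (realized by taking subset size $\abs{S}=2^{\log^{1+c/2} n}$, since subset phase states have entanglement $\Theta(\log\abs{S})$). Both ensembles carry $\mathcal{M}=\Theta(g(n))$ and both are pseudorandom, so each is computationally indistinguishable from Haar; by \cref{lem:transitivty} this makes $\mathcal{E}_A$ and $\mathcal{E}_B$ computationally indistinguishable from each other. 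On $\mathcal{E}_A$ the contradiction hypothesis forces the protocol to output $\omega(\log^{1+c} n)$ Bell pairs, since $\log^{1+c} S(\psi_A)=\Theta(\log^{1+c} n)$. On $\mathcal{E}_B$, the standard resource-theoretic ceiling --- distillable entanglement never exceeds the entanglement entropy, which is monotone under LOCC --- caps the yield at $O(S(\psi_B))=O(\log^{1+c/2} n)$ regardless of the protocol.

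These two counts are asymptotically separated, since $\log^{1+c} n / \log^{1+c/2} n = \log^{c/2} n \to \infty$. I would then build an efficient distinguisher that runs the distillation protocol on the unknown input, counts the number of distilled Bell pairs by SWAP-testing the output against fresh reference EPR states (exactly as copies of $\ket{B}$ are counted in \cref{MSD}), and thresholds the count at, say, $\log^{1+3c/4} n$. This distinguisher separates $\mathcal{E}_A$ from $\mathcal{E}_B$ with non-negligible advantage, contradicting their computational indistinguishability and completing the argument.

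The main obstacle will lie not in the counting step but in certifying that a single pseudoentangled pair simultaneously (i) pins the magic at $\Theta(g(n))$ on both ensembles, (ii) realizes the precise gap $\Theta(n)$ versus $\Theta(\log^{1+c/2} n)$ in entanglement over the fixed linear cut, so that the resource-theoretic ceiling on $\mathcal{E}_B$ actually bites, and (iii) preserves pseudorandomness so that transitivity applies. This is supplied by \cref{thm:tune-magic,th:newtun2} and their generalization to non-maximal gaps, but care is needed to confirm that the high-probability guarantees on magic and entanglement survive the simultaneous tuning, and that the SWAP-test count remains reliable when distillation succeeds only approximately and probabilistically.
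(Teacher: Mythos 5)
Your proposal is correct and takes essentially the same route as the paper: the paper's own proof is a one-line reduction that invokes \cref{th:newtun2} (a pseudoentangled pair with maximal gap but fixed magic $\Theta(g(n))$) and defers the distinguisher argument to the entanglement-distillation lower bound of Ref.~\cite{aaronson_quantum_2023}, which is exactly the contradiction you spell out. Your write-up (matched-magic indistinguishable ensembles, the entropic ceiling on the low-entanglement side, SWAP-test counting with a threshold between the two yields) is a faithful expansion of that same argument.
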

\begin{proof}
The proof is identical to the one presented in Ref.~\cite{aaronson_quantum_2023}, and directly follows from \cref{th:newtun2}, i.e. the existence of a pseudoentangled pair, with maximal gap $\Theta(\poly\log n)$ vs. $\Theta(n)$, yet fixed magic $\mathcal{M}=\Theta(g(n))$.
\end{proof}

An identical strengthening of the black-box magic state distillation protocol in \cref{MSD} is also possible, using a similar proof technique.
\begin{theoremS}[Prior knowledge of entanglement doesn't help magic distillation]\label{thm:high-ent-mag-distill}
Consider a magic distillation protocol that distills copies of a target state $\ket{B}$ from states drawn from an ensemble $\{\psi_k\}$. Even if we are guaranteed that, with overwhelming probability, the states $\psi_k$ have entanglement $\Theta(n)$ across superpolynomially many cuts $(A_i,B_i)$, where the size of both cuts have size $\Omega(f(n))$ for any function $f(n)\in[\omega(\log n),O(n)]$, the number of required copies of the input state is still lower bounded by the constraints given in \cref{eq:mag-distill}.
\begin{proof}
The proof is identical to the on in \cref{MSD}, and ultimaltely is implied by the existence of a pseudomagic pair with fixed entanglement $\Theta(f(n))$, see \cref{th:newtun}.
\end{proof}
\end{theoremS}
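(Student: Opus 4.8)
The plan is to reduce this statement to the black-box magic state distillation lower bound already proven in \cref{MSD}, replacing the generic pair of input ensembles used there with the refined pseudomagic pair of \cref{th:newtun}, whose two members share an \emph{identical, fixed} entanglement profile $\Theta(f(n))$ while differing maximally in magic. The entire argument is a proof by contradiction resting on computational indistinguishability: if a magic distillation protocol could exploit a promise of high entanglement to beat the bound in \cref{eq:mag-distill}, it would yield an efficient distinguisher between two computationally indistinguishable ensembles.

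Concretely, I would first invoke \cref{th:newtun} to obtain a pseudomagic pair $(\mathcal{E}_1,\mathcal{E}_2)$ with maximal magic gap $\Theta(\poly\log n)$ versus $\Theta(n)$ but with entanglement pinned at $S(\psi_A)=\Theta(f(n))$ across superpolynomially many cuts, with high probability over both ensembles. Since both $\mathcal{E}_1$ and $\mathcal{E}_2$ are pseudorandom, each is computationally indistinguishable from Haar-random states, and hence (by transitivity, \cref{lem:transitivty}) from each other. Crucially, because the entanglement is the same for both ensembles, any protocol promised input entanglement $\Theta(f(n))$ accepts states from either ensemble as valid input. Next, I would assume for contradiction that such a protocol distills, per copy of the input, strictly more than the $O(\mathcal{M}(\rho)/\mathcal{M}(\ketbra{B}{B}))$ copies of $\ket{B}$ permitted by the resource-theoretic monotonicity bound used in \cref{MSD}. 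Feeding the protocol states from $\mathcal{E}_1$ (magic $\Theta(n)$) versus $\mathcal{E}_2$ (magic $\Theta(\poly\log n)$) then produces a detectably different yield of $\ket{B}$ copies, which one reads off via a SWAP test against a stored reference copy of $\ket{B}$.

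This SWAP-test count is an efficient observable that separates $\mathcal{E}_1$ from $\mathcal{E}_2$ with non-negligible bias, contradicting their computational indistinguishability and thereby establishing the claimed lower bound. The only genuinely new ingredient relative to \cref{MSD} is the requirement that the two input ensembles be matched in entanglement yet maximally separated in magic; this is exactly what \cref{th:newtun} supplies, so the remainder of the argument transcribes verbatim. The main obstacle is therefore not in the distillation step itself but in confirming that \cref{thm:tune-magic} and \cref{thm:tune-entanglement} indeed deliver a pair in which the fixed entanglement $\Theta(f(n))$ holds simultaneously across all the relevant cuts for \emph{both} the high- and low-magic ensembles while preserving pseudorandomness; once that is granted, the contradiction closes immediately.
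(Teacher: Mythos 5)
Your proposal is correct and follows essentially the same route as the paper: the paper's proof is literally a one-line reduction stating that the argument of \cref{MSD} goes through verbatim once the generic input pair is replaced by the fixed-entanglement pseudomagic pair of \cref{th:newtun}, which is exactly the substitution you make. Your only loose spot is phrasing the contradiction hypothesis as beating the resource-theoretic bound rather than the bound in \cref{eq:mag-distill} itself (the separation in yields comes from comparing the assumed rate $\Omega(\log^{1+c}\mathcal{M}(\rho))$ on the high-magic ensemble against the monotonicity cap $O(\mathcal{M}(\rho))$ on the low-magic one), but this does not change the argument.
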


To conclude this section, let us elucidate why the results presented herein not only contribute to advancing the findings of Ref. \cite{aaronson_quantum_2023}, specifically in \cref{entg_distillation}, but are also remarkable. The bounds derived in \cref{entg_distillation} and \cref{thm:high-ent-mag-distill} are doubly exponentially tighter than those presented in Ref. \cite{aaronson_quantum_2023} and \cref{MSD}. In particular, \cref{entg_distillation} establishes that there is no agnostic protocol capable of distilling an optimal number of Bell pairs, i.e., $\Theta(S)$, but this quantity is drastically reduced to $\Theta(\poly\log S)$ in our results. Compared to the outcome of Ref.~\cite{aaronson_quantum_2023}, our bound narrows down to a very negligible fraction (slices) of states in the Hilbert space, making it significantly stronger. Employing similar calculations throughout the paper, we can compute the fraction of states with bounded magic $g(n)$, showing that is doubly-exponentially small.
\begin{lemmaS}
For any magic measure $\mathcal{M}$ obeying $\mathcal{M}\ge \Omega(M_2)$, the fraction of pure states $\ket{\psi}$ featuring magic $\mathcal{M}(\ket{\psi})=\Theta(g(n))$ is doubly exponentially small. In formuale
\begin{equation}
 \Pr\left[\mathcal{M}(\psi) = \Theta(g(n))\right] \leq \exp(-\Omega(2^{n/3}))
\end{equation}
for any $g(n)\le n/3$. 
\begin{proof}
Using the result in \cref{lem:levy}, we know that 
\begin{equation}
    \Pr_{\ket{\psi}\sim \operatorname{Haar}}(M_{2}(\ket{\psi})\le k)\le e^{-C 2^n k^{-2}}
\end{equation}
Therefore choosing $k=g(n)$, and using the hypothesis that $g(n)\le n/3$, we find that the fraction of states featuring $M_{2}=O(g(n))$ is doubly exponentially small. As such, also the class of states featuring $M_{2}=\Theta(g(n))$ is doubly exponentially small. Moreover, this holds for every $\mathcal{M}\ge \Omega(M_2)$. In formulae
\begin{equation}
    \Pr\left[\mathcal{M}(\psi) = \Theta(g(n))\right] \leq \exp(-\Omega(2^{n/3}))
\end{equation}
\end{proof}
\end{lemmaS}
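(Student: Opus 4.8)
The plan is to reduce the claim to the one-sided concentration bound for the R\'enyi-$2$ stabilizer entropy $M_2$ that is already implicit in \cref{lem:levy}, and then to lift it to an arbitrary measure $\mathcal{M}$ satisfying $\mathcal{M}\geq \Omega(M_2)$ by a simple containment-of-events argument. The conceptual point is that Haar-random states concentrate their magic at the maximal value $\Theta(n)$ so sharply (with doubly exponential tails) that being pinned at the much smaller value $\Theta(g(n))$ is an extraordinarily atypical event.

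First I would extract from \cref{lem:levy} a clean tail bound on $M_2$. Recalling that $\tr(\Pi^{(4)}\psi^{\otimes 4})=2^{-M_2(\psi)}$ and that this quantity concentrates around its Haar mean $d^{-1}=2^{-n}$, the event $\{M_2(\psi)\leq k\}$ is identical to the deviation event $\{\tr(\Pi^{(4)}\psi^{\otimes 4})\geq 2^{-k}\}$. For $k\leq n/3<n$ the corresponding deviation $\epsilon\approx 2^{-k}-2^{-n}$ above the mean is dominated by $2^{-k}$, so substituting $\epsilon\approx 2^{-k}$ and $d=2^n$ into the Levy estimate used in the proof of \cref{lem:levy} (with $\alpha=2$) gives
\begin{equation}
\Pr_{\ket{\psi}\sim\haar}\big[M_2(\psi)\leq k\big]\leq \exp\!\big(-\Omega(2^{\,n-2k})\big).
\end{equation}
Equivalently, setting $\beta=k/n$ in \cref{lem:levy} produces the exponent $2^{\,n(1-2\beta)}=2^{\,n-2k}$.

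Next I would pass from $M_2$ to a general $\mathcal{M}$. Because $\{\mathcal{M}(\psi)=\Theta(g(n))\}\subseteq\{\mathcal{M}(\psi)=O(g(n))\}$ and the hypothesis $\mathcal{M}\geq\Omega(M_2)$ means $\mathcal{M}(\psi)\geq s\,M_2(\psi)$ for some constant $s>0$, this event is contained in $\{M_2(\psi)\leq k\}$ for some $k=\Theta(g(n))$. Applying the tail bound above with $k=\Theta(g(n))\leq n/3$ yields $n-2k\geq n/3$, hence $2^{\,n-2k}\geq 2^{\,n/3}$ and
\begin{equation}
\Pr\big[\mathcal{M}(\psi)=\Theta(g(n))\big]\leq \exp\!\big(-\Omega(2^{\,n/3})\big),
\end{equation}
which is the claim.

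The only real work is constant bookkeeping, and this is where I expect the sole subtlety to lie. The implicit constant in $\Theta(g(n))$ and the prefactor $s$ relating $\mathcal{M}$ to $M_2$ together rescale the threshold $k$, and one must check that $k$ stays bounded away from the critical value $n/2$ at which the exponent $2^{\,n-2k}$ stops being doubly exponentially large. For the bare $M_2$ case one may take $k=g(n)$ directly, and the hypothesis $g(n)\leq n/3$ is exactly what guarantees $n-2k\geq n/3$; for a general $\mathcal{M}$ the same hypothesis leaves ample slack to absorb the $O(1)$ factors while keeping $k\leq(1/2-\Omega(1))n$, so the doubly exponential suppression is preserved. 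No new ingredient beyond \cref{lem:levy} is required.
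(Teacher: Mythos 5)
Your proposal is correct and follows essentially the same route as the paper: a Levy/concentration tail bound on $M_2$ for Haar-random states (extracted from \cref{lem:levy}), followed by the containment $\{\mathcal{M}(\psi)=\Theta(g(n))\}\subseteq\{M_2(\psi)\le k\}$ for $k=\Theta(g(n))$ coming from $\mathcal{M}\ge\Omega(M_2)$. If anything, your intermediate estimate $\Pr[M_2(\psi)\le k]\le\exp(-\Omega(2^{\,n-2k}))$, obtained by applying Levy with deviation $\epsilon\approx 2^{-k}$, is the form that actually follows from \cref{lem:levy} and that explains both the hypothesis $g(n)\le n/3$ and the exponent $2^{n/3}$ in the conclusion, whereas the bound $e^{-C2^nk^{-2}}$ quoted in the paper's proof does not follow from \cref{lem:levy} as stated and renders the $n/3$ condition inert; both arguments share the same unaddressed bookkeeping issue that the implicit constants in $\Theta(g(n))$ and in $\mathcal{M}\ge\Omega(M_2)$ must be small enough to keep the effective threshold $k$ strictly below $n/2$, which you at least flag explicitly.
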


Similar findings hold for entanglement. However, as the proof is analogous to the one presented above, we omit both the statement and its proof.

\end{document}